\documentclass[sigplan,nonacm,10pt]{acmart}

\usepackage{tikz}
\usetikzlibrary{cd}
\tikzset{
  lang/.style={draw, rounded corners=2pt, inner sep=2.5pt,
               font=\scriptsize\ttfamily},
  hub/.style={lang, double, fill=black!8},
  tool/.style={draw, inner sep=2.5pt, font=\scriptsize\itshape},
  tcb/.style={tool, fill=black!8},
  art/.style={font=\scriptsize},
  chk/.style={->, >=stealth},
  pred/.style={->, >=stealth, dashed},
  repr/.style={->, >=stealth, dotted, thick},
  elbl/.style={font=\tiny\ttfamily, inner sep=1.5pt},
}
\usepackage{array}
\usepackage{booktabs}
\usepackage{xspace}
\usepackage{listings}

\newif\ifarxiv
\arxivfalse
\newcommand{\hg}{\textsc{hurdy-gurdy}\xspace}
\newcommand{\Prog}[1]{\mathsf{Prog}_{#1}}
\newcommand{\Beh}[1]{\mathsf{Beh}_{#1}}
\newcommand{\Obs}[1]{\mathsf{Obs}_{#1}}
\newcommand{\sem}[2]{\llbracket #1 \rrbracket_{#2}}

\newcommand{\carry}{\Lambda}          
\newcommand{\proj}{\pi}
\newcommand{\eqproj}[1]{\equiv_{#1}}
\newcommand{\dom}{\mathsf{dom}}
\newcommand{\cov}{\mathsf{cov}}
\newcommand{\keep}{\mathsf{keep}}
\newcommand{\lost}{\mathsf{lost}}
\newcommand{\tcb}{\ensuremath{\mathsf{TCB}}\xspace}
\newcommand{\tpred}{\ensuremath{\mathtt{predicted}}\xspace}
\newcommand{\tprov}{\ensuremath{\mathtt{proved}}\xspace}
\newcommand{\tchk}{\ensuremath{\mathtt{checked}}\xspace}
\newcommand{\trepr}{\ensuremath{\mathtt{reproducible}}\xspace}
\newcommand{\ttrust}{\ensuremath{\mathtt{trusted}}\xspace}
\newcommand{\unsupported}{\ensuremath{\mathtt{unsupported}}\xspace}

\arxivtrue
\newcommand{\repofootnote}{\footnote{Code, evidence, and the Lean
  mechanization: \url{https://github.com/cksystemsgroup/hurdy-gurdy}
  --- this version is tag \texttt{arxiv.2}.}}

\theoremstyle{definition}
\newtheorem{assumption}{Assumption}
\newtheorem*{genexample}{Example}
\usepackage[capitalise]{cleveref}
\crefname{assumption}{assumption}{assumptions}
\Crefname{assumption}{Assumption}{Assumptions}

\begin{document}

\title{Untrusted Authors, Trusted Answers}
\subtitle{A Calculus of Fidelity-Graded Translations}

\author{Christoph Kirsch}
\affiliation{%
  \institution{University of Salzburg, Austria}\country{}}
\affiliation{%
  \institution{Czech Technical University, Prague, Czechia}\country{}}
\email{ck@cs.uni-salzburg.at}

\begin{abstract}

To answer a question about a program, move the program to where the
question is decidable. Every such move is a translation, and every
translation is a place to be wrong. We study translation as a
\emph{graph} --- many languages, a few reasoning targets, independently
built routes of honestly different trustworthiness --- and give it a
calculus: pairs of languages close commuting squares that are
\emph{directional} (exactness is the identity-embedding special case of
over-approximation), checkable per program, and composable, a route's
contract being the componentwise meet of its hops' contracts ---
assurance class,
direction, kept observables, measured cost. One asymmetry organizes
trust: witness-carrying answers are self-certifying by replay at the
source; universal answers are where grades, independent branches, and
re-checked certificates earn their cost. The compositional core, lax
telescope included, is mechanized in Lean~4.

hurdy-gurdy implements the calculus as \emph{two planes} meeting in one
registry. The \emph{use} plane reads declarations and produces
evidence-carrying answers; its builders and its intended player are
both LLMs, untrusted by construction. The \emph{evolution} plane grows
the graph: unmet questions are recorded as demand, pairs are
\emph{recommended by evidence} and registered by humans, and a ratchet
keeps every prior verdict standing. \emph{Answers never write; growth
never answers.} Run indefinitely, the loop converges on every reducibly
decidable question, at fidelity that only rises. We measure the July
2026 snapshot --- per-construct conjoined coverage, dual-route branch
agreement for two ISAs, source-level witness replay, certified
unreachability re-validated by a formally verified checker, escape
rates for the gate itself --- and report the defects the architecture
caught in its own authors' work.

\end{abstract}

\maketitle


\section{Introduction}
\label{sec:intro}

To answer a question about a program, move the program to where the
question is decidable. A C program's reachability question becomes a
bit-vector model-checking problem; a Python function's assertion becomes
a linear-arithmetic query; a chemical reaction network's coverability
question becomes an SMT unrolling. Every such move is a translation, and
every translation is a place to be wrong. The classical responses are to
\emph{prove} the translator once and for all
\citep{leroy2009compcert,kumar2014cakeml}, or to \emph{validate} each of
its runs \citep{pnueli1998tv,necula2000tv,lopes2021alive2}. Both are
statements about a single edge.\repofootnote

This paper is about the \emph{graph}: several source languages, a few
solver-facing targets, more than one way to get from here to there ---
and, crucially, edges of \emph{honestly different} trustworthiness. A
rule-for-rule logic translator is foreseeable from its specification; an
optimizing C compiler is only pinned; two translators derived
independently from a prose manual and a formal model are neither proved
--- but their independence is itself a resource. A theory of the graph
must let each edge say what it actually guarantees, compose those
statements honestly, and let weak edges corroborate one another into
strong routes. Its stance is that of distributed systems: a translator
is a component that fails, and failure is \emph{contained} --- by
declared loss, decidable checks, and independent routes --- not
prevented by universal proof.

\begin{figure}
\centering
\begin{tikzpicture}[x=1cm, y=1cm,
  stg/.style={draw, rounded corners=2pt, inner sep=2.6pt, font=\scriptsize},
  flow/.style={->, >=stealth},
  data/.style={->, >=stealth, dashed},
  planelbl/.style={font=\scriptsize\itshape, anchor=west}]
\node[planelbl] at (-0.4, 1.0) {use};
\node[stg] (q)  at (0.6, 0.4)  {ask};
\node[stg] (r)  at (2.2, 0.4)  {route + check};
\node[stg] (d)  at (4.3, 0.4)  {decide};
\node[stg] (a)  at (6.2, 0.4)  {answer + evidence};
\draw[flow] (q) -- (r); \draw[flow] (r) -- (d); \draw[flow] (d) -- (a);
\node[stg, double, fill=black!8] (reg) at (3.3, -0.7) {registry + books};
\draw[data] (reg) -- (r.south) node[pos=0.45, left, font=\tiny] {declarations};
\draw[data] (q.south) .. controls (0.9, -0.55) .. (reg.west)
  node[pos=0.55, below left, font=\tiny] {demand};
\node[planelbl] at (-0.4, -1.1) {evolution};
\node[stg] (rec) at (0.9, -1.8)  {recommend};
\node[stg] (hu)  at (2.9, -1.8)  {register (human)};
\node[stg] (bg)  at (5.1, -1.8)  {build + gate};
\node[stg] (ra)  at (6.8, -1.8)  {ratchet};
\draw[flow] (rec) -- (hu); \draw[flow] (hu) -- (bg); \draw[flow] (bg) -- (ra);
\draw[data] (reg.south west) .. controls (1.6, -1.2) .. (rec.north);
\draw[data] (ra.north) .. controls (6.2, -1.0) .. (reg.east)
  node[pos=0.5, above right, font=\tiny] {declarations};
\end{tikzpicture}
\caption{The two planes, meeting in one data structure. The use plane
reads the registry's declarations and produces evidence-carrying
answers; the evolution plane turns recorded demand into new
declarations, gated and ratcheted, with registration a human act.
\emph{Answers never write; growth never answers.}}
\label{fig:planes}
\end{figure}

\paragraph{Two directions, two planes}
This work is as much an experiment about \emph{who can build and use}
such a system as about the system itself, and its two directions are
the platform's two planes (\Cref{fig:planes}). On the \emph{use} plane,
LLMs are \emph{supported}: the intended player is an LLM that chooses
questions and routes but can take no unchecked step --- every move is
deterministic, graded, cross-checked, and, for witness-carrying
answers, self-certifying at the source (a first controlled experiment:
\S\ref{sec:eval-player}). On the \emph{evolution} plane, LLMs are
\emph{utilized}: every pair was implemented by an independent LLM agent
from a one-page brief, largely unsupervised, with the architecture's
cross-checks as the only semantic gate (\S\ref{sec:bugs} reports what
the gate caught) --- and the plane is instrumented end to end:
unanswerable questions are diagnosed and recorded as demand, pairs are
\emph{recommended by evidence} before a human registers them, and the
ratchet (\Cref{prop:ratchet}) keeps every prior verdict standing
(\S\ref{sec:books}). Run indefinitely, this loop is a semi-algorithm
for answerability: in the limit it reaches every \emph{reducibly
decidable} question, at fidelity that only rises --- and it is not an
oracle: undecidability, the adequacy floor, and the finite supply of
independent semantic anchors stand at every graph size
(\Cref{sec:overview,sec:conclusion}).

\paragraph{Provenance disclaimer}
All of \hg\ --- framework, interpreters, pairs, solver plumbing,
evaluation harness, Lean mechanization --- is LLM-generated code
(Claude Opus and Claude Fable agents), and most of this paper's text is
LLM-generated too; no human reviewed either for semantic correctness.
The human contribution is the architecture: the calculus, the
contracts, and the decisions about what to build and claim. Every
quantitative claim in \Cref{sec:evaluation} regenerates from the
artifact by one script, and the mechanization's axiom audit re-runs at
every build --- the reader need not trust the writers, human or
otherwise, beyond what the thesis itself requires.

\paragraph{This paper}
The unit is the \emph{pair} (\Cref{sec:calculus}): two languages, a
pure translator, shared pure interpreters, and a pure
\emph{target-to-source interpreter} carrying target behaviors back to
source vocabulary. These close a commuting square that is
\emph{directional} --- exactness is the identity-embedding special case
of over-approximation --- and \emph{decidable per program}: run both
sides, compare under the declared projection, and a failure names its
step and observable. Squares paste (under a one-line congruence
condition on what carry-backs may read), and composition is told once:
a pair's \emph{contract} --- assurance class, direction, kept
observables, measured cost --- composes along a route by the
componentwise meet, the weakest hop on every axis at once
(\Cref{sec:fidelity}). Two mechanisms move trust where the meet says it
cannot go --- per-run re-establishment by inline checking, and
corroboration by independently derived branches whose declared
semantic artifacts are disjoint --- and one asymmetry caps the story:
existential answers are self-certifying by replay
(\Cref{thm:existential}), while universal answers are exactly where
grades, branches, and certificates earn their cost
(\Cref{thm:universal}). \Cref{sec:instantiation} describes the
platform --- 13 languages, 15 registered pairs (13 in the evaluated
snapshot), two reasoning hubs --- \Cref{sec:economy} its economy and
books, and \Cref{sec:evaluation} measures both.

\paragraph{Contributions}
\begin{itemize}
\item A calculus of directional, fidelity-graded translation pairs:
  per-program decidable squares with localization, pasting under an
  explicit support condition, exactness as the identity-embedding case
  (\Cref{prop:exactid}), and universal transfer across
  over-approximation (\Cref{prop:lax}).
\item The contract algebra: composition as one componentwise meet
  across assurance, direction, loss, and cost --- its logical
  coordinates (class and direction) mechanized in Lean~4 with the
  rest of the compositional core, lax telescope included
  (\S\ref{sec:mech}); loss composes by declared keep-sets, cost by
  measurement.
\item The existential/universal asymmetry as an architectural
  principle, with its end-to-end theorems (\S\ref{sec:endtoend}).
\item The two-plane architecture and its economy: demand books,
  evidence-backed recommendation, human-valved registration, and the
  ratchet --- the untrusted-generator stance extended from building
  pairs to choosing them (\S\ref{sec:books}).
\item A measured, LLM-built instantiation: conjoined coverage, dual
  ISA branches with disjoint decision stacks, a compliance-derived
  benchmark, witness replay, certified unreachability at up-to-megabyte
  scale under a formally verified checker, gate escape rates, a first
  LLM-player experiment, and the defects caught
  (\Cref{sec:instantiation,sec:evaluation}).
\end{itemize}

This document is a dated capability snapshot (July 2026) of a system
whose coverage ratchets monotonically by construction; every number in
\Cref{sec:evaluation} regenerates from the artifact, and the appendix
--- detailed proofs and the construct-inventory catalog --- follows the
bibliography.


\section{Overview: one question, one growth step}
\label{sec:overview}

Consider a small C program: a loop over an array with an index
computation, and a question --- can \texttt{x == 42} hold at the failure
label? \Cref{fig:run} draws the route this question takes; we narrate
it, then one growth step, and \Cref{sec:calculus} makes the
definitions precise.

\begin{figure}
\centering
\begin{tikzpicture}[x=1cm, y=1cm]
\node[lang] (c)    at (0.0, 0.0) {C};
\node[lang] (rv)   at (1.9, 0.0) {riscv};
\node[lang] (sail) at (3.0, 1.0) {sail};
\node[hub]  (bt)   at (4.1, 0.0) {btor2};
\node[hub]  (smt)  at (6.3, 0.0) {smtlib};
\draw[repr] (c) -- (rv);
\draw[chk] (rv) -- (bt);
\draw[chk] (rv) -- (sail);
\draw[chk] (sail) -- (bt);
\draw[pred] (bt) -- (smt);
\draw[->, >=stealth, dotted, thick]
  (smt.south) .. controls (4.6, -1.1) and (3.2, -1.1) .. (rv.south)
  node[midway, below, font=\scriptsize]
  {\textsc{sat} model $\to$ carry back $\to$ replay};
\end{tikzpicture}
\caption{One run down the spine. The question travels
C$\to$RISC-V$\to$BTOR2$\to$SMT-LIB (RISC-V reaching BTOR2 twice, the
manual-derived and Sail-derived branches whose agreement corroborates
both); a \textsc{sat} model is carried back hop by hop and
\emph{replayed} in the source-side interpreter --- the answer's
evidence, not the solver's say-so (\Cref{thm:existential}).}
\label{fig:run}
\end{figure}

\paragraph{The spine}
The program first moves from C to RISC-V. The translator on this edge is
a \emph{pinned} C compiler: a digest-locked gcc with recorded flags. We
can replay it bit-for-bit, but nobody can foresee or prove its output ---
its honest grade is $\trepr$, determinism without meaning. The RISC-V
program then moves to BTOR2, a bit-vector transition-system logic that
model checkers consume, and from there across a rule-for-rule bridge to
SMT-LIB, where bit-vector solvers decide the reachability query.

Each of these moves is a \emph{pair}: translator, source interpreter,
target interpreter, and a target-to-source interpreter that carries
behaviors backwards. When the RISC-V-to-BTOR2 translator runs, the
platform can immediately check it \emph{on this very program}: interpret
the RISC-V program directly (the left edge of a square), interpret the
BTOR2 translation and carry the resulting trace back to RISC-V
observables (the other three edges), and compare, step by step, on the
declared observables --- program counter, registers, halt flag. That
check passing is what the edge's $\tchk$ grade \emph{means}. Note
carefully what it covers: every hop \emph{from the compiled binary
onward}. The C hop itself has no square --- there is no C interpreter
and no ISA-to-C carry-back --- so the verified portion of the route
begins at the binary; the pinned compiler contributes reproducibility,
re-checked separately by a corpus-level differential against an
independent C verifier (\Cref{sec:instantiation}).

\paragraph{The branch}
RISC-V reaches BTOR2 twice. One translator was written against the prose
ISA specification; a second route factors through Sail, deriving its
semantics from the official RISC-V Sail model. On the segment where they
diverge --- from RISC-V to BTOR2 --- the two routes share no translator
and no carry-back; below the reconvergence they share the hub bridge,
so it is the diverse prefix that agreement corroborates
(\S\ref{sec:branching}). The
platform runs the question down both, and the answers must agree. If they
do, two independently derived encodings of RISC-V semantics corroborate
each other --- evidence strictly stronger than either route's own grade
(\Cref{lem:agree}). If they disagree, at least one route has a defect,
and the per-hop squares localize it: which hop, which step, which
observable (\Cref{lem:disagree,cor:localization}). The same dual-route
structure exists for AArch64 (Arm manual vs.\ Arm Sail model).

\paragraph{The answer comes home}
Suppose the SMT solver answers \textsc{sat}: the label is reachable, with
a model. The model is not the answer the user asked for --- it is a
valuation of SMT variables. The route's carry-backs now run in reverse:
the SMT model becomes a BTOR2 witness, the witness a RISC-V trace, the
trace an initial memory and register state --- concretely, the inputs of
the original C program. The platform then \emph{replays} those inputs in
the route's source-side interpreter --- for this spine, the shared
RISC-V interpreter on the compiled binary, the question's observable
read at the ISA level (\S\ref{sec:composition}) --- and watches the
failure state actually happen.
At that moment the user's trust obligations collapse: even if every
translator, both routes, and the solver were wrong, the replayed run
exhibits the behavior (\Cref{thm:existential}). Witness-carrying answers
are self-certifying; only the interpreter that replays them --- itself
differentially tested against an independently generated ISA simulator ---
remains to be believed.

Had the solver answered \textsc{unsat}, no replay could vouch for it.
That is where the grades, the branch agreement, the multi-engine
corroboration, and the independently checked certificates carry the
weight (\Cref{thm:universal}) --- and where the platform spends its
assurance budget.

\paragraph{Beyond one spine}
Nothing above is specific to C or RISC-V
(\Cref{fig:registry} draws the full registry graph). The same square contract admits
eBPF, Wasm, and EVM programs into the BTOR2 hub; Python functions and
chemical reaction networks reach the SMT-LIB hub directly; a SMILES
molecular graph reaches a molecular-formula target with no solver at all
(a \emph{compile pair} --- the calculus does not care that the languages
are far from computation, only that both have defined meaning). Each
pair declares what it preserves (its projection), what it rejects (typed,
enumerable $\unsupported$), and its grade; the registry composes routes
and reports each route's composed coverage, grade, and loss.

\paragraph{One growth step}
Now ask the platform something it cannot do: is this RISC-V program's
loop \emph{live} --- does it always eventually make progress? The
diagnosis (\texttt{gurdy why-not}) walks its obstacles in order
and stops at the third: routes exist, the observables survive, but no
registered reasoning language expresses a liveness shape. The failure
is recorded in the platform's books --- the question verbatim, the
obstacle, the generation target it names (a liveness-to-safety
reduction on the bit-level hub, or a new temporal hub), an origin tag
--- and the recommendation board aggregates it beside every other
unmet demand. A generator drafts the brief citing those records; a
human reads the evidence and registers, or declines; a builder agent
implements against the contract; the gate admits or rejects; the
ratchet keeps every prior verdict standing; the question is re-asked.
That is one turn of the evolution plane, and no step of it answered a
question or was answered by one --- \emph{answers never write; growth
never answers}. The rest of
the paper makes each of these words precise, then measures the system.

\paragraph{The graph in the limit}
Nothing above says the graph is finished --- the architecture is built
so that it never has to be. When a question is unanswerable, exactly
one of four obstacles fails first, each named by one shipped call
(\texttt{gurdy why-not}): no route reaches a reasoning language; a
route exists but some hop's declared loss discards the observable the
question reads; no registered reasoning language expresses the
question's shape; or the solver exhausts its budget. The diagnosis
names the missing pair; the demand is recorded in the platform's
books and aggregated into recommendations (\S\ref{sec:books}); and
that makes a \emph{loop} well-defined: generate the recommended pair,
gate it, ratchet, re-ask. We leave the closed-loop composition as a
concept --- its stages are shipped, tested calls, and no autonomous
run is claimed or evaluated here --- but its limit
clarifies what the platform is \emph{for}. The answerable set is the
closure of a source language's questions under the registered
reductions; it only grows (\Cref{prop:ratchet}); and any question that
some finite chain of sound, deterministic reductions can bring to a
mechanized decision procedure is eventually answered, with quantified
trust. For a fixed finite-state program --- the compiled binary
above --- every question over kept observables is decidable in
principle, so growth for a \emph{given} program is a matter of
question shapes (new reasoning hubs) and of \emph{cost} (abstraction
pairs, \S\ref{sec:lax}), not connectivity. Trust grows on a separate
axis --- an added route answers no new question; it corroborates
(\S\ref{sec:branching}) --- and saturates at the supply of independent
semantic anchors, which no generator enlarges. Undecidability and
interpreter adequacy (\Cref{asm:adequacy}) stand at every graph size:
the limit is not an oracle but the reflection of everything mechanized
deciding knows how to answer, behind one interface, with every
answer's trust itself measured. The artifact's \texttt{POTENTIAL.md}
develops the limit and its boundaries in full.


\section{The machine: a calculus of pairs and its guarantees}
\label{sec:calculus}

This section is the whole use plane, in the introduction's order:
languages, behaviors, and projections (\S\ref{sec:languages}); the
directional commuting square (\S\ref{sec:pairs}); composition ---
pasting, and the contract meet (\S\ref{sec:pasting},
\S\ref{sec:grades}); the direction axis at work (\S\ref{sec:lax});
trust beyond the meet --- re-establishment, branches, and the
trusted-base ledger (\S\ref{sec:composition},
\S\ref{sec:branching}); the end-to-end asymmetry
(\S\ref{sec:endtoend}); and the mechanization (\S\ref{sec:mech}).
Coverage and the ratchet are evolution-plane machinery and live with
the economy (\S\ref{sec:coverage}).

\subsection{Languages, behaviors, projections}
\label{sec:languages}

\begin{definition}[Language]
\label{def:language}
A \emph{language} $A$ is a triple $(\Prog{A}, F_A, \sem{\cdot}{A})$ where
$\Prog{A}$ is a decidable set of \emph{programs}; $F_A$ is a finite set of
named \emph{observable fields}, each $f \in F_A$ with a value domain
$V_f$; and $\sem{\cdot}{A} : \Prog{A} \rightharpoonup \Beh{A}$ is the
\emph{reference semantics}, a partial function into behaviors.
An \emph{observable state} is a total map
$\sigma \in \Obs{A}$, where $\Obs{A} = \prod_{f \in F_A} V_f$; a
\emph{behavior} is a finite sequence of observable states,
$\Beh{A} = \Obs{A}^{*}$, recorded \emph{post-step}: the $i$-th state is
the observable state after the $i$-th transition.
\end{definition}

The only admission criterion for a language is that $\sem{\cdot}{A}$ is
\emph{definable} --- an ISA manual, a logic's model theory, a reaction-network
semantics, and a molecular-graph notation all qualify. Nothing in this
section assumes languages are executable or even computational.

The scope, stated plainly: behaviors are \emph{finite} and the
reference semantics is a \emph{function} --- nondeterminism,
divergence, and reactive I/O are outside. Behavior \emph{inclusion} is
inside in exactly one controlled form: the square is a
\emph{directional} claim (\Cref{def:pair,def:faithful}), checked as a
projected trace equality along a declared witness embedding —
two-sided ($\mathsf{exact}$) or the simulation half
($\mathsf{over}$); nothing else is admitted. This is the right notion
for near-structure-preserving translation and the wrong one for
optimizing compilation, whose home is full simulation over retimed
structure --- accordingly the one optimizing hop in the instantiation
carries no square at all, and C's non-functional official semantics
enters through a determinized fragment, policed differentially
(\Cref{sec:instantiation}).

A program here is a \emph{closed} configuration: free inputs, when present,
are part of $p$ (a program together with an input valuation). Open programs
--- programs quantified over their inputs --- enter the square through
their \emph{closing valuations} (\Cref{def:faithful} quantifies over
them) and take center stage in \S\ref{sec:endtoend}, where a solver
decides over all inputs and a witness closes one.

\begin{definition}[Interpreter]
\label{def:interpreter}
An \emph{interpreter} for $A$ is a computable partial function
$I_A : \Prog{A} \rightharpoonup \Beh{A}$ that is \emph{pure}: identical
input yields byte-identical output, on every host (the contract;
\S\ref{sec:eval-perf} enforces it by twice-and-diff on the evaluation
host). Outside its domain
it fails with a typed $\unsupported$ verdict naming the construct;
$\dom(I_A)$ is $A$'s \emph{covered fragment}.
\end{definition}

\begin{assumption}[Interpreter adequacy]
\label{asm:adequacy}
$I_A(p) = \sem{p}{A}$ for all $p \in \dom(I_A)$.
\end{assumption}

Adequacy is an assumption, not a theorem, and we keep it visible: the
irreducible residue in every trusted-base computation of
\S\ref{sec:tcb}, discharged \emph{empirically} by differential testing
against independently produced oracles (\Cref{sec:instantiation};
Python is the honest exception --- its pinned runtime is both semantics
and interpreter).

\begin{definition}[Projection]
\label{def:projection}
A \emph{projection} over $A$ is a subset $\proj \subseteq F_A$. It lifts
to states by restriction, $\proj(\sigma) = \sigma\!\restriction_{\proj}$,
and to behaviors pointwise,
$\proj(\sigma_1 \cdots \sigma_n) = \proj(\sigma_1) \cdots \proj(\sigma_n)$
(in particular, projected behaviors of different lengths are unequal).
Write $b \eqproj{\proj} b'$ for $\proj(b) = \proj(b')$.
\end{definition}

\subsection{Pairs and the directional square}
\label{sec:pairs}

\begin{definition}[Pair]
\label{def:pair}
A \emph{pair} $P : A \to B$ is a tuple
$(A, B, T, \carry, \proj_P, d_P)$ where
$T : \Prog{A} \rightharpoonup \Prog{B}$ is the \emph{translator};
$\carry : \Beh{B} \rightharpoonup \Beh{A}$ is the
\emph{target-to-source interpreter}, which re-expresses a target behavior
as a source behavior; $\proj_P \subseteq F_A$ is the pair's
\emph{declared projection} --- exactly the source observables it promises
to preserve; and $d_P \in \{\mathsf{exact}, \mathsf{over}\}$ is its
\emph{declared direction}. An $\mathsf{over}$ (over-approximating) pair
additionally ships a pure \emph{witness embedding} $W$, mapping each
closing valuation $x$ of a source program $p$ to a closing valuation
$W(p,x)$ of $T(p)$; for an $\mathsf{exact}$ pair, $W$ is the identity.
$T$, $\carry$, and $W$ are pure (as in \Cref{def:interpreter});
$I_A$ and $I_B$ are owned by the languages and shared by every pair that
touches them. The pair's domain $\dom(P)$ is the set of $p$ with
$p \in \dom(I_A) \cap \dom(T)$, $T(p) \in \dom(I_B)$, and
$I_B(T(p)) \in \dom(\carry)$. Like the projection, the direction is a
declared, \emph{protected} part of the contract.
\end{definition}

\begin{definition}[Faithfulness: the directional square]
\label{def:faithful}
\label{def:lax}
$P$ is \emph{faithful at} $p \in \dom(P)$ when its square commutes at
$p$ \emph{along the witness embedding}: for every closing valuation $x$
of $p$,
\[
\proj_P\bigl(I_A(p(x))\bigr) \;=\;
\proj_P\bigl(\carry(I_B(\,T(p)(W(p,x))\,))\bigr),
\]
where a closed program has the single trivial valuation. For an
$\mathsf{exact}$ pair ($W = \mathrm{id}$) this is the familiar
commuting square $\proj_P(I_A(p)) = \proj_P(\carry(I_B(T(p))))$:
\[
\begin{tikzcd}[column sep=large]
p \arrow[r, "T"] \arrow[d, "I_A"'] & T(p) \arrow[d, "I_B"] \\
I_A(p) & \carry(I_B(T(p))) \arrow[l, "\eqproj{\proj_P}"']
\end{tikzcd}
\]
$P$ is faithful on $C \subseteq \dom(P)$ when it is faithful at every
$p \in C$.
\end{definition}

The direction is the \emph{reading} of that equality. The embedding
selects which target valuations the square checks: exactly those in its
range. When $W$ is the identity --- surjective onto the closing
valuations of $T(p)$ --- every target valuation is checked and the
target has no behaviors beyond the source's kept ones: the square is
two-sided, a bisimulation on the kept observables, and that is all
$d_P = \mathsf{exact}$ asserts. When $W$ is not surjective, the target
valuations outside its range are the pair's \emph{added} behaviors ---
their existence is the point, not a defect --- and the same checked
equality asserts only the simulation half,
$I_A(p) \sqsubseteq_{\proj_P} \carry(I_B(T(p)))$: every source behavior
has a target counterpart on the kept observables, and the target may
have more. Exactness is thus not a separate notion but the
identity-embedding special case of one square (\Cref{prop:exactid},
machine-checked). The check itself is direction-indifferent --- run
both legs, compare under $\proj_P$ --- so everything below applies to
both directions verbatim; only what agreement \emph{means} differs,
which is why direction is protected like the projection: an
$\mathsf{over}$ square passed off as $\mathsf{exact}$ would launder
added behaviors into a meaning-preservation claim.

Three remarks. First, everything in \Cref{def:faithful} is computable:
faithfulness at a given $p$ is a \emph{decidable, executable check} --- run
both sides, compare under $\proj_P$. We call this check the
\emph{square oracle}; in fault-tolerance terms it is an acceptance test
in the sense of recovery blocks \citep{randell1975recovery}, turning a
silently wrong --- in effect Byzantine --- translator into a
\emph{fail-stop} component \citep{schlichting1983failstop}. Second, when the check fails it fails at a
\emph{first} position: a least step index $i$ and a field
$f \in \proj_P$ on which the two sides differ. The oracle reports
$(i, f)$ --- a translator (or interpreter, or carry-back) defect, localized
to a step and a named observable. Third, the projection is part of the
contract, not a tuning knob: $\proj_P$ states what ``preserves meaning''
promises for this pair, and $F_A \setminus \proj_P$ --- the pair's
\emph{loss} --- states what it does not.

The carry-back $\carry$ is what makes a target-side result \emph{mean}
something at the source: a solver counterexample or a target trace is
re-expressed as a source behavior. It is pair-specific because the
correspondence it encodes is pair-specific; it is also why the square's
oracle is as expressive as the translator it checks.

\begin{genexample}[One square, run --- and failing]
The \texttt{JAL} probe of the RV64IMC inventory
(\Cref{def:coverage}) is a two-instruction program whose taken jump
leaves the code image:
\begin{center}
\footnotesize\ttfamily
\begin{tabular}{@{}lll@{}}
0x0: & jal x1, 8 & ; link x1 := 4, jump to 0x8 \\
0x4: & ecall     & ; (jumped over) \\
\end{tabular}
\end{center}
The left leg runs the shared RISC-V interpreter $I_A$; the right leg
translates to BTOR2, interprets, and carries the trace back, compared
under $\proj \supseteq \{\texttt{pc}, \texttt{x1}, \texttt{halted}\}$.
The square is exact ($W = \mathrm{id}$) --- as is every registered
pair's but the abstraction endo-pairs' (\S\ref{sec:lax}).
Post-step states, with the right leg as the
\texttt{riscv-btor2} translator emitted them at version 0.1:
\begin{center}
\footnotesize
\begin{tabular}{@{}clll@{}}
\toprule
step & field & $I_A(p)$ & $\carry(I_B(T(p)))$ \\
\midrule
0 & \texttt{pc}, \texttt{x1}, \texttt{halted} & 8, 4, F & 8, 4, F \\
1 & \texttt{pc}, \texttt{x1} & 8, 4 & 8, 4 \\
  & \texttt{halted} & \textbf{T} & \textbf{F} \\
\bottomrule
\end{tabular}
\end{center}
Address \texttt{0x8} matches no instruction: the reference semantics
halts on the fetch miss; the translator had modeled only the
instructions. The oracle reports $(i, f) = (1, \texttt{halted})$ and,
run hop by hop (\Cref{cor:localization}), indicts the
\texttt{riscv-btor2} hop --- incident I21 (\S\ref{sec:bugs}), caught
by the first conjoined coverage run in three independently written
lowerings the same morning, fixed by a versioned translator bump with
the taken-jump squares as permanent regressions. The all-pass rows of
\Cref{tab:capability} are made of exactly such checks.
\end{genexample}

\subsection{Pasting: when squares compose}
\label{sec:pasting}

Two pairs compose when the middle language is shared:
$P_1 = (A, B, T_1, \carry_1, \proj_1, d_1)$ and
$P_2 = (B, C, T_2, \carry_2, \proj_2, d_2)$ yield the candidate composite
\[
P_2 \circ P_1 \;=\; (A,\, C,\; T_2 \circ T_1,\; \carry_1 \circ \carry_2,\;
\proj,\; d_1 \wedge d_2)
\]
for a projection $\proj \subseteq \proj_1$ to be determined, with
direction the meet on the chain $\mathsf{exact} > \mathsf{over}$ --- a
composite is exact iff both hops are --- witness embedding the
composition $W_2 \circ W_1$ (an exact hop contributing the identity),
and $\dom(P_2 \circ P_1)$ the evident composite domain. Folklore says the
squares paste. They do not paste for free: $P_2$ guarantees agreement of
the two $B$-behaviors only \emph{up to} $\proj_2$, and $\carry_1$ consumes
whole $B$-behaviors. If $\carry_1$ reads a $B$-field that $P_2$ discards,
the outer rectangle can fail while both inner squares hold. The missing
condition is that $\carry_1$ must not look outside what $P_2$ preserves:

\begin{definition}[Support]
\label{def:support}
$\carry_1$ is \emph{$(\proj_2 \Rightarrow \proj)$-supported} when for all
$b, b' \in \dom(\carry_1)$:
$\proj_2(b) = \proj_2(b')$ implies
$\proj(\carry_1(b)) = \proj(\carry_1(b'))$.
Equivalently: $\proj \circ \carry_1$ descends to a well-defined map on
$\proj_2$-equivalence classes --- a congruence condition, matching the
mechanization exactly.
\end{definition}

\begin{theorem}[Pasting]
\label{thm:pasting}
Let $p \in \dom(P_2 \circ P_1)$, let $P_1$ be faithful at $p$, let $P_2$
be faithful at $T_1(p)$, let $\proj \subseteq \proj_1$, and let
$\carry_1$ be $(\proj_2 \Rightarrow \proj)$-supported. Then the
composite $P_2 \circ P_1$ is faithful at $p$ with respect to $\proj$,
along the composed embedding $W_2 \circ W_1$ and with direction
$d_1 \wedge d_2$; for the exact--exact instance, writing
$r = T_2(T_1(p))$,
\[
\proj(I_A(p)) \;=\; \proj\bigl(\carry_1(\carry_2(I_C(r)))\bigr).
\]
\end{theorem}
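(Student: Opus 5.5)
The plan is a two-link equational chain, done first for the exact--exact instance and then lifted to arbitrary closing valuations along $W_2 \circ W_1$. Write $q = T_1(p)$ and $r = T_2(q)$. Membership $p \in \dom(P_2 \circ P_1)$ supplies everything needed to apply the hypotheses: $p \in \dom(P_1)$, $q \in \dom(P_2)$, $I_B(q) \in \dom(\carry_1)$, and $\carry_2(I_C(r)) \in \dom(\carry_1)$. Securing this last membership, by unfolding the ``evident composite domain'', is the first step, because the support condition of \Cref{def:support} quantifies only over $\dom(\carry_1)$.

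\textbf{Step 1 (left square, weakened).} Faithfulness of $P_1$ at $p$ gives $\proj_1(I_A(p)) = \proj_1(\carry_1(I_B(q)))$. Projection is restriction, and $\proj \subseteq \proj_1$, so restricting further yields $\proj(I_A(p)) = \proj(\carry_1(I_B(q)))$.

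\textbf{Step 2 (right square, transported through $\carry_1$).} Faithfulness of $P_2$ at $q$ gives $\proj_2(I_B(q)) = \proj_2(\carry_2(I_C(r)))$. Set $b = I_B(q)$ and $b' = \carry_2(I_C(r))$; both lie in $\dom(\carry_1)$. The support hypothesis then gives $\proj(\carry_1(b)) = \proj(\carry_1(b'))$. Chaining this with Step 1 yields the displayed identity.

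\textbf{Step 3 (general direction and embedding).} Fix a closing valuation $x$ of $p$. Instantiate $P_1$'s square at $x$. Its right leg is $\carry_1(I_B(q(W_1(p,x))))$, with $y = W_1(p,x)$. Next, instantiate $P_2$'s square at the closed instance $q$ with valuation $y$, which produces $W_2(q,y)$. The same two-link chain then gives
\[
\proj(I_A(p(x))) = \proj\bigl(\carry_1(\carry_2(I_C(r((W_2 \circ W_1)(p,x)))))\bigr).
\]
This is exactly faithfulness of the composite along the composed embedding. An exact hop contributes the identity, so the exact--exact case collapses to the display.

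The direction label $d_1 \wedge d_2$ is bookkeeping and needs no proof obligation. The checked equality is the same in both directions, and the composed embedding is surjective whenever both components are, i.e.\ the composite is exact when both hops are exact. I will only remark on this.

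\textbf{Main obstacle.} The delicate step is Step 3. The composed embedding $W_2 \circ W_1$ must be read as $x \mapsto W_2(T_1(p), W_1(p,x))$, and the per-valuation domain conditions must hold for the intermediate closed programs $q(y)$. If the domain definition is stated only for closed programs, I will take it valuation-wise, which matches how \Cref{def:faithful} quantifies. Everything else is rewriting with two hypotheses plus monotonicity of restriction.
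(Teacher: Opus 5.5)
Your proposal is correct and matches the paper's proof in substance. It uses the same chain: weaken $P_1$'s square from $\proj_1$ to $\proj$, then transport $P_2$'s square through $\carry_1$ by the support condition, with both $B$-behaviors in $\dom(\carry_1)$ by $p \in \dom(P_1)$ and the composite domain, and run it per closing valuation with the legs closed at $x$, $W_1(p,x)$, and $W_2(T_1(p), W_1(p,x))$. Treating $d_1 \wedge d_2$ as declared bookkeeping, defined as the meet and not proved, also agrees with the paper.
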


\begin{proof}
Write $q = T_1(p)$. Then $\proj(I_A(p)) = \proj(\carry_1(I_B(q)))$ since
$\proj \subseteq \proj_1$ and $P_1$ is faithful at $p$. Faithfulness of
$P_2$ at $q$ gives
$\proj_2(I_B(q)) = \proj_2(\carry_2(I_C(r)))$, so the
support condition applies to the two $B$-behaviors and yields
$\proj(\carry_1(I_B(q))) = \proj(\carry_1(\carry_2(I_C(r))))$.
Chain the equalities. For the directional form, run the same chain per
closing valuation, the legs closed at $x$, $W_1(p,x)$, and
$W_2(T_1(p), W_1(p,x))$; mechanized as \texttt{lax\_pasting} and,
telescoped over routes with the direction meet,
\texttt{DRoute.lax\_route\_pasting} and
\texttt{DRoute.direction\_exact\_iff} (\S\ref{sec:mech}).
(Details: \Cref{app:proofs}.)
\end{proof}

The support condition is not decorative. Take $F_B = \{g, h\}$ with
$\proj_2 = \{g\}$, and let $\carry_1$ copy field $h$ into a
$\proj_1$-kept source field: $P_2$ can be faithful --- it preserves
$g$ --- while $\carry_2(I_C(r))$ differs from $I_B(q)$ on $h$, which
$\proj_2$ permits. The outer rectangle then fails on the copied field
although both inner squares commute.

\paragraph{Composed keep and loss}
In the implementable special case --- $\carry_1$ \emph{fieldwise} and
step-aligned, each source field $f$ computed from a dependency set
$\mathit{deps}(f) \subseteq F_B$ of target fields --- the support
condition has a syntactic reading, and the largest $\proj$ satisfying
\Cref{thm:pasting} is
\[
\keep(P_2 \circ P_1) \;=\;
  \{\, f \in \proj_1 \mid \mathit{deps}(f) \subseteq \proj_2 \,\},
\]
a source field survives the route iff $P_1$ keeps it and it is computed
only from target fields $P_2$ keeps. Dually
$\lost(P_2 \circ P_1) = F_A \setminus \keep(P_2 \circ P_1)$: route loss is
the union of per-hop losses, each pulled back along the carry-backs. This
is the formal content of the platform rule that a route must \emph{declare
its cumulative loss}: the observables a destination can still speak about
are exactly those no hop discarded. (When $\carry_1$ changes step
granularity --- one source step spanning several target steps, as in
C-to-ISA --- the same statement holds with $\mathit{deps}$ taken over the
spanned window; appendix~A.2.)

\begin{corollary}[Localization]
\label{cor:localization}
Under the support condition, if the composite square fails at $p$ (both
sides defined), then $P_1$'s square fails at $p$ or $P_2$'s square fails
at $T_1(p)$. Inductively, along a route
$R = P_n \circ \cdots \circ P_1$ satisfying the pairwise support
conditions, a failing outer rectangle implies a failing inner square at
some hop $i$ --- and running the square oracle hop by hop finds the
least such $i$, then the least failing step and field. (The check is
direction-indifferent, so this holds along any directional route ---
divergences localize the same way whether the failing square's reading
was $\equiv$ or $\sqsubseteq$.)
\end{corollary}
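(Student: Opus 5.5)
The plan is to prove the two-hop claim by contraposition from \Cref{thm:pasting}, then extend it to routes by induction on the number of hops.

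\textbf{Two hops.} Fix $p \in \dom(P_2 \circ P_1)$. Assume the composite square fails at $p$ with respect to $\proj$: there is a closing valuation $x$ for which the two legs are defined and disagree under $\proj$. Suppose, for contradiction, that $P_1$ is faithful at $p$ and $P_2$ is faithful at $T_1(p)$. We have $\proj \subseteq \proj_1$ by the construction of the composite, and the support condition holds by hypothesis. So \Cref{thm:pasting} makes the composite faithful at $p$, which contradicts the assumed failure.

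One detail needs checking: the valuation-indexed bookkeeping must match. The composite's failure at $x$ is a failure along $W_2 \circ W_1$. Contraposing the per-valuation chain in the proof of \Cref{thm:pasting} gives something slightly sharper. Either $P_1$'s square fails at $p$ for this same $x$, or $P_2$'s square fails at $T_1(p)$ for the valuation $W_1(p,x)$. The reason is that the proof's chain of equalities uses exactly these two instances.

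\textbf{Routes.} For a route $R = P_n \circ \cdots \circ P_1$, induct on $n$ and treat $R$ as $(P_n \circ \cdots \circ P_2) \circ P_1$.

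The main obstacle is setting up this induction correctly. The pairwise support conditions must guarantee the support condition needed at each bracketing. Concretely, take the tail composite $R' = P_n \circ \cdots \circ P_2$ with its composed projection $\proj'$. We need $\carry_1$ to be $(\proj' \Rightarrow \proj)$-supported. I would define ``satisfying the pairwise support conditions'' so that each $\carry_i$ is supported relative to the projection kept by the tail route beginning at hop $i+1$. This is the same telescoping the mechanization uses in \texttt{DRoute.lax\_route\_pasting}, and it is exactly what the keep-set formula $\keep$ computes in the fieldwise case.

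With that definition in place, the inductive step goes as follows. A failing outer rectangle at $p$ gives, by the two-hop case, either a failing square of $P_1$ at $p$, or a failing square of $R'$ at $T_1(p)$. In the second case the induction hypothesis yields a failing hop $i \ge 2$ at the corresponding intermediate program $T_{i-1}\circ\cdots\circ T_1(p)$.

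\textbf{Search.} Each hop's square is a decidable check by the first remark after \Cref{def:faithful}. Running the checks for $i = 1, 2, \dots$ therefore terminates at the least failing hop, and the previous argument shows such a hop exists. Within that hop, the oracle already returns the least step index and a field where the projected behaviors first differ.

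\textbf{Directional routes.} The argument never inspects $d_i$. It uses only the checked equality along the embeddings, so it applies verbatim to directional routes.
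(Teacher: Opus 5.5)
Your proof is correct and follows the paper's own argument: contrapose the pasting theorem for two hops, then induct on the route by bracketing it as (hop~$1$, tail over hops $2,\dots,n$), and let the oracle at the least failing hop supply the least step and field. Two of your refinements are left implicit in the paper's appendix proof: tracking the valuation $W_1(p,x)$ at which the tail fails, and reading ``pairwise support conditions'' as the telescoped requirement that each $\carry_i$ be supported relative to the kept projection of the tail route; both are exactly what the paper's mechanization encodes as its recursive coherence predicate.
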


\Cref{cor:localization} is the debugging story of the whole platform: a
divergence anywhere along a route is never a mystery about the route; it
is an indictment of one hop, one step, one observable. It is also what
disagreement between \emph{routes} will lean on in \S\ref{sec:branching}.

\paragraph{Routes, and the contract algebra}
The registry of languages and pairs forms a directed multigraph; a
\emph{route} $R : A \to Z$ is a path through it, with composed translator
$T_R$, composed carry-back $\carry_R$, composed projection $\keep(R)$,
composed direction and embedding as above, and
faithfulness given by iterating \Cref{thm:pasting}. Composition, told
once: a pair's composable declaration is its \emph{contract} --- the
assurance class of its guarantee (\Cref{def:class}) paired with its
direction, alongside the keep-set and the measured cost --- and a
route's contract is the \emph{componentwise meet}, the weakest hop on
every axis at once, with nothing stronger entailed
(\texttt{Contract.comp\_glb}, machine-checked; the semantic content
of each coordinate is \Cref{thm:pasting} with
\Cref{prop:weakestlink}, and \Cref{prop:lax}). Purity composes too,
and gives caching:

\begin{proposition}[Determinism and caching]
\label{prop:cache}
If every component function of every hop of $R$ is pure, then $T_R$ and
$\carry_R$ are pure, and a content-addressed cache keyed by the input
hash and the component versions returns, on a hit, exactly what
recomputation would return --- across the whole route. If some hop is
impure, re-running the route on the same input and
diffing bytes at every hop (\emph{twice-and-diff}) can detect it and
localize it to the first hop whose bytes differ --- a sound but
incomplete detector: two coinciding runs of an impure hop pass.
\end{proposition}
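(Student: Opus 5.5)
The plan is to treat the three claims of \Cref{prop:cache} separately, arguing throughout at the level of functions on byte strings, with purity read as in \Cref{def:interpreter}: identical input bytes yield identical output bytes.

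\textbf{Purity of $T_R$ and $\carry_R$.} Purity is closed under composition of partial functions. If $f$ and $g$ are pure and $x = x'$ bytewise, then $f(x)$ and $f(x')$ are either both undefined (failing with the same typed $\unsupported$) or byte-identical. Hence $g(f(x)) = g(f(x'))$. Induction on the route length then gives purity of $T_R = T_n \circ \cdots \circ T_1$. The same induction covers $\carry_R = \carry_1 \circ \cdots \circ \carry_n$ and the composed embedding $W_R$. Because the route's square check uses only these composites and the shared interpreters $I_A$ and $I_Z$, the whole square oracle along $R$ is pure as well.

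\textbf{Cache soundness.} A cache entry is written only by an actual run. It is keyed by $(h(\text{input}), \vec v)$, where $\vec v$ lists the version of every component of every hop. Suppose a lookup hits. Then the stored input bytes and component versions agree with the current ones. The one assumption I would make explicit is that a version identifies a function: equal versions denote the same pure function. This is why translator changes are versioned bumps, as in the \texttt{riscv-btor2} incident. I would also treat hash collisions as excluded, taking $h$ to be injective on the inputs that are actually stored. Under these two assumptions, the stored output equals what the same composed pure function returns on the same input. By the first part, that is exactly what recomputation returns. Because the hop composites are themselves pure, the argument applies at every prefix as well, which gives the "across the whole route" claim.

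\textbf{Twice-and-diff.} Run the route twice on the same input and record the intermediate bytes $x_0, x_1, \dots, x_n$ and $x'_0, x'_1, \dots, x'_n$, with $x_0 = x'_0$. Let $i$ be the least index with $x_i \neq x'_i$. Then hop $i$ received identical input $x_{i-1} = x'_{i-1}$ and produced different outputs, which witnesses that hop $i$ is impure. This proves soundness: a reported hop is truly impure. It also proves correct localization, since every earlier hop agreed on both runs. Incompleteness needs only a witness: a hop whose output depends on a nondeterministic bit that happens to coincide in both runs passes the diff. Minimality of $i$ follows from well-ordering.

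The main obstacle is making the caching claim honest rather than circular. The real content lies in the assumptions that a version string determines a function and that hashing is collision-free, so I would state these as hypotheses on the cache key rather than hide them.
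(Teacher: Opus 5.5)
Your argument is correct and is essentially the paper's: the paper justifies the proposition in one line (``purity composes''), and in its Lean model components are functions, so the statement is definitional. Your explicit hypotheses, that a version determines a function and that the cache key is collision-free, are exactly what that modeling choice absorbs, and your least-index argument spells out the twice-and-diff clause, which the paper states without proof.

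One small fix is needed. Injectivity of $h$ on the \emph{stored} inputs alone does not exclude a fresh query whose hash collides with a stored entry, which would return a wrong hit. The hypothesis should therefore be collision-freeness on every input ever presented to the cache, whether stored or queried.
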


\subsection{The direction axis: abstraction as a pair}
\label{sec:lax}

Every square so far shown ran exact, and most registered pairs do. The
$\mathsf{over}$ half of \Cref{def:faithful} earns its place on cost:
an \emph{abstraction} maps a program to a smaller or cheaper model with
\emph{more} behaviors --- cut a state variable's update and let it
range freely, and every source behavior survives while new, spurious
ones appear. Such a translation is exactly what makes an expensive
universal question tractable, and in this calculus it is not a special
mechanism bolted onto a solver run but an ordinary pair whose declared
direction is $\mathsf{over}$: same oracle, same coverage conjunction
(\Cref{def:coverage}), same ratchet (\Cref{prop:ratchet}), same
negative controls, checked along its witness embedding. Two results
pin the axis down: exactness is a special case, and universal verdicts
transfer.

\begin{proposition}[Exactness is the identity embedding]
\label{prop:exactid}
Under the specialization obligation of \Cref{thm:universal}(iv) ---
translating the closed instance agrees with closing the one open
translation --- a pair with $W = \mathrm{id}$ is faithful at an open
$p$ in the sense of \Cref{def:faithful} iff its two-sided square
commutes at every closed instance $p(x)$. The exact calculus of
\S\ref{sec:pairs}--\S\ref{sec:pasting} is thus the
$W = \mathrm{id}$ fragment of the directional one --- literally: the
mechanization derives each side from the other
(\texttt{laxFaithful\_id\_iff\_faithful}, axiom-free;
\S\ref{sec:mech}).
\end{proposition}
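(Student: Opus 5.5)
The plan is to unfold both sides of the biconditional down to \Cref{def:faithful} and show that they differ only in how the closed target program is named. The specialization obligation is what makes the two names interchangeable. Write $p(x)$ for the closed instance of the open program $p$ at a closing valuation $x$. Write $T(p)(x)$ for the open translation closed at $x$; with $W = \mathrm{id}$ the valuation passes through unchanged. The obligation of \Cref{thm:universal}(iv) is the equation
\[
T(p(x)) \;=\; T(p)(x)
\]
for every closing valuation $x$ of $p$, with both sides defined exactly when the other is. It also keeps the relevant instances inside $\dom(P)$. I will take this as the only hypothesis beyond the definitions.

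For the forward direction, assume $P$ is faithful at the open $p$ with $W = \mathrm{id}$. Then for every $x$,
\[
\proj_P\bigl(I_A(p(x))\bigr) = \proj_P\bigl(\carry(I_B(T(p)(x)))\bigr).
\]
I rewrite $T(p)(x)$ as $T(p(x))$ using the obligation. The result is exactly the two-sided square $\proj_P(I_A(q)) = \proj_P(\carry(I_B(T(q))))$ at the closed program $q = p(x)$. The closed $q$ has only the trivial valuation, on which $W$ is again the identity, so this is faithfulness at $q$ in the sense of \Cref{def:faithful}. The backward direction runs the same chain in reverse. Assume the closed square commutes at each $p(x)$, rewrite $T(p(x))$ back to $T(p)(x)$, and quantify over $x$.

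The closing remark, that the exact calculus is the $W = \mathrm{id}$ fragment, then follows by instantiation. The exact square of \S\ref{sec:pairs} is definitionally the directional square with the identity embedding. For pasting, the composite embedding $W_2 \circ W_1$ of \Cref{thm:pasting} is the identity when both hops are exact, and the direction meet $d_1 \wedge d_2$ is $\mathsf{exact}$ iff both $d_i$ are.

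The main obstacle is bookkeeping about domains, not algebra. The forward direction needs each $p(x)$ to lie in $\dom(P)$, so that the closed square is even stated. The backward direction needs the open $p$ itself to lie in $\dom(P)$. I would handle this by reading the specialization obligation as including definedness on both sides, so that a Kleene equality transports membership in $\dom(T)$, $\dom(I_B)$ and $\dom(\carry)$. A second subtlety is that the closed program's trivial valuation must be identified with an ordinary closing valuation, so that "faithful at a closed program" is the degenerate case of the open definition. In the Lean development I expect this to be a one-line rewrite with the obligation under the binder, followed by \texttt{Iff.intro} and no axioms.
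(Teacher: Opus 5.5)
Your proposal is correct and follows the paper's own argument. With $W = \mathrm{id}$ both legs close at the same valuation, the specialization obligation $T(p(x)) = T(p)(x)$ rewrites the directional square at the open $p$ into the two-sided square at each closed instance $p(x)$, and the quantifier over valuations becomes the quantifier over closed instances; your treatment of domains as a Kleene equality only spells out what the paper's one-line proof leaves implicit.
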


\begin{proof}
With $W = \mathrm{id}$ the two legs close at the same valuation, and
the specialization obligation identifies the closed translation
$T(p(x))$ with the closed instance $T(p)(x)$ of the open artifact;
the quantification over valuations is then the quantification over
closed instances.
\end{proof}

\begin{proposition}[Universal transfer across over-approximation]
\label{prop:lax}
Let $R : A \to Z$ have composed direction $\mathsf{over}$ and be
faithful (\Cref{def:faithful}) on a fragment containing $p$, with
$\varphi$ over $\keep(R)$. If no valuation of $T_R(p)$ exhibits
$\varphi$ within the declared bounds, then no valuation of $p$ does:
\emph{universal verdicts transfer across over-approximation}. An
existential verdict does not transfer; it remains subject to the
carried-back replay of \Cref{thm:existential}, whose failure on an
$\mathsf{over}$ route has a second, benign reading --- a \emph{spurious
counterexample}, a demand to refine the abstraction rather than an
indictment of a hop.
\end{proposition}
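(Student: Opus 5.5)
The argument is a contrapositive that runs the directional square of \Cref{def:faithful} forward along the composed witness embedding. Fix the route $R = P_n \circ \cdots \circ P_1$ with composed translator $T_R$, carry-back $\carry_R$ and embedding $W_R = W_n \circ \cdots \circ W_1$. By iterating \Cref{thm:pasting}, with the support conditions that define $\keep(R)$, $R$ is faithful at $p$ with respect to $\keep(R)$. So for every closing valuation $x$ of $p$,
\[
\keep(R)\bigl(I_A(p(x))\bigr) \;=\; \keep(R)\bigl(\carry_R(I_Z(T_R(p)(W_R(p,x))))\bigr).
\]
Now suppose, for contradiction, that some valuation $x$ of $p$ exhibits $\varphi$ within the declared bounds. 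The valuation to use on the target side is $y = W_R(p,x)$.

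The key steps, in order, are these:
\begin{enumerate}
\item Record that $\varphi$ reads only fields in $\keep(R)$. Its truth on a behavior $b$ is therefore a function of $\keep(R)(b)$ alone.
\item Record that the declared bounds are stated on behavior length. Since projected equality forces equal length (\Cref{def:projection}), the bound is respected on both sides.
\item From the displayed equality, conclude that the carried-back behavior $\carry_R(I_Z(T_R(p)(y)))$ satisfies $\varphi$.
\item Transport this back to the target: the target valuation $y$ of $T_R(p)$ exhibits $\varphi$. This contradicts the hypothesis, so no valuation of $p$ exhibits $\varphi$.
\end{enumerate}
Adequacy (\Cref{asm:adequacy}) lets ``exhibits'' be read through the interpreters rather than the reference semantics, on the covered fragment.

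I expect step~4 to be the main obstacle. It needs a precise reading of what ``$\varphi$ exhibited by a valuation of $T_R(p)$'' means, because $\varphi$ is a source-side observable predicate. I would fix the convention that a target valuation exhibits $\varphi$ exactly when its carried-back behavior satisfies $\varphi$ on $\keep(R)$. Under that convention step~4 is definitional, and the target query the solver decides is $\varphi \circ \carry_R$. I would state this explicitly as the meaning of ``no valuation of $T_R(p)$ exhibits $\varphi$''. The remaining subtlety is domain membership: $y$ must lie in the composed domain so that $I_Z$ and $\carry_R$ are defined. This is supplied by faithfulness on a fragment containing $p$, since $p \in \dom(R)$ and the embedding is required to map into closing valuations of $T_R(p)$.

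For the negative half, I would show that existential verdicts do not transfer by giving a counterexample. Take a one-hop $\mathsf{over}$ pair whose $W$ misses a target valuation $y'$, and let that valuation exhibit $\varphi$ while every embedded valuation does not. Then no source valuation exhibits $\varphi$, yet the target has a witness. Carrying $y'$ back and replaying it as in \Cref{thm:existential} then fails, and it fails with every hop's square passing. That is the ``spurious counterexample'' reading. By \Cref{cor:localization}, a genuine hop defect would instead show up as a failing square, which separates the two readings.
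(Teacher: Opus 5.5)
Your proposal is correct and follows the paper's own argument: a contrapositive that takes a source valuation $x$ exhibiting $\varphi$, moves it to the target valuation $W_R(p,x)$ by faithfulness along the composed embedding, and uses the fact that $\varphi$ reads only $\keep(R)$ to contradict the target-side premise. Your convention that a target valuation exhibits $\varphi$ when its carried-back behavior does so on $\keep(R)$ is exactly the paper's ``exhibiting $\varphi$ on the kept observables.'' Your counterexample for the non-transferring existential half goes beyond the paper, which states that half as a remark rather than proving it.
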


\begin{proof}
Contrapositive: a valuation $x$ with $\varphi$ observed in
$\sem{p(x)}{A}$ yields, by faithfulness along the composed embedding,
the valuation $W_R(p,x)$ of $T_R(p)$ exhibiting $\varphi$ on the kept
observables --- present within bounds, contradicting the premise.
(Mechanized: \texttt{lax\_universal\_transfer}, \S\ref{sec:mech}.)
\end{proof}

Together with \Cref{thm:existential,thm:universal} this closes the
CEGAR loop with durable artifacts: a transferred \textsc{unreachable}
answers at the source (\Cref{prop:lax}); a \textsc{reachable} either
replays (a true counterexample) or is spurious --- and the refinement
the failure demands is a \emph{new registered pair} with a smaller
havoc set, graded and reusable. The artifact instantiates this with
the endo-pair \texttt{btor2-havoc}, localization abstraction on the
bit-level hub, its lax square, negative controls, and refinement loop
in the repository's gate (a second $\mathsf{over}$ endo-pair, interval
abstraction, is registered as a brief); \S\ref{sec:eval-abstraction}
measures the loop, on authored systems and on an ingested HWMCC
slice.

\paragraph{Advising the dial}
The abstraction's parameters stay the player's, but they need not be
guessed. The artifact ships an advisor over the bit-level hub: the
\emph{cone of influence} of a question --- rooted in its bad
conditions \emph{and every environment constraint} (a state gating
run validity is never free), closed backwards through the
state-evolution supports --- splits the state space into a
\emph{free} havoc set, whose abstraction provably cannot move the
question's signal, and a refinement ladder ordered
farthest-from-the-question first; observed per-state bounds seed
candidate intervals that the lax square then corroborates or refutes.
The free-set claim is executable and negative-controlled in the
artifact's tests; mechanizing it is a natural next entry for
\S\ref{sec:mech}'s audit.

\paragraph{The dual, for symmetry}
The axis has an evident third value the platform deliberately does not
yet inhabit: an \emph{under}-approximating square --- the embedding
running the other way, the target with \emph{fewer} behaviors ---
where existential verdicts would transfer and universal ones would
not, the mirror image of \Cref{prop:lax}. Nothing in this paper
depends on it; we note it because the directional reading makes the
symmetry visible, and its verdict-transfer discipline would fall out
of the same monotonicity.

The mechanization of \S\ref{sec:mech} covers the directional calculus
in full: the directional square, pasting along the composed embedding
(binary and telescoped) with the direction meet, and both propositions
above are machine-checked (\texttt{Calculus/Lax.lean}) with the same
axiom footprint as the exact fragment.

\subsection{Contracts: grades, classes, and the meet}
\label{sec:fidelity}
\label{sec:grades}

A pair declares a \emph{fidelity grade} from
$G = \{$\ttrust, \trepr, \tchk, \tpred, \tprov$\}$ --- the
\emph{evidence species} behind its guarantee: a spec audit that
re-derives the bytes ($\tpred$), an independently re-checked
certificate ($\tprov$), this run's square-oracle verdict ($\tchk$), a
digest pin ($\trepr$), or nothing ($\ttrust$). Grades differ in kind
($\tpred$ and $\tprov$ are incomparable: foreseeable vs.\
re-checkable); what composition needs is the logical \emph{form} of
the guarantee, and that is totally ordered:

\begin{definition}[Assurance class]
\label{def:class}
The \emph{assurance classes} are the chain
$\mathbb{A} : \mathsf{none} < \mathsf{replay} < \mathsf{perrun} <
\mathsf{universal}$, and
$\mathit{class} : G \to \mathbb{A}$ maps
$\ttrust \mapsto \mathsf{none}$,
$\trepr \mapsto \mathsf{replay}$,
$\tchk \mapsto \mathsf{perrun}$, and
$\tpred, \tprov \mapsto \mathsf{universal}$.
The classes denote guarantee predicates for a pair $P$ with covered
fragment $C_P$:
$\mathsf{universal}$ --- $P$ is faithful at every $p \in C_P$;
$\mathsf{perrun}$ --- $P$ is faithful at every $p$ on which the square
oracle has run and passed (in particular, at the program of the present
run);
$\mathsf{replay}$ --- $T$ is pure, nothing about meaning;
$\mathsf{none}$ --- nothing.
\end{definition}

Only classes compose: a route with one $\tpred$ hop and one $\tprov$
hop is $\mathsf{universal}$ end to end, as it should be.

Two honesty notes on the mapping. First, grade \emph{declarations} are
themselves trusted inputs: only $\tchk$ is mechanically enforced by the
framework (the oracle runs); $\mathit{class}(\tpred) = \mathsf{universal}$
additionally presumes that the audited specification maps rules to
\emph{semantics} --- a translator faithful to a wrong rule table is
$\tpred$ and universally unfaithful, and our own evaluation exhibits a
$\tpred$ edge that was wrong on covered inputs until a coverage probe
caught it (\S\ref{sec:bugs}). Declarations therefore sit in the
trusted-base ledger like any other component. Second, $\tprov$ as
instantiated in \S\ref{sec:eval-proved} certifies a \emph{per-question}
verdict --- universal over that question's inputs, per-run over
questions --- not the pair's square for all programs.

\label{sec:composition}

\begin{proposition}[Weakest link]
\label{prop:weakestlink}
Let route $R = P_n \circ \cdots \circ P_1$ satisfy the support conditions
of \Cref{thm:pasting}. If every hop's guarantee holds, then the composite
guarantee of class $\min_i \mathit{class}(P_i)$ holds for $R$ (with
$C_R$ the composite fragment), and no higher class is entailed by the
hop guarantees alone.
\end{proposition}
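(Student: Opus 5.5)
We prove the two halves separately: first that the meet class is entailed, then that nothing higher is. Write $m = \min_i \mathit{class}(P_i)$.

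\emph{Entailment.} We proceed by induction on $n$, using \Cref{thm:pasting} as the binary step, and case-split on $m$. If $m = \mathsf{none}$ there is nothing to show. If $m = \mathsf{replay}$, every hop's $T_i$ is pure, so $T_R = T_n \circ \cdots \circ T_1$ is pure; this is the purity half of \Cref{prop:cache}, since a composite of pure functions is pure.

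If $m = \mathsf{perrun}$, every hop is at least $\mathsf{perrun}$. Since $\mathsf{universal}$ implies $\mathsf{perrun}$ on covered inputs, we may weaken every hop to $\mathsf{perrun}$. Fix a $p$ in the composite fragment $C_R$ at which the route's oracle has run and passed. In the present run the hop-by-hop oracles (\Cref{cor:localization}) have run at $p, T_1(p), T_2(T_1(p)), \dots$. Each hop guarantee gives faithfulness of $P_i$ at its intermediate program. Iterating \Cref{thm:pasting} with the pairwise support conditions and the composed embedding then yields faithfulness of $R$ at $p$ with respect to $\keep(R)$. If $m = \mathsf{universal}$, all hops are universal. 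For any $p \in C_R$, each intermediate image lies in $C_{P_{i+1}}$ by the definition of the composite domain, so each hop is faithful there and the same pasting chain applies at every $p \in C_R$.

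The delicate point is bookkeeping rather than mathematics. We must make precise that ``the oracle has run at $p$'' for the route means that it has run at every intermediate image. We must also check that the induction carries the support condition for the already-composed prefix: the carry-back $\carry_{P_{i}\circ\cdots\circ P_1}$ must be supported relative to $\proj_{i+1}$. We plan to state the pairwise support conditions in their telescoped form, as in \texttt{DRoute.lax\_route\_pasting}, so that this inductive step is exactly the binary case.

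\emph{Non-entailment.} Fix a hop $j$ with $\mathit{class}(P_j) = m < \mathsf{universal}$; the case $m = \mathsf{universal}$ is vacuous. We build countermodels satisfying every hop guarantee but failing class $m^{+}$. For $m = \mathsf{perrun}$, choose $P_j$ faithful on exactly the programs where its oracle has passed but unfaithful at some covered $p'$ not yet checked, with all other hops identity pairs; then $R$ is not universal. For $m = \mathsf{replay}$, choose $P_j$ with a pure $T_j$ but a carry-back that corrupts a kept field. Then the square fails at a program that has not been checked, and since the replay guarantee says nothing about meaning, no per-run faithfulness follows. For $m = \mathsf{none}$, take an impure $T_j$, which kills replay.

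The main obstacle is phrasing ``entailed by the hop guarantees alone'' so that these countermodels are legitimate. Our plan is to read it as: there exist hops satisfying exactly their class predicates for which the composite fails the next class. This matches the greatest-lower-bound statement \texttt{Contract.comp\_glb}, which we would cite for the order-theoretic half.
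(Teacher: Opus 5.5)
Your proposal is correct and follows the paper's own argument. Soundness comes from iterating \Cref{thm:pasting} class by class (universal on $C_R$, per-run at the programs the oracle saw, purity composing via \Cref{prop:cache}), optimality comes from the same evident countermodels (a defect outside the checked set, a pure but wrong component), and your telescoped-support bookkeeping matches the mechanization's recursive coherence predicate.

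One refinement: in the $\mathsf{perrun}$ case you need oracle passes only at the $\mathsf{perrun}$-class hops, since the $\mathsf{universal}$ hops already give faithfulness on $C_R$ --- weakening every hop to $\mathsf{perrun}$ reduces the case to \Cref{thm:reestablish}, where the hop classes do no work, and for hops with no decidable square, such as the \texttt{predicted} SMT-LIB bridges, it can leave the route's per-run set empty.
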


The proof is immediate from \Cref{def:class} and \Cref{thm:pasting}:
universal statements conjoin into universal statements over the composite
fragment; a per-run statement at hop $i$ caps the conjunction at the
programs the oracle actually saw; a $\mathsf{replay}$ hop caps it at
purity. Optimality is by the evident counterexamples. Statically, then,
\emph{a route is as faithful as its weakest hop} --- and this is one
instance of the paper's single composition statement: a pair's
composable declaration, its \emph{contract} (assurance class $\times$
direction), composes along a route by the componentwise meet, the
weakest hop on every axis at once (mechanized:
\texttt{Contract.comp\_glb}, \S\ref{sec:mech}; the class coordinate is
this proposition, the direction coordinate is \Cref{prop:lax}). The
static story is still not the whole story, because every pair carries
its own oracle:

\begin{theorem}[Per-run re-establishment]
\label{thm:reestablish}
Fix a run of route $R$ on program $p$, with intermediate programs
$q_0 = p$, $q_i = T_i(q_{i-1})$. If on this run the square oracle runs
and passes at every hop $i$ (on $q_{i-1}$), and the support conditions
hold, then $R$ is faithful at $p$ --- regardless of the hops' static
grades. In particular a $\trepr$ or even $\ttrust$ hop that is
dynamically validated at $q_{i-1}$ contributes, for this run, exactly
what a $\tchk$ hop contributes.
\end{theorem}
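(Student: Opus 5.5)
The argument is an induction on route length $n$ that uses only \Cref{thm:pasting}, applied at the concrete intermediate programs of this run. The grades never enter. The hypothesis ``the square oracle runs and passes at hop $i$ on $q_{i-1}$'' is, by the first remark after \Cref{def:faithful}, literally the statement that $P_i$ is faithful at $q_{i-1}$. The oracle is the decidable check of that equality, for every closing valuation along $W_i$. So the first step is to turn the run's $n$ oracle verdicts into $n$ pointwise faithfulness facts: $P_i$ faithful at $q_{i-1}$ for each $i$. This step is independent of the declared class of $P_i$, since \Cref{def:faithful} does not mention grades.

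Next I would telescope by pasting. Write $R_k = P_k \circ \cdots \circ P_1$ with composed projection $\keep(R_k)$. The base case $k=1$ is the first fact. For the step, assume $R_k$ is faithful at $p$ along $W_{R_k}$ with respect to $\keep(R_k)$. Apply \Cref{thm:pasting} with $R_k$ in the role of $P_1$ and $P_{k+1}$ in the role of $P_2$. We have $p \in \dom(R_{k+1})$ because the run actually produced every $q_i$ together with its interpreted and carried-back behaviors. $P_{k+1}$ is faithful at $T_{R_k}(p) = q_k$ by the first step. The composed carry-back $\carry_{R_k}$ must be $(\proj_{k+1} \Rightarrow \keep(R_{k+1}))$-supported, which is the hypothesis ``the support conditions hold'' instantiated along the route. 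The theorem then yields faithfulness of $R_{k+1}$ at $p$, with direction the meet and embedding the composition. At $k=n$ this is faithfulness of $R$ at $p$, which is the claim. Equivalently, one can invoke the mechanized telescope \texttt{DRoute.lax\_route\_pasting} directly, with the per-hop premises discharged pointwise.

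Then I would read off the ``in particular''. A $\trepr$ or $\ttrust$ hop validated at $q_{i-1}$ supplies the same premise, namely faithfulness at $q_{i-1}$. By \Cref{def:class}, that premise is exactly what the $\mathsf{perrun}$ guarantee of a $\tchk$ hop delivers for the present run. The conclusion therefore coincides, which shows that \Cref{prop:weakestlink}'s static cap is not an obstacle here. That proposition bounds what the \emph{declared} guarantees entail, and this theorem uses \emph{observed} facts instead.

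The main obstacle I expect is the bookkeeping of the support condition in telescoped form. The pairwise support conditions of \Cref{cor:localization} concern adjacent hops $\carry_k$ against $\proj_{k+1}$, whereas the induction needs them for the composite carry-back $\carry_1\circ\cdots\circ\carry_k$. To bridge this, I would show that supportedness composes. If $\carry_k$ is $(\proj_{k+1}\Rightarrow\pi')$-supported and the prefix carry-back is $(\pi'\Rightarrow\keep(R_{k+1}))$-supported, then the composite is $(\proj_{k+1}\Rightarrow\keep(R_{k+1}))$-supported, since congruences chain. Here I would take $\pi'$ to be the intermediate keep-set, as in the fieldwise $\mathit{deps}$ reading. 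A secondary subtlety is that for an open $p$, the run must check every closing valuation in the range of the composed embedding. I would state explicitly that ``passes'' means passing for all valuations at each hop, exactly as \Cref{def:faithful} quantifies.
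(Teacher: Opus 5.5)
Your proposal is correct and is essentially the paper's proof: by \Cref{def:faithful}, each passed oracle check is faithfulness at $q_{i-1}$ regardless of grade, and \Cref{thm:pasting} is then applied inductively along the route. The only divergence is bookkeeping --- you paste the prefix $R_k$ onto hop $k+1$, so you need your (sound) lemma that support conditions chain through the composed carry-back, whereas the paper's telescope and its localization proof paste hop $1$ onto the rectangle over hops $2..n$, with support packaged as a recursive coherence predicate, so that each support obligation falls on a single hop's carry-back.
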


\begin{proof}
Each passed oracle check \emph{is} faithfulness at $q_{i-1}$
(\Cref{def:faithful} is decidable and the oracle decides it); apply
\Cref{thm:pasting} inductively.
\end{proof}

\Cref{thm:reestablish} applies exactly to hops whose squares run: the
pinned C compiler at the spine's head has no square, so the inline
oracles lift the route \emph{from the compiled binary onward}, the
compiler contributing $\trepr$ plus a corpus-level differential
against an independent C verifier (\Cref{sec:instantiation}).
Re-establishment is per-run evidence, never a static promotion, and
never coverage for a hop with no oracle.

\subsection{Branches, diversity, and the trusted base}
\label{sec:branching}

The registry graph is not a line: one source may reach one destination by
several routes. Let $R_1, R_2 : A \to Z$ with composed keep-sets
$K_1, K_2$ and $K = K_1 \cap K_2$, sharing the endpoint interpreters
(language-owned) but no translator or carry-back.

\begin{lemma}[Disagreement localizes]
\label{lem:disagree}
If $R_1$ and $R_2$ disagree at $p$ under $K$, at least one route is
unfaithful at $p$, and \Cref{cor:localization} pins the failure to a hop,
a step, and a field of that route.
\end{lemma}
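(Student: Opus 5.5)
The plan is a contrapositive argument: assume both routes are faithful at $p$ and derive that they agree under $K$. First I would fix what ``disagree at $p$ under $K$'' means. The natural reading, matching the square oracle, is that the two carried-back source behaviors differ on $K$:
\[
K\bigl(\carry_{R_1}(I_Z(T_{R_1}(p)))\bigr) \neq K\bigl(\carry_{R_2}(I_Z(T_{R_2}(p)))\bigr),
\]
with both sides defined, that is, $p \in \dom(R_1)\cap\dom(R_2)$. For exact routes this is per closed program. For directional routes it is per closing valuation $x$, with the legs closed at $W_{R_1}(p,x)$ and $W_{R_2}(p,x)$ respectively.

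Suppose both $R_1$ and $R_2$ are faithful at $p$ with respect to their composed keep-sets, as given by iterating \Cref{thm:pasting}. Faithfulness of $R_j$ gives
\[
K_j(I_A(p)) = K_j\bigl(\carry_{R_j}(I_Z(T_{R_j}(p)))\bigr).
\]
Since $K \subseteq K_j$ and projection is restriction, both equalities survive restriction to $K$. Both carried-back behaviors are therefore $K$-equal to the single shared left leg $K(I_A(p))$. This is where the shared, language-owned source interpreter $I_A$ matters. By transitivity, the two carried-back behaviors are $K$-equal, which contradicts disagreement. Hence at least one $R_j$ is unfaithful at $p$, and its outer rectangle fails. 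In the directional case the same chain runs per valuation: the left leg $I_A(p(x))$ is common to both routes even though their embeddings differ.

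For the second half I apply \Cref{cor:localization} to that route $R_j$. Its hypotheses are that the pairwise support conditions hold along $R_j$ and that both sides of its outer square are defined at $p$. The latter holds because $p$ lies in the route's domain, which was already needed for the disagreement to be observable. The corollary then gives a least failing hop $i$, and running the oracle hop by hop yields the least step and field. Because the oracle is direction-indifferent, this holds for $\mathsf{over}$ hops too.

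The main obstacle is bookkeeping rather than mathematics. I must make explicit that disagreement is witnessed on fields inside $K$, which lies in both keep-sets, and that the support conditions are standing hypotheses for each route. Otherwise a route could be ``unfaithful'' only in the sense of failing the composite square while every inner square passes, and localization would not apply. I would state these as assumptions inherited from the route-formation rules of \S\ref{sec:pasting}. I would also note that the lemma does not say \emph{which} route is at fault: the oracle may need to be run along both routes, and possibly both fail.
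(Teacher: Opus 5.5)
Your proof is correct and matches the paper's own argument: two faithful routes are each $K$-equal to the shared left leg $K(I_A(p))$, hence to each other, so disagreement forces an unfaithful route, and \Cref{cor:localization} then applies to that route under the standing support conditions. Your explicit treatment of the per-valuation directional case and of the support hypotheses spells out what the paper calls ``immediate.''
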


\begin{lemma}[Agreement corroborates]
\label{lem:agree}
If $R_1$ and $R_2$ agree at $p$ under $K$, then either both are faithful
at $p$ (under $K$), or both are unfaithful with \emph{identical projected
error}: defects in independently built components producing the same
wrong $K$-behavior at $p$.
\end{lemma}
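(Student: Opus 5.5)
The argument is a case split on faithfulness, driven by the fact that both routes end in the same interpreter. I first fix what ``$R_j$ is faithful at $p$ under $K$'' means: the composite square of $R_j$ commutes at $p$ after projecting to $K$. Since $K \subseteq K_j = \keep(R_j)$, faithfulness of $R_j$ at $p$ with respect to $\keep(R_j)$, obtained by iterating \Cref{thm:pasting}, implies faithfulness under the smaller projection $K$. Agreement at $p$ under $K$ is the equation
\[
K\bigl(\carry_{R_1}(I_Z(T_{R_1}(p)))\bigr) \;=\; K\bigl(\carry_{R_2}(I_Z(T_{R_2}(p)))\bigr).
\]
I will take this definition for exact routes. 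For directional routes I read it per closing valuation along each route's own embedding. The argument below does not depend on which reading is used.

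The key observation is that both routes share the left leg $K(I_A(p))$, because the source interpreter is owned by $A$ and not by either route. Write $s = K(I_A(p))$ and $c_j = K(\carry_{R_j}(I_Z(T_{R_j}(p))))$. Then agreement says $c_1 = c_2$, and faithfulness of $R_j$ under $K$ says $s = c_j$. I then split into cases on whether $R_1$ is faithful at $p$.

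If $R_1$ is faithful, then $s = c_1 = c_2$, so $R_2$ is faithful as well. If $R_1$ is unfaithful, then $c_1 \neq s$, so $c_2 = c_1 \neq s$ and $R_2$ is unfaithful too. In this case both routes deviate from $s$ and produce the same $K$-behavior $c_1 = c_2$. This is exactly the ``identical projected error'' alternative. Mixed outcomes are impossible: if exactly one route were faithful, transitivity of equality would give $c_1 \ne c_2$, contradicting agreement. For directional routes the same case split runs valuation by valuation.

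The step I expect to be the main obstacle is interpretive rather than logical. The phrase ``independently built components'' is not something the proof can establish; it comes from the hypothesis that the routes share no translator or carry-back. The formal content is only the dichotomy above. Its force as corroboration comes from applying \Cref{cor:localization} to each route in the second case: each unfaithful route has a failing hop, and by disjointness those hops are distinct components. Identical error therefore requires coincident defects in disjoint code, and I would state this as the reading of the lemma, not as part of the proof.
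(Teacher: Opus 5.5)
Your proof is correct and is essentially the paper's argument. The paper dispatches the lemma in one line: a faithful route's carried-back behavior equals the shared, language-owned source leg $K(I_A(p))$. Your case split on whether $R_1$ is faithful is that observation spelled out, and your treatment of ``independently built'' as a reading supplied by the diversity assumption, not a proven conclusion, matches how the paper frames the lemma.
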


Both are immediate: two faithful routes each equal $K(I_A(p))$, hence
each other. What \Cref{lem:agree} buys depends on an assumption we state
rather than hide:

\begin{assumption}[Diversity]
\label{asm:diversity}
Corroborating routes share no translator or carry-back \emph{on the
segment where they diverge} (up to the language at which they
reconverge), and the diverse translators are derived from
\emph{independent semantic artifacts} --- e.g.\ RISC-V reaches BTOR2
once via a translator written from the prose ISA specification and once
via the Sail formal model; likewise AArch64 from the Arm manual vs.\ the
Arm Sail model.
\end{assumption}

The assumption's premise is, moreover, a \emph{declared,
protected registry fact} rather than prose alone: each pair names the
semantic artifact its translator derives from, and a mechanical
judgment (the trust advisor of \S\ref{sec:books}) scores every branch
by the diverse segments' declared artifacts, the shared suffix
removed --- a shared artifact is never independent, an undeclared one
is unknown, never silently independent --- and states saturation
plainly: pairs can be generated without bound, independent artifacts
cannot, and corroboration stops growing at their supply.

The scoping matters, and independence must be argued, not assumed. Below
the reconvergence, routes may share hops --- in the platform both
RISC-V$\to$SMT-LIB routes share the BTOR2$\to$SMT-LIB bridge --- and
shared emission libraries are shared failure surface, so agreement
corroborates the diverse prefix and nothing beneath it. The evaluation
therefore also runs a \emph{disjoint-decision} variant of every branch
question (\S\ref{sec:eval-branch}, \Cref{fig:branch}): the direct route's BTOR2 system
decided natively by btormc, against the Sail-derived route through the
bridge and Z3, so the diverse segment extends through the decision
procedure and the shared bridge drops out of the corroborating pair
entirely; what remains shared is the emission library (the site of the
cross-route incident below) and the language-owned endpoints. Correlated
failures between nominally independent implementations are, moreover,
the empirical norm \citep{knight1986independence}; our own history
exhibits both a cross-route common-mode failure through a shared emitter
and a shared-misreading blind spot (\S\ref{sec:bugs}), and translators
written by LLM agents plausibly share model-family failure modes.

Under \Cref{asm:diversity}, the residual failure mode of an agreeing
branch is a \emph{common-mode failure}: independently produced encodings
of the same semantics, wrong in the same observable way on the same
program. We do not assign this a probability; we account for it
structurally, in the trusted base ledger below. This is how the platform
manufactures assurance out of hops that are individually only
$\tchk$ --- dual modular redundancy whose comparator, unlike a voter,
localizes: agreement of independently derived routes is evidence that
neither route's grade alone provides.

\label{sec:tcb}

Define the \emph{trusted computing base} $\tcb(v)$ of a verdict $v$ as
the set of components whose correctness $v$'s soundness still assumes
after all checks that ran. Each cross-check layer strictly shrinks it
(\Cref{tab:ledger}); the last two rows are not interchangeable, and the
difference is the central asymmetry of the whole design.

\begin{table}
\caption{The trusted-base ledger: what each layer of cross-checking
removes from the set of components a verdict still assumes correct.}
\label{tab:ledger}
\footnotesize
\begin{tabular}{@{}>{\raggedright\arraybackslash}p{0.36\linewidth}%
>{\raggedright\arraybackslash}p{0.25\linewidth}%
>{\raggedright\arraybackslash}p{0.30\linewidth}@{}}
\toprule
\textbf{Layer that ran} & \textbf{Removed from \tcb} &
\textbf{Still in \tcb} \\
\midrule
bare route + solver & --- &
all $T_i$, $\carry_i$, all interpreters, solver \\
+ square oracle at every hop (\Cref{thm:reestablish}) &
every $T_i$ &
the $\carry_i$, the interpreters, solver \\
+ branch agreement (\Cref{lem:agree}, \Cref{asm:diversity}) &
the diverse prefix's $\carry_i$ (cross-corroborated) &
shared endpoint interpreters, shared suffix hops and emission
libraries, solver; common-mode residue \\
+ independent solver corroboration & the single solver &
the agreeing engines' common failure modes \\
+ certificate check ($\tprov$; e.g.\ DRAT/LRAT) & the solver entirely &
the certificate checker (formally verified, in the \texttt{cake\_lpr}
instantiation), plus the bit-blaster's BV$\to$CNF step \\
+ source-level witness replay (\S\ref{sec:endtoend}) &
\emph{everything above} &
$I_A$ adequacy (\Cref{asm:adequacy}) + the replay harness \\
\bottomrule
\end{tabular}
\end{table}

\subsection{The end-to-end asymmetry}
\label{sec:endtoend}

Finally, what does a user hold at the end? Let $R : A \to Z$ be a route
into a \emph{reasoning language} $Z$ (one a solver consumes: a
transition-system logic, an SMT theory). The user asks a question about
an \emph{open} source program $p(x)$ --- free inputs $x$ --- and a
condition $\varphi$ over $\keep(R)$-observables: is $\varphi$ reachable?
The solver decides at $Z$; its answer comes back in one of two logical
shapes, and they differ fundamentally in what must be trusted.

\begin{theorem}[Existential answers are self-certifying]
\label{thm:existential}
Suppose the solver returns \textsc{reachable} with witness $w$, the
composed carry-back $\carry_R$ re-expresses $w$ as an input valuation
$x_0$ (and claimed trace), and replaying $I_A(p(x_0))$ exhibits
$\varphi$. Then $\varphi$ is truly reachable in $\sem{p}{A}$ --- with
$\tcb = \{$\Cref{asm:adequacy} for $I_A\}$ plus the small fixed replay
harness the judgment runs through (the projection, the
$\varphi$-evaluator, and the substitution closing $p$), and nothing
else. Every
translator, carry-back, hub interpreter, and solver on $R$ is a
\emph{discovery} device only: a defect in any of them can cause a missed
or non-replaying witness, never a false \textsc{reachable}.
\end{theorem}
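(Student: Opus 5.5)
The argument exploits the fact that the conclusion depends on only one computation, the replay $I_A(p(x_0))$; everything upstream of it serves merely to \emph{propose} the candidate valuation $x_0$. I would therefore split the proof into a soundness part and a trusted-base part.

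\emph{Soundness.} Let $x_0$ be any value emitted by $\carry_R$. The replay harness checks two things: that $x_0$ is a closing valuation of $p$, and that the closed program $p(x_0)$ lies in $\dom(I_A)$. This check is part of the harness, so an ill-formed $x_0$ is rejected rather than trusted. Run $I_A(p(x_0)) = \sigma_1\cdots\sigma_n$, apply the projection to each state, and evaluate $\varphi$ on each projected state. By hypothesis some $\sigma_i$ satisfies $\varphi$. By \Cref{asm:adequacy}, $I_A(p(x_0)) = \sem{p(x_0)}{A}$. Hence the reference semantics of $p$ at the valuation $x_0$ exhibits $\varphi$, and this is exactly what ``$\varphi$ is reachable in $\sem{p}{A}$'' means for an open program whose valuations close it (\S\ref{sec:languages}). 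This step uses neither \Cref{thm:pasting}, nor faithfulness of any hop, nor the direction of $R$. It therefore holds verbatim on $\mathsf{over}$ routes, in line with \Cref{prop:lax}: a failed replay there may be spurious, but a successful replay is never spurious.

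\emph{Trusted base.} Here I would argue by enumeration, using a ``replace-and-rerun'' principle. The only inputs to the replay judgment are $p$, $x_0$, $I_A$, the projection, the $\varphi$-evaluator, and the substitution that forms $p(x_0)$. Take any translator $T_i$, carry-back $\carry_i$, hub interpreter $I_B$, or the solver, and replace it by an arbitrary (possibly adversarial) function. Doing so can change only which $x_0$ is produced, or whether one is produced at all. The soundness argument above quantifies over \emph{all} $x_0$, so the conclusion is unaffected. The consequences follow directly. A defect in any of these components yields either no witness, or a non-replaying $x_0$, which the harness rejects. It can never yield a false \textsc{reachable}. The same reasoning covers the claimed trace accompanying $x_0$: it is discarded and recomputed by $I_A$, so it is not trusted either.

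The main obstacle I expect is stating the trusted base \emph{precisely} rather than just the soundness claim. One must check that the harness does not quietly rely on a route component. The most likely leaks are these:
\begin{itemize}
\item $\varphi$ may be phrased over $\keep(R)$ but evaluated using field names that only the carry-back supplies;
\item the substitution closing $p$ may reuse the translator's valuation encoding.
\end{itemize}
I would address both by requiring that $\varphi$ and its evaluator be stated over $F_A$, independently of $R$, with $\keep(R)\subseteq F_A$ serving only to guarantee expressibility. Likewise, the valuation type must be $A$'s own, so that the harness closes $p$ using $A$-level machinery alone. With these conventions in place, the listed harness components together with adequacy form the whole $\tcb$, matching the last row of \Cref{tab:ledger}.
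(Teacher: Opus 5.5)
Your proposal is correct and is essentially the paper's own argument: the replay is a run of $I_A$ on the concrete closed program $p(x_0)$, \Cref{asm:adequacy} identifies that run with $\sem{p(x_0)}{A}$, and $\varphi$ was observed on it. Your ``replace-and-rerun'' quantification over all $x_0$ spells out the paper's closing remark that no other component's output enters the final judgment, and the Lean theorem \texttt{existential\_self\_certifying} makes the same point by taking only adequacy and the replay facts as hypotheses.
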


\begin{proof}
The replay is a run of $I_A$ itself on a concrete closed program
$p(x_0)$; by \Cref{asm:adequacy} its behavior is $\sem{p(x_0)}{A}$, and
$\varphi$ was observed on it. No other component's output enters the
final judgment.
\end{proof}

\begin{theorem}[Universal answers need the machinery]
\label{thm:universal}
Suppose the solver returns \textsc{unreachable} (within declared
unrolling/step bounds $k$). If (i) every hop of $R$ is faithful at every
instance $p(x)$ --- the $\mathsf{universal}$-class guarantee on the
fragment containing $p$'s constructs, and exactly the hypothesis the
mechanization takes; (ii) the support conditions of
\Cref{thm:pasting} hold and $\varphi$ mentions only $\keep(R)$;
(iii) the verdict is certified by an independently checked certificate
or corroborated across independent engines; and (iv) the symbolic
artifact the solver consumes denotes the family of closed translations
the squares check --- translation commutes with input specialization, an
obligation each reasoning pair discharges by construction of its
encoding and, where the encoding carries free inputs, by sampled
commutation tests (\S\ref{sec:mech}), and under which the mechanization
derives every
per-instance translation from the one open translation the platform
actually performs (\texttt{universal\_from\_open\_artifact},
\S\ref{sec:mech}) --- then no input drives
$\sem{p}{A}$ to $\varphi$ within $k$ steps, with $\tcb$ as in the
corresponding row of the ledger (\S\ref{sec:tcb}).
\end{theorem}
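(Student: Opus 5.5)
The argument is a contrapositive, mirroring \Cref{prop:lax} but with the certificate hypothesis (iii) supplying trust in the target-side verdict. Suppose some input $x$ drives $\sem{p}{A}$ to $\varphi$ within $k$ steps; we derive that the certified \textsc{unreachable} verdict is false, which (iii) excludes up to the ledger's residual $\tcb$.

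\textbf{Step 1: move to the reference interpreter.} By \Cref{asm:adequacy}, $I_A(p(x)) = \sem{p(x)}{A}$, so $\varphi$ is observed on $I_A(p(x))$. Since $\varphi$ mentions only $\keep(R)$ by (ii), it is in fact observed on $\keep(R)(I_A(p(x)))$.

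\textbf{Step 2: compose the squares.} Hypothesis (i) gives faithfulness of each hop at every closed instance. Iterating \Cref{thm:pasting} along $R$, with the support conditions from (ii), telescopes this to faithfulness of $R$ at $p$ along $W_R$ with projection $\keep(R)$:
\[
\keep(R)\bigl(I_A(p(x))\bigr) = \keep(R)\bigl(\carry_R(I_Z(T_R(p)(W_R(p,x))))\bigr).
\]
For exact hops $W$ is the identity; for $\mathsf{over}$ hops this is the simulation reading, which suffices. So the carried-back target run at valuation $W_R(p,x)$ exhibits $\varphi$ within $k$ steps, since projected behaviors of different lengths are unequal by \Cref{def:projection}.

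\textbf{Step 3: connect to the artifact the solver saw.} The squares speak about closed translations $T_R(p(x))$ or their instances, whereas the solver decided one open artifact. Hypothesis (iv), in the form mechanized as \texttt{universal\_from\_open\_artifact}, identifies the two: $T_R(p)$ closed at $W_R(p,x)$ is a member of the family the symbolic artifact denotes. Hence that artifact has a run reaching $\varphi$ within bound $k$, and a correct solver must answer \textsc{reachable}.

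\textbf{Step 4: discharge solver trust.} By (iii), the \textsc{unreachable} verdict is either validated by an independently checked certificate or agreed upon by independent engines. Its correctness then rests only on the checker (or on the absence of common-mode engine failure) and on the encoding steps the ledger lists. This contradicts Step 3, and reading off which components were assumed gives the $\tcb$ row claimed.

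\textbf{Main obstacle.} I expect Step 3 to be the hardest. The bound $k$ must line up across hops: post-step indexing has to agree, and when a carry-back changes granularity, source step counts and target unrolling depth differ. I would first handle this by assuming step-aligned hops, and then extend to the case where the declared bound is stated at $Z$ and pulled back through the spanned windows of appendix~A.2.
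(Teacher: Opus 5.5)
Your proposal is correct and takes essentially the paper's approach. The paper gives no separate prose proof; the mechanized \texttt{universal\_needs\_machinery}, with \texttt{universal\_from\_open\_artifact} for clause (iv), runs your contrapositive, the same pattern as the proof of \Cref{prop:lax}: a source valuation exhibiting $\varphi$ is pushed through the pasted route (i)+(ii), identified with an instance of the one open artifact by (iv), and so contradicts the verdict that (iii) makes trustworthy. The step-bound alignment you flag as the main obstacle does not arise there, because the paper absorbs the bound into the solver-artifact-to-target-semantics correspondence that (iii) and (iv) take as hypotheses, with hops read step-aligned and full-bound target replays as the tested surrogate; your retiming extension through appendix~A.2's windows is optional, not needed for the theorem as stated.
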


One caution on hypothesis (i): per-run oracle evidence combined with
branch agreement (\Cref{lem:agree}, \Cref{asm:diversity}) \emph{raises
confidence} in it but does not entail it --- oracle passes cover the
finitely many closures actually run, and agreement bounds the residual
failure to common-mode rather than excluding it. Universal verdicts are
precisely where the $\mathsf{universal}$ tiers earn their keep; evidence
below that tier is corroboration, not entailment.

Direction slots into the same asymmetry monotonically: along an
$\mathsf{over}$ route a certified \textsc{unreachable} transfers to the
source (\Cref{prop:lax}), while a \textsc{reachable} already rested on
the carried-back replay --- so the directional calculus adds no new
trust obligation, only a second, benign reading of a replay failure:
a spurious counterexample, the refinement demand of \S\ref{sec:lax}.

The asymmetry between \Cref{thm:existential,thm:universal} is, we argue,
the right lens on the entire design: \emph{for questions whose answers
carry witnesses, trust is nearly free} --- carry the witness back and
replay it at the source; the platform's graded tiers, branches, and
certificates exist for the other half, the universal answers where no
witness can vouch for the verdict. This is the end-to-end argument
\citep{saltzer1984endtoend} transplanted to translation: the check that
settles a witness-carrying answer belongs at the endpoint --- the
source interpreter --- and what the intermediate hops provide is, for
that half, an optimization; only for universal answers, where no
endpoint check can exist, does hop fidelity become the guarantee
itself. A system that makes the carry-back
$\carry$ a first-class, per-pair component is a system in which the cheap
half of trust is always cashed in; a system that grades and stacks
cross-checks is one in which the expensive half is bought consciously,
in labeled increments, with the bill itemized in the
ledger~of~\S\ref{sec:tcb}.

\subsection{Mechanization}
\label{sec:mech}

The compositional core of \Cref{sec:calculus,sec:fidelity} is
mechanized in Lean~4 (core library only, no mathlib; $\sim$1,250 lines,
in the artifact): the pasting theorem with the support condition,
localization, the assurance-class chain and weakest link (sound half),
per-run re-establishment, both branch lemmas, the ratchet (verdict
preservation and coverage monotonicity under extensions), both
end-to-end theorems, the \emph{specialization obligation} of
\Cref{thm:universal}(iv) --- with the theorem that it discharges the
per-instance-translation hypothesis from the single open translation
the platform performs --- and the \emph{route telescope} --- routes of
arbitrary length as a dependently-typed chain of pairs over
interpreter-bundled languages, with the paper's ``pairwise support
conditions'' as a recursive coherence predicate, and the $n$-ary
pasting and localization theorems proved by induction through the
binary ones (the canonical head-projection choice at each junction is
shown lossless by a reprojection lemma). Three modeling
choices mirror the paper exactly. Components are Lean functions, so
\emph{purity is by construction} and \Cref{prop:cache} is
definitional; partiality is \texttt{Option}, with \unsupported{} as
\texttt{none}; and projections are generalized from field subsets to
arbitrary observable-compression maps, with ``$\proj \subseteq
\proj_1$'' becoming \emph{factoring}. Because every theorem takes each
component-output witness as an explicit hypothesis, a statement's
trusted base is literally its hypothesis list ---
\Cref{thm:existential} in Lean assumes source-interpreter adequacy and
nothing else, and its proof term is three tokens:
\begin{lstlisting}[basicstyle=\ttfamily\footnotesize,
  literate={∀}{{$\forall$}}1 {→}{{$\to$}}1 {₀}{{$_0$}}1
           {φ}{{$\varphi$}}1 {π}{{$\pi$}}1 {∃}{{$\exists$}}1
           {∧}{{$\land$}}1 {⟨}{{$\langle$}}1 {⟩}{{$\rangle$}}1]
theorem existential_self_certifying
  (adequacy : ∀ p b, IA p = some b → sem p = some b)
  (hreplay : IA p₀ = some b) (hφ : φ (projB π b)) :
  ∃ b', sem p₀ = some b' ∧ φ (projB π b') :=
  ⟨b, adequacy _ _ hreplay, hφ⟩
\end{lstlisting}
\Cref{tab:mech} maps the paper's results to their Lean theorems with
each proof's exact axiom footprint, re-checked at every build; the one
classical proof is \Cref{cor:localization} (an unfaithful route names
no witness by itself). The table also shows where mechanization stops,
and we state it plainly: \Cref{thm:universal}'s clauses (iii) and (iv)
enter the Lean statement as \emph{hypotheses}. Both now carry
\emph{tested surrogates} rather than by-construction assertions. The
specialization obligation is discharged by sampled commutation checks
per reasoning pair with free inputs (in the suite): the open program is
translated once, and for sampled valuations the emitted artifact is
byte-identical while the specialized instance's square passes --- the
shape \texttt{universal\_from\_open\_artifact} consumes. The
solver-artifact-to-target-semantics correspondence behind (iii) is
corroborated per verdict: every bounded-unreachable route verdict in
the branch, benchmark, and certified experiments is additionally
replayed through the strict
target interpreter for the full bound --- sampled inputs where the
system has any --- and no \texttt{bad} may fire
(\S\ref{sec:eval-branch}--\S\ref{sec:eval-proved}). Sampling
corroborates, it does not entail --- which is the grade discipline
applied to the calculus's own hypotheses.

\begin{table}
\caption{The mechanization at a glance: paper result, Lean theorem,
and the proof's axiom footprint as printed by the build-time audit
(\texttt{Audit.lean}); ``---'' is axiom-free.}
\label{tab:mech}
\footnotesize
\begin{tabular}{@{}>{\raggedright\arraybackslash}p{0.24\linewidth}%
>{\raggedright\arraybackslash}p{0.45\linewidth}%
>{\raggedright\arraybackslash}p{0.22\linewidth}@{}}
\toprule
\textbf{Result} & \textbf{Lean theorem} & \textbf{Axioms} \\
\midrule
\Cref{thm:pasting} (pasting) & \texttt{pasting}, \texttt{pasting}$_3$
  & \texttt{propext} \\
\Cref{cor:localization} & \texttt{localization}
  & \texttt{propext}, \texttt{Classical.\allowbreak choice},
  \texttt{Quot.sound} \\
\Cref{prop:weakestlink} (sound half) & \texttt{weakest\_link\_universal}
  & \texttt{propext} \\
\Cref{thm:reestablish} & \texttt{reestablishment} & \texttt{propext} \\
\Cref{lem:disagree,lem:agree} & \texttt{disagreement\_localizes},
  \texttt{agreement\_corroborates} & --- \\
\Cref{prop:ratchet} & \texttt{ratchet\_preserves\_faithful},
  \texttt{ratchet\_coverage\_mono} & --- \\
\Cref{thm:existential} & \texttt{existential\_self\_certifying} & --- \\
\Cref{thm:universal} & \texttt{universal\_needs\_machinery}
  & \texttt{propext} \\
\Cref{thm:universal}(iv) & \texttt{universal\_from\_}\newline
  \texttt{open\_artifact} & \texttt{propext} \\
route telescope ($n$-ary) & \texttt{Route.\allowbreak route\_pasting},
  \texttt{Route.\allowbreak route\_localization}
  & \texttt{propext}, \texttt{Quot.sound}
  (+ \texttt{Classical.\allowbreak choice} for localization) \\
directional pasting, direction meet (\Cref{thm:pasting}) &
  \texttt{lax\_pasting}; telescoped:
  \texttt{DRoute.\allowbreak lax\_route\_pasting},
  \texttt{DRoute.\allowbreak direction\_exact\_iff} & \texttt{propext}
  (+ \texttt{Quot.sound} telescoped) \\
\Cref{prop:exactid} (exactness $=$ identity embedding) &
  \texttt{laxFaithful\_id\_iff\_faithful} & --- \\
\Cref{prop:lax} (universal transfer) &
  \texttt{lax\_universal\_transfer} & \texttt{propext} \\
contract algebra (composition = componentwise meet) &
  \texttt{Contract.comp\_glb}, \texttt{Direction.comp\_eq\_min}
  & \texttt{propext}, \texttt{Quot.sound}
  (\texttt{comp\_eq\_min}: ---) \\
\bottomrule
\end{tabular}
\end{table} Deliberately not mechanized: the fieldwise loss
computation and retiming case of appendix~A.2\ (syntactic
side
conditions over a concrete carry-back representation), the optimality
halves of \Cref{prop:weakestlink} (meta-statements over models), and
the grade set $G$ itself --- evidence provenance is not a mathematical
object, which is the point of separating $G$ from $\mathbb{A}$. The
telescope theorems' audit adds only \texttt{Quot.sound} (from
structural-recursion equations); localization remains the sole
classical proof among these results (the same build audit also prints
the artifact's frontier-exploration model --- no result of this
paper --- whose three search lemmas are classical).


\section{The platform}
\label{sec:instantiation}

The calculus is implemented as \hg, a Python platform whose structure
mirrors \Cref{sec:calculus} exactly: a \emph{framework} that holds no
pair semantics, per-language \emph{shared interpreters}, and per-pair
translators and carry-backs. \Cref{fig:registry} draws the registry
graph; this section describes what the use plane runs on,
\Cref{sec:economy} describes how it grows, and \Cref{sec:evaluation}
measures both.

\begin{figure}
\centering
\begin{tikzpicture}[x=1cm, y=1cm,
  reg/.style={->, >=stealth, gray, densely dotted}]
\node[lang] (c)     at (0.0, 0.0)  {C};
\node[lang] (riscv) at (1.7, 0.0)  {riscv};
\node[lang] (a64)   at (1.7, -1.0) {aarch64};
\node[lang] (wasm)  at (1.7, -2.0) {wasm};
\node[lang] (ebpf)  at (1.7, -2.8) {ebpf};
\node[lang] (evm)   at (1.7, -3.6) {evm};
\node[lang] (sail)  at (3.9, -0.5) {sail};
\node[hub]  (btor)  at (3.9, -2.4) {btor2};
\node[hub]  (smt)   at (6.4, -2.4) {smtlib};
\node[lang] (py)    at (5.5, -1.0) {python};
\node[lang] (crn)   at (7.2, -1.0) {crn};
\node[lang] (smi)   at (0.0, -4.5) {smiles};
\node[lang] (form)  at (1.9, -4.5) {formula};
\draw[repr] (c) -- (riscv);
\draw[chk] (riscv) -- (sail);
\draw[chk] (a64) -- (sail);
\draw[chk] (sail) -- (btor);
\draw[chk] (riscv) .. controls (2.9, -0.6) .. (btor);
\draw[chk] (a64) .. controls (2.9, -1.5) .. (btor);
\draw[chk] (wasm) -- (btor);
\draw[chk] (ebpf) -- (btor);
\draw[chk] (evm) -- (btor);
\draw[pred] (btor) -- (smt);
\draw[pred] (py) -- (smt);
\draw[pred] (crn) -- (smt);
\draw[pred] (smi) -- (form);
\draw[chk] (btor) to[out=250, in=290, looseness=5]
  node[below, font=\tiny] {havoc} (btor);
\draw[reg] (btor) to[out=300, in=340, looseness=5]
  node[below right, font=\tiny] {interval} (btor);
\node[font=\scriptsize, anchor=west] at (4.4, -3.6)
  {\begin{tabular}{@{}l@{}}
   \tikz\draw[chk] (0,0)--(0.5,0); \ \tchk\\
   \tikz\draw[pred] (0,0)--(0.5,0); \ \tpred\\
   \tikz\draw[repr] (0,0)--(0.5,0); \ \trepr
   \end{tabular}};
\end{tikzpicture}
\caption{The registry: 13 languages (nodes; the two reasoning hubs
shaded) and 15 registered pairs (edges), each carrying its declared
grade. The two loops on the BTOR2 hub are the abstraction endo-pairs of
the direction axis (\S\ref{sec:lax}): \texttt{btor2-havoc}, built, and
--- gray --- \texttt{btor2-interval}, registered as a brief. The
evaluation of \Cref{sec:evaluation} covers the thirteen pairs built at
its snapshot. RISC-V and AArch64 each reach BTOR2 twice ---
directly (manual-derived) and through Sail (formal-model-derived) ---
the independently derived branches \Cref{asm:diversity} requires, each
leg's semantic artifact declared per pair; the
dashed \texttt{btor2}$\to$\texttt{smtlib} bridge is the shared suffix
that the disjoint-decision runs of \S\ref{sec:eval-branch} remove
from the corroborating pair.}
\label{fig:registry}
\end{figure}

\subsection{Framework}
The framework provides the registry, the content-addressed cache of
\Cref{prop:cache}, the generic square oracle with step/observable
localization, the route enumerator and runner (routes are enumerated
and reported — composed grade, coverage, loss — never chosen), the
coverage harness, the merge-run route grader, the solver/checker
plumbing, and the books of \S\ref{sec:books} with their advisory
reads (route report, answerability diagnosis, trust and reduction
advisors) — all of which annotate and account, none of which chooses.
The same surface is the \emph{player interface}: an MCP server
(\texttt{gurdy mcp}, stdio JSON-RPC) exposes exactly the enumerated,
advisory calls — languages, pairs, annotated routes, coverage, the
diagnosis, the advisors, and bounded decide-with-replay — so an LLM
player connects to the platform as a machine peer, with the exposure
rule pinned by test (\S\ref{sec:eval-player2} runs the player
experiment over this surface, cold).
Everything except the solver call is byte-deterministic, enforced
repo-wide by twice-and-diff.

\subsection{Languages and reasoning hubs}
Thirteen languages are registered, each with a specification-anchored
formal semantics and (for all but C) a built shared
interpreter: the ISAs RISC-V (RV64IMC), AArch64 (a growing A64 slice),
eBPF, Wasm, and EVM; C (whose ``interpreter'' role is played
differentially, see below); Python (a pinned CPython restricted by an AST
allow-list to an integer subset --- the loader rejects, CPython runs);
Sail (an executable slice of the Sail-derived RISC-V and Arm semantics);
chemical reaction networks (a discrete Petri-net stepper); SMILES
molecular graphs and molecular formulas; and the two \emph{reasoning
hubs}: BTOR2, a bit-vector/array transition-system logic consumed by
hardware model checkers --- environment \texttt{constraint}s enforced
with the per-frame reading, a bad counting only on a constraint-valid
prefix, on which the shared evaluator, the bridged and the native
engines agree --- and SMT-LIB ($\mathrm{QF\_ABV}$ and
$\mathrm{QF\_LIA}$ fragments).

Reasoning languages own, beyond an interpreter, a solver inventory
(BtorMC and Pono for BTOR2; Z3, Bitwuzla, Boolector, cvc5, Yices2 for
SMT-LIB) and a witness-checker inventory: interpreter replay of BTOR2
\texttt{.wit} witnesses, model evaluation by the deterministic SMT-LIB
evaluator, multi-engine corroboration, and a bit-blasted certificate
pipeline for \tprov-tier unreachability with two checker rungs: the
DRAT proof checked by \texttt{drat-trim} (independent, unverified),
and --- preferred when present --- elaborated to LRAT
(\texttt{drat-trim} as an \emph{untrusted} elaborator) and re-validated
by \texttt{cake\_lpr}, the \emph{formally verified} CakeML checker,
built natively from its compiler-generated ARMv8 assembly
(\S\ref{sec:eval-proved}).

Interpreter adequacy (\Cref{asm:adequacy}) is discharged differentially:
the RISC-V interpreter against the Sail-generated reference simulator
(\texttt{sail\_riscv\_sim}) and the official \texttt{riscv-tests} /
\texttt{riscv-arch-test} suites; the Python subset against pinned
CPython by construction --- a circularity we note plainly: for this one
language the pinned runtime is both the semantics and the interpreter,
so adequacy has no independent reference; C against CBMC on the same
source (with a
classification step separating C-undefined-but-ISA-defined behavior from
genuine faults).

\subsection{Pairs and routes}
Fifteen pairs are registered. Thirteen --- built to varying,
\emph{measured} extents (\Cref{tab:capability}) --- form the evaluation
snapshot of \Cref{sec:evaluation}; the other two sit on the direction
axis of \S\ref{sec:lax}: \texttt{btor2-havoc}, the localization
abstraction, built, negative-controlled, and driving the refinement
demonstration, and \texttt{btor2-interval}, a second over-approximation
registered as a brief under the discipline of \S\ref{sec:books}. The
spine is
C $\to$ RISC-V $\to$ BTOR2 $\to$ SMT-LIB, with the C head a digest-pinned
gcc at grade $\trepr$: the square-verified portion of the spine begins
at the compiled binary (\S\ref{sec:composition}), and the C hop is
re-checked separately by the CBMC differential above. The direct
RISC-V$\to$BTOR2 lowering draws on the rotor bounded model checker
\citep{rotor2017basics}. RISC-V and AArch64 each reach BTOR2 by two
independently derived routes (manual-derived direct translator vs.\
Sail-model-mediated), instantiating \Cref{asm:diversity}; the BTOR2 hub
funnels five ISAs into one solver interface, and the
$\mathrm{BTOR2}\to\mathrm{SMT}$-LIB bridge (a rule-for-rule operator
mapping at grade $\tpred$) connects the hubs, so every BTOR2 question can
also be decided natively-vs-bridged --- solve-step corroboration. CRN and
Python reach SMT-LIB directly ($\mathrm{QF\_LIA}$), and
SMILES$\to$formula exercises the calculus away from computation entirely.

\section{The economy: how the platform grows}
\label{sec:economy}

The evolution plane writes what \Cref{sec:instantiation} reads. Its
stages — build under contract, gate, ratchet, and the books that
direct them — are the growth step of \Cref{sec:overview}, kept under
one discipline: \emph{growth never answers}.

\subsection{How the pairs were built}
\label{sec:agents}
Each pair (and each shared interpreter) was implemented by an
\emph{independent LLM agent} from a one-page registration brief ---
under the current discipline, a brief that opens by citing the demand
records recommending the pair (\S\ref{sec:books}) --- against
the calculus as a written contract, under three standing rules: agents
never modify the framework or another deliverable's code; interpreter
changes are versioned events that re-validate every dependent pair
(\Cref{prop:ratchet}); and all growth is by additive widening --- extend
the covered fragment, re-run all prior evidence byte-for-byte, ratchet
the coverage number. The platform's checks --- the square oracle, the
differentials, solver corroboration, branch agreement, twice-and-diff ---
were the only \emph{semantic} quality gate; no human reviewed agent code
for semantic correctness, though an autonomy protocol did escalate one
class of change --- edits to shared emission code --- for human sign-off
(``largely unsupervised'' means exactly this: the human gated whether
such a change could land, never whether it was right).
The defects those checks caught, and where they localized,
are part of the evaluation (\S\ref{sec:bugs}): the implementation process
itself is an experiment in the paper's thesis that translation trust can
be manufactured by architecture rather than by trusting the translator's
author --- whether that author is a compiler vendor or a language model.
The same discipline is the growth model: pairs contributed by anyone
--- with LLMs, with agents, or by hand --- through ordinary pull
requests, admitted by architecture rather than authorship. A
contribution arrives as the calculus's tuple with its evidence surface
attached (translator, carry-back, declared projection and grade, typed
\unsupported{} partiality, probes, a runnable square); the merge gate
re-runs conjoined coverage, dependent branch agreements, and
twice-and-diff --- the machinery of \Cref{sec:evaluation} applied as
an admission test --- and \Cref{prop:ratchet} guarantees an additive
contribution extends, never silently invalidates, the verdicts already
earned.

\subsection{Coverage, and the ratchet}
\label{sec:coverage}

Fidelity alone is gameable by triviality: a pair handling one instruction
is vacuously $\tprov$. Its companion axis is measured against a yardstick
the implementer does not choose:

\begin{definition}[Coverage]
\label{def:coverage}
Fix for each language a \emph{construct inventory} $K_A$ enumerated from
its specification (opcodes, operators, syntactic forms). A construct
$k \in K_A$ \emph{counts} for pair $P$ iff $P$ covers it
($k$'s probe programs lie in $\dom(P)$) \emph{and} $P$ is faithful on
those probes: $\cov(P) \subseteq K_A$ is the per-construct
\emph{conjunction} of covered and faithful. Route coverage $\cov(R)$
counts $k$ iff its lowering survives --- covered and faithful --- at every
hop.
\end{definition}

Coverage is gamed by unsoundness (accept everything, translate it
wrongly) exactly as fidelity is gamed by triviality (translate one thing,
prove it); the conjunction closes both routes to a vacuous ``pass.''
Everything a pair rejects must be a \emph{typed} $\unsupported$, so the
uncovered set is itself enumerable and auditable --- partiality is
fail-fast, never silent.

Coverage then grows by a discipline with a soundness theorem:

\begin{proposition}[Ratchet]
\label{prop:ratchet}
Call an update $I' \sqsupseteq I$ (likewise $T$, $\carry$) an
\emph{extension} when $\dom(I') \supseteq \dom(I)$ and
$I'\!\restriction_{\dom(I)} = I$. Under extensions: every faithfulness
verdict at every $p \in \dom(I)$ is preserved; $\cov$ is monotone; and
prior evidence (oracle passes, differentials, agreement records) never
needs re-earning. An update that is \emph{not} an extension can silently
invalidate any of these, and must instead bump the component's version
and re-validate every dependent pair.
\end{proposition}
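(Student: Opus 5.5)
The plan is to reduce everything to one lemma about the pair's domain, and then read each of the three claims off the definitions. The key lemma is \emph{domain and value stability}. Let $P'$ be obtained from $P$ by replacing any subset of $I_A, I_B, T, \carry$ (and $W$, treated the same way) with extensions. Then $\dom(P) \subseteq \dom(P')$, and for every $p \in \dom(P)$ every leg of the square is computed exactly as before. The argument follows the membership conditions of \Cref{def:pair} in order:
\begin{itemize}
\item $p \in \dom(I_A) \subseteq \dom(I_A')$, and $I_A'(p) = I_A(p)$;
\item $p \in \dom(T) \subseteq \dom(T')$, and $T'(p) = T(p)$;
\item since $T(p) \in \dom(I_B)$, we get $I_B'(T(p)) = I_B(T(p))$;
\item since that behavior lies in $\dom(\carry)$, we get $\carry'$ of it equal to $\carry$ of it.
\end{itemize}
Each step applies the restriction equation $f'\!\restriction_{\dom(f)} = f$ to an argument already known to lie in the old domain. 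In Lean, with partiality as \texttt{Option}, this is a short chain of rewrites, one per component.

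\emph{Verdict preservation.} Faithfulness at $p$ (\Cref{def:faithful}) is an equation between $\proj_P$ of the left leg and $\proj_P$ of the right leg, quantified over closing valuations. By the lemma both legs are unchanged at every instance. One care point: the instances $p(x)$ and $T(p)(W(p,x))$ must themselves lie in the old domains. They do, because the hypothesis that $P$ was faithful at $p$ already presupposes that the old legs were defined there. The projection and direction are protected contract data and are not changed by an extension, so the old equation is literally the new one.

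\emph{Coverage monotonicity.} A construct $k \in \cov(P)$ has its probes in $\dom(P)$, and $P$ is faithful on them (\Cref{def:coverage}). By the lemma the probes lie in $\dom(P')$, and by verdict preservation $P'$ is faithful on them, so $k \in \cov(P')$. For a route, apply the same argument hop by hop to each lowered probe. The intermediate programs $q_i$ are unchanged, because each $T_i'$ agrees with $T_i$ on them, so survival at every hop carries over.

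\emph{Prior evidence.} Oracle passes are faithfulness verdicts, so they are covered by the first claim. Differentials and agreement records are equations between outputs of components evaluated on their old domains, so by the lemma those outputs are byte-identical. Purity (\Cref{def:interpreter}) then makes recomputation coincide with the stored record.

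\emph{The negative clause.} For updates that are not extensions, it suffices to give one counterexample. Take an update that changes $I_B$ at a single point $T(p)$ so that its carried-back image differs from the old one on a kept field. The old square held at $p$ and the new one fails, so a verdict has been invalidated.

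The main obstacle is the bookkeeping that confirms the domain conditions are exactly the ones each restriction equation needs, especially for the $\mathsf{over}$ case with closing valuations and $W$. It must also be made precise that the \emph{evidence} clause refers only to component outputs on old domains. I would state that clause as a corollary of the stability lemma rather than as a separate semantic claim.
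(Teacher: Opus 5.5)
Your proposal is correct and is the paper's argument made explicit. The appendix proof likewise observes that every quantity in the square at an old-domain point reads only values the extension leaves unchanged, gets coverage monotonicity from the preserved verdicts, notes that a composition of extensions is an extension (your simultaneous-replacement lemma), and obtains the failure half from a non-extension altering a value some square consumed. Your stability lemma over the membership conditions of $\dom(P)$, the care point about closing valuations, and the hop-by-hop route argument are finer-grained bookkeeping of those same steps.
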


The implementable extension check is byte-identity of all old outputs
across the update --- the same twice-and-diff mechanism as
\Cref{prop:cache}, run across versions. This is precisely the repo's
widening protocol (additive interpreter bumps, versioned events,
dependent-pair re-validation), which \Cref{sec:evaluation} measures over
the project's own history.

The ratchet is the evolution plane's contract with the use plane:
declarations only strengthen, verdicts only accumulate. What directs
the growth is the demand the use plane records:

\subsection{Keeping the books: demand, cost, and recommendation}
\label{sec:books}

Between diagnosis and registration sits the accounting: the
\emph{books}, one append-only, opt-in, host-local ledger with two
sides.
The \emph{cost} side is written beside the work by the instrumented call
sites --- translation on cache miss, the square oracle, every decide
backend --- and accumulates from the capped route-grader runs in CI; it
feeds the measured axis of the annotated route report, where
Pareto-dominated routes are marked, never hidden, dominance is computed
only between fully measured routes, and an absent measurement reads
\emph{unmeasured}, never zero. The \emph{demand} side is written by the
diagnosis calls: each question the platform cannot satisfy is recorded
verbatim with its failing obstacle and the generation target it names
--- the four obstacles of \S\ref{sec:overview}, plus a fifth,
\emph{trust}, when a player's assurance floor is unmet --- and every
record carries an \emph{origin} tag separating organic player sessions
from synthetic campaigns.

A recommendation is then an aggregation, not a judgment: the board
groups recorded demand per target and deduplicates by question
identity; the failing obstacle is the platform's one demand taxonomy,
naming what the candidate pair pays for --- connectivity and shape buy
capability, loss buys coverage, cost buys performance, trust buys an
independent anchor. A
campaign of five questions, run against the July 2026 registry, yields
a four-row board that illustrates the shape of the evidence (the
diagnosis and the board are measured as a benchmark in
\S\ref{sec:eval-campaign}): a
reasoning-language demand (two liveness questions no registered hub
expresses --- the depth growth of the conclusion), an independent-pair
demand (a single-route source whose provenance is undeclared), a
connectivity demand (a source with no route to any hub, with a draft
brief stub attached), and a cost demand (a \emph{resource-out}
verdict naming the registered abstraction dials). Each row removes a
different obstacle; no scalar ranks them.

The discipline this buys is \emph{recommended, then registered}: a
registration brief opens by citing the demand records behind it --- no
evidence, no brief --- while registration itself remains the human act
it has always been. The untrusted-generator stance of
\S\ref{sec:agents} extends from building pairs to \emph{choosing} them:
an LLM whose campaign manufactures demand to justify a pet pair cannot
launder it, because demand is auditable (questions verbatim, origins
displayed apart, volume explicitly not a verdict), and the trust-side
demands rest on the declared, protected semantic artifacts of
\Cref{asm:diversity} --- a shared artifact is never independent, an
undeclared one never silently so.

\section{Evaluation}
\ifarxiv
\noindent The fifteen subsections below answer four questions. \emph{Does
the graph answer correctly?} --- the capability snapshot, composed
coverage and branch agreement, the compliance slice, and the end-to-end
case studies (\S\ref{sec:eval-capability}--\S\ref{sec:eval-cases}).
\emph{Does the evidence mean what it claims?} --- the certified tier
and the escape-rate experiments for the gate itself
(\S\ref{sec:eval-proved}, \S\ref{sec:eval-escape}). \emph{Can
untrusted authors build it?} --- the defects the architecture caught
(\S\ref{sec:bugs}). \emph{Can an LLM play it?} --- the player
experiment (\S\ref{sec:eval-player}); performance and scale close
(\S\ref{sec:eval-perf}). The snapshot tables end there; five later
benchmark families, each stamped with its own commit, follow
(\S\ref{sec:eval-constraint}--\S\ref{sec:eval-player2}), and a
short record of what remains narrated rather than measured ends the
section (\S\ref{sec:eval-exhibits}).
\fi
\label{sec:evaluation}


We evaluate the snapshot along the axes the calculus makes measurable:
per-pair coverage (\S\ref{sec:eval-capability}), composed route
coverage and branch agreement --- including a disjoint-decision block
--- (\S\ref{sec:eval-branch}), a
compliance-derived reachability benchmark decided along both RISC-V
routes (\S\ref{sec:eval-bench}), end-to-end case
studies exercising \Cref{thm:existential} (\S\ref{sec:eval-cases}),
certified-unreachability runs inhabiting the \tprov\ tier
(\S\ref{sec:eval-proved}), the defects the architecture caught
(\S\ref{sec:bugs}), a measured escape-rate estimate for the gate
(\S\ref{sec:eval-escape}), a first controlled experiment with LLM
players (\S\ref{sec:eval-player}), and performance, determinism, and
scale (\S\ref{sec:eval-perf}). Every number in this section's tables is
regenerated by one script from
the live registry and real runs at the snapshot commit, with a few
curated exceptions marked at source --- the defect catalog behind
\Cref{tab:bugs} is hand-mined from history, and \Cref{tab:player}
formats the recorded outputs of a manual two-arm protocol; the escape
experiment's earlier-round numbers (\S\ref{sec:eval-escape}) likewise
come from its incident record. (Host:
macOS/arm64, CPython 3.12, Z3 4.13, btormc 3.2, cadical; drat-trim
pinned at source commit \texttt{2e3b2dc} and cake\_lpr at
\texttt{a4323b2}, the engine inventory recorded in the regenerated
\texttt{env.json}; the platform's own suite runs 1215 tests there, all
green, with 3 host-dependent skips.) The evaluation remains deliberately
\emph{fully specified} rather than large: every question --- authored
or derived --- is checked end to end with its ground truth stated, and
the largest sweep (the 78-question compliance-derived benchmark of
\S\ref{sec:eval-bench}) is still small next to an SV-COMP-scale
campaign, which remains future work.

\subsection{Capability snapshot}
\label{sec:eval-capability}

\begin{table}
\caption{Capability snapshot: per-pair construct
coverage against the language-owned inventory. \emph{Accepted} =
the probe translates without a typed \unsupported{} abort;
\emph{conjoined} = accepted \emph{and} the pair's square oracle
passes on the probe --- Definition~\ref{def:coverage}'s conjunction,
measured directly (\S\ref{sec:eval-capability}). Pairs without a
decidable square (the \texttt{predicted}-grade hops into the
SMT-LIB hub) discharge the faithfulness conjunct per run
instead (\S\ref{sec:composition}), marked \emph{per-run}; the
reproducible C hop carries no construct inventory and shows
``---''. Gaps counts
the distinct typed \unsupported{} constructs. Measured from the
live registry at the snapshot commit.}
\label{tab:capability}
\footnotesize
\begin{tabular}{@{}llllr@{}}
\toprule
Pair (source$\to$target) & Grade & Accepted & Conjoined & Gaps \\
\midrule
c-riscv & \texttt{reproducible} & --- & --- & --- \\
riscv-btor2 & \texttt{checked} & 96/96 & 96/96 & 0 \\
riscv-sail & \texttt{checked} & 96/96 & 96/96 & 0 \\
sail-btor2 & \texttt{checked} & 95/95 & 95/95 & 0 \\
aarch64-btor2 & \texttt{checked} & 27/33 & 27/33 & 6 \\
aarch64-sail & \texttt{checked} & 27/33 & 27/33 & 6 \\
wasm-btor2 & \texttt{checked} & 54/75 & 54/75 & 21 \\
ebpf-btor2 & \texttt{checked} & 126/126 & 126/126 & 0 \\
evm-btor2 & \texttt{checked} & 86/144 & 86/144 & 58 \\
btor2-smtlib & \texttt{predicted} & 56/56 & per-run & 0 \\
crn-smtlib & \texttt{predicted} & 10/10 & per-run & 0 \\
python-smtlib & \texttt{predicted} & 11/27 & per-run & 16 \\
smiles-formula & \texttt{predicted} & 14/17 & 14/17 & 3 \\
\bottomrule
\end{tabular}
\end{table}

\Cref{tab:capability} reports per-construct coverage measured at
harvest time --- not transcribed from documentation --- in both of
\Cref{def:coverage}'s readings: \emph{accepted} (the probe translates
without a typed $\unsupported$ abort) and \emph{conjoined} (accepted
\emph{and} the pair's square oracle passes on the probe --- the
definition's actual conjunction, run probe by probe). For every pair
with a decidable square the two columns now coincide; they did not
when the conjunction was first measured, and what it surfaced on its
first run is instructive enough to report
(\S\ref{sec:bugs}, incidents I20--I22): the Sail-route front dropped
the program's \emph{initial memory}, so every load-family probe
diverged at step 0 (accepted, silently wrong --- precisely the failure
mode acceptance counting cannot see), and three of the BTOR2 lowerings
modeled a program counter that leaves the code as \emph{stuck} where
every reference interpreter halts, so every taken-jump probe whose
target exits the image diverged on \texttt{halted} (the worked example
of \S\ref{sec:pairs} shows this square, and its oracle output, verbatim). It also audited its
own instrument: one probe (\texttt{SD}) stored over its own
\texttt{ECALL} terminator, so the reference interpreter could not run
it and no square had ever executed on that construct --- acceptance
counted it covered anyway. Measuring the
definition's conjunction found all three in its first run; the fixes
are ordinary versioned translator bumps and one probe repair. Pairs without a decidable square
--- the \texttt{predicted}-grade hops into the SMT-LIB hub --- discharge
the faithfulness
conjunct per run at question time instead (\S\ref{sec:composition}) ---
what their grade leaves to per-run checking; the table marks them
\emph{per-run} (the fourth \texttt{predicted} pair,
SMILES$\to$formula, has executable ends and conjoins directly). Denominators are language-owned inventories
(\Cref{def:coverage}'s yardstick): both RISC-V-headed pairs measure
against the same 96-construct RV64IMC inventory, the two AArch64 pairs
against one declared A64 slice whose out-of-scope constructs count in
the denominator, and the Wasm and EVM rows against their in-scope
$\cup$ enumerated-out-of-scope unions (54/75, 86/144) --- a gap shows
as a typed $\unsupported$ entry, never by shrinking the total. Depth
remains uneven by design and visibly so: the spine is deep (all 96
RV64IMC constructs conjoined; eBPF complete at 126; the
BTOR2$\to$SMT-LIB bridge total on its 56-item
operator/sort/directive inventory), Python stands at 11 of 27 subset
constructs, and every gap is a \emph{typed} $\unsupported$ the harness
can enumerate (the ``gaps'' column). The declared A64 slice is still
narrow against the full base ISA --- the honest reading of 27/33 is
``the in-scope family conjoins, and the slice is small.'' The C head
carries no construct inventory: its translator is the pinned compiler,
graded $\trepr$ (\S\ref{sec:composition}).

\subsection{Composed coverage and branch agreement}
\label{sec:eval-branch}

\begin{table}
\caption{Composed \emph{conjoined} coverage: a
probe counts iff it survives every hop \emph{and} every hop with a
decidable square passes it on that hop's input (the route-level
reading of Definition~\ref{def:coverage}; ``via Sail'' routes are
the independently derived branch; denominators are the source
language's inventory, \S\ref{sec:eval-branch}). Routes marked
$^\dagger$ contain no decidable-square hop at all (their one
translator is \texttt{predicted}-grade): the number is acceptance,
with faithfulness discharged per run (\S\ref{sec:composition}).
Branch
agreement: the same question decided along both routes. Times are
the slower route, end to end (translate every hop + decide with Z3).
The bottom block decides the same questions with fully disjoint
decision stacks after the head: the direct route's BTOR2 system decided
natively by btormc (no bridge, no SMT-LIB, no Z3) against the
via-Sail route through the bridge and Z3 --- the diverse segment
spans both the lowering derivation and the decision procedure
(\S\ref{sec:branching}).}
\label{tab:branch}
\footnotesize
\begin{tabular}{@{}lr@{}}
\toprule
Route & Composed coverage \\
\midrule
RISC-V, direct & 96/96 \\
RISC-V, via Sail & 96/96 \\
AArch64, direct & 27/33 \\
AArch64, via Sail & 27/33 \\
Wasm & 54/75 \\
eBPF & 126/126 \\
EVM & 86/144 \\
CRN$^\dagger$ & 10/10 \\
Python$^\dagger$ & 11/27 \\
SMILES & 14/17 \\
\bottomrule
\end{tabular}

\medskip

\begin{tabular}{@{}lclr@{}}
\toprule
Question (both routes) & Agree & Verdict & Time (s) \\
\midrule
riscv const x1==42 (reach) & \checkmark & reach & 0.0 \\
riscv const x1==99 (unreach) & \checkmark & unreach & 0.0 \\
riscv loop sum==15 (reach) & \checkmark & reach & 0.0 \\
riscv loop sum==99 (unreach) & \checkmark & unreach & 0.0 \\
riscv store/load 0x123 (reach) & \checkmark & reach & 0.0 \\
riscv store/load 0x999 (unreach) & \checkmark & unreach & 0.0 \\
C a0==47 (reach) & \checkmark & reach & 0.1 \\
C a0==99 (unreach) & \checkmark & unreach & 0.0 \\
aarch64 movz/add x1==42 (reach) & \checkmark & reach & 0.0 \\
aarch64 movz/add x1==999 (unreach) & \checkmark & unreach & 0.0 \\
aarch64 SUBS/B.NE loop x0==1 (reach) & \checkmark & reach & 0.0 \\
aarch64 SUBS/B.NE loop x0==5 (unreach) & \checkmark & unreach & 0.0 \\
aarch64 flags + loop (SUBS/B.NE) & \checkmark & trace $=_\pi$ & --- \\
aarch64 32-bit W forms & \checkmark & trace $=_\pi$ & --- \\
\bottomrule
\end{tabular}

\medskip

\begin{tabular}{@{}lcl@{}}
\toprule
Question, decision stacks fully disjoint & Agree & Verdict \\
\midrule
riscv const x1==42 (reach) & \checkmark & reach \\
riscv const x1==99 (unreach) & \checkmark & unreach \\
riscv loop sum==15 (reach) & \checkmark & reach \\
riscv loop sum==99 (unreach) & \checkmark & unreach \\
riscv store/load 0x123 (reach) & \checkmark & reach \\
riscv store/load 0x999 (unreach) & \checkmark & unreach \\
aarch64 movz/add x1==42 (reach) & \checkmark & reach \\
aarch64 movz/add x1==999 (unreach) & \checkmark & unreach \\
aarch64 SUBS/B.NE loop x0==1 (reach) & \checkmark & reach \\
aarch64 SUBS/B.NE loop x0==5 (unreach) & \checkmark & unreach \\
C a0==47 (reach) & \checkmark & reach \\
C a0==99 (unreach) & \checkmark & unreach \\
\bottomrule
\end{tabular}
\end{table}

\begin{figure}
\centering
\begin{tikzpicture}[x=1cm, y=1cm]
\node[art]  (p)    at (3.4, 0)     {RISC-V program $p$ + question};
\node[lang] (b1)   at (1.4, -1.0)  {btor2};
\node[lang] (sail) at (5.4, -1.0)  {sail};
\node[lang] (b2)   at (5.4, -2.0)  {btor2};
\node[hub]  (smt)  at (5.4, -3.0)  {smtlib};
\node[tool] (mc)   at (1.4, -4.0)  {btormc};
\node[tool] (z3)   at (5.4, -4.0)  {z3};
\node[art]  (v1)   at (1.4, -5.0)  {verdict};
\node[art]  (v2)   at (5.4, -5.0)  {verdict};
\draw[chk]  (p) -- node[elbl, left=2pt]  {riscv-btor2} (b1);
\draw[chk]  (p) -- node[elbl, right=2pt] {riscv-sail}  (sail);
\draw[chk]  (sail) -- node[elbl, right]  {sail-btor2}  (b2);
\draw[pred] (b2) -- node[elbl, left]     {btor2-smtlib} (smt);
\draw[chk]  (b1) -- node[elbl, right=2pt, align=left, font=\tiny\itshape]
  {decide natively\\(no bridge, no\\SMT-LIB, no Z3)} (mc);
\draw[chk]  (smt) -- (z3);
\draw[chk]  (mc) -- (v1);
\draw[chk]  (z3) -- (v2);
\draw[<->, >=stealth] (v1) -- node[elbl, above] {must agree} (v2);
\end{tikzpicture}
\caption{What the bottom block of \Cref{tab:branch} measures: the same
question decided along two stacks sharing no \emph{decision} component
after the head (the named residue: emission library, endpoints) ---
the ISA-prose-derived lowering decided \emph{natively} by btormc,
against the Sail-model-derived lowering through the
\texttt{predicted}-grade bridge and Z3. Agreement corroborates the
whole diverse segment (\Cref{asm:diversity}); the stated residual
shared surface is the BTOR2 emission library and the language-owned
endpoint interpreters. The solver-level rows at the top of
\Cref{tab:branch} instead decide \emph{both} sides through the bridge
and Z3, sharing everything below the reconvergence.}
\label{fig:branch}
\end{figure}

\Cref{tab:branch} (top) composes conjoined coverage along whole routes:
a construct counts iff its probe survives \emph{every} hop and every
hop with a decidable square passes it on that hop's input. Nothing a
head accepts is eaten at the hubs: every accepted probe of every route
survives composed to its route's target, squares included. The denominators are
the source language's inventory, so the two RISC-V routes are measured
on the \emph{same} 96-construct yardstick and both compose 96/96 ---
the spine composes losslessly, and the earlier version of this table,
whose two RISC-V rows quoted different pair-chosen denominators
(96 vs.\ 95), is exactly the yardstick-shopping \Cref{def:coverage}
forbids and this measurement now closes. The AArch64 rows also carry a ratchet
anecdote the machinery makes precise: one revision earlier the
Sail-mediated route composed to $0/33$, because the Sail$\to$BTOR2
translator lowered only the RV64IMC arm --- and the harness localized
every miss to the \texttt{sail-btor2} hop by name. Widening that one
hop (an additive translator bump, RISC-V arm byte-identical per
\Cref{prop:ratchet}) closed the gap without re-earning any other
evidence. A gap the system \emph{reports and localizes} is the designed
behavior; a gap papered over would be the defect.

The bottom of \Cref{tab:branch} is the fidelity payoff of
\Cref{lem:agree}: the same reachability questions decided along
independently derived routes. Twelve solver-level questions ---
constant dataflow, summation and countdown loops under unrolling,
store/load through memory, two questions about a \emph{gcc-compiled C
program} (can \texttt{a0} be 47 at halt? can it be 99?), and four
AArch64 questions decided along the Arm-manual-derived and the
Arm-Sail-model-derived route --- were each decided along both routes of
their branch; all twelve pairs of verdicts agree, including the
universal (\textsc{unreachable}) halves where agreement carries real
information. The C rows exercise the full re-establishment story: an
opaque $\trepr$ compiler heads both routes, and its output is validated
downstream twice, independently. The AArch64 trace-level rows
cross-check the same branch below the solver, directly comparing the
two carried-back behaviors under $\proj$. The table's bottom block
addresses \Cref{asm:diversity}'s scoping head-on: the twelve
solver-level rows share the BTOR2$\to$SMT-LIB bridge below the
reconvergence, so every one of the twelve is re-decided with fully
disjoint \emph{decision} stacks after the head (\Cref{fig:branch}) --- the direct
route's BTOR2 system
decided \emph{natively} by btormc (no bridge, no SMT-LIB, no Z3)
against the Sail-model route through the bridge and Z3. All twelve
agree. Every unreachable verdict among them is additionally
corroborated below the solvers: the strict BTOR2 interpreter replays
the direct route's system for the full bound and no \texttt{bad} fires
--- the tested surrogate for \Cref{thm:universal}'s
artifact-to-semantics hypothesis (recorded per question, as is the
same replay for every bounded-unreachable verdict of
\Cref{tab:bench}). The diverse segment now spans the lowering derivation
(ISA-prose vs.\ formal-model) \emph{and} the decision procedure
(different solver, different input format); the residual shared
surface, stated rather than hidden, is the BTOR2 emission library ---
the site of incident I3, the platform's one recorded cross-route
common-mode failure \emph{through shared code} (shared misreadings are
the other kind, \S\ref{sec:bugs}) --- and the language-owned endpoint
interpreters.

\subsection{The compliance slice as a question set}
\label{sec:eval-bench}

{\setlength{\tabcolsep}{4pt}
\begin{table}
\caption{The compliance slice as an external-format
question set: self-checking programs in the riscv-tests HTIF
convention (built with the pinned toolchain at the upstream link
base), each reference-run to derive per-register reachable /
bounded-unreachable questions with machine-derived ground truth,
each question decided independently along BOTH RISC-V$\to$SMT-LIB
routes. Agree = the two routes' verdicts coincide; correct = the
agreed verdict matches the interpreter-derived ground truth. Time
is both routes, all questions, end to end.}
\label{tab:bench}
\footnotesize
\begin{tabular}{@{}lrrrrrr@{}}
\toprule
Program & Steps & $k$ & Questions & Agree & Correct & Time (s) \\
\midrule
\texttt{rv64uc-compress} & 34 & 42 & 8 & 8/8 & 8/8 & 2.7 \\
\texttt{rv64ui-arith} & 63 & 71 & 8 & 8/8 & 8/8 & 8.2 \\
\texttt{rv64ui-branch} & 34 & 42 & 6 & 6/6 & 6/6 & 2.4 \\
\texttt{rv64ui-jump} & 18 & 26 & 8 & 8/8 & 8/8 & 1.4 \\
\texttt{rv64ui-ldst} & 68 & 76 & 8 & 8/8 & 8/8 & 12.0 \\
\texttt{rv64ui-logic} & 49 & 57 & 8 & 8/8 & 8/8 & 6.5 \\
\texttt{rv64ui-shift} & 45 & 53 & 8 & 8/8 & 8/8 & 5.7 \\
\texttt{rv64ui-word} & 43 & 51 & 8 & 8/8 & 8/8 & 5.0 \\
\texttt{rv64um-div} & 69 & 77 & 8 & 8/8 & 8/8 & 11.4 \\
\texttt{rv64um-mul} & 39 & 47 & 8 & 8/8 & 8/8 & 3.9 \\
\midrule
Total & & & 78 & 78/78 & 78/78 & \\
\bottomrule
\end{tabular}
\end{table}
}

The questions above were authored; \Cref{tab:bench} scales the same
discipline over a \emph{derived} question set nobody hand-picked. The
platform's RISC-V compliance slice --- self-checking programs in the
upstream riscv-tests HTIF \texttt{tohost} convention, built with the
pinned cross-toolchain at the upstream link base \texttt{0x8000\_0000}
--- doubles as a reachability benchmark: each program is reference-run
to completion, and for each data register the harness derives one
question whose target value the register demonstrably held (ground
truth \textsc{reachable}) and one whose target no register held at any
step (ground truth \textsc{unreachable} within the bound), with the
unrolling bound covering the whole run. Every question is then decided
independently along \emph{both} RISC-V routes. Three honesty notes.
First, the slice is curated in the upstream convention rather than
taken binary-for-binary from upstream: the stock \texttt{-p-} binaries
open with machine-mode CSR and trap setup that is outside the declared
user-ISA scope, so the slice re-creates the grading convention over
the user subset (the gap is typed, not hidden). Second, the ground
truth comes from the shared reference interpreter --- but that
interpreter is exactly the component the adequacy campaign validates
step-for-step against the Sail-generated gold simulator \emph{on these
same programs}, so the benchmark's ground truth is anchored outside
the platform. Third, the benchmark's discriminating power is scoped:
ground truth and both routes share the reference interpreter, so it
measures translator- and route-level defects --- a shared-misreading
defect of the MUL/ADD class (\S\ref{sec:bugs}) would pass it
undetected, and the external Sail differential, not this benchmark,
is what polices that class. The result: 78 questions over 10 programs (constant
dataflow, logic, shifts, 32-bit word ops, taken and untaken branches,
computed jumps, the store/load family, multiplication, division with
its RISC-V-defined edge cases, and compressed forms), both routes
agreeing on all 78 and all 78 matching ground truth, in under a minute
end to end --- with every one of the 39 bounded-unreachable verdicts
additionally corroborated by a full-bound replay of the direct route's
system through the strict BTOR2 interpreter (\S\ref{sec:eval-branch}'s
surrogate, recorded per question). Preparing this benchmark also forced one more fidelity
repair the base-0 probes could not see: the Sail-route front rebased
images to address 0, which silently breaks every pc-relative
absolute-address computation (\texttt{AUIPC}/\texttt{LA}) on images
linked at the upstream base --- found while squaring the translator
against the slice's link discipline, and fixed by carrying absolute
addresses through the Sail object (a versioned bump, with the squares
and this benchmark as regressions).

\subsection{End-to-end case studies}
\label{sec:eval-cases}

{\setlength{\tabcolsep}{4.5pt}
\begin{table}
\caption{End-to-end case studies. ``Replay''
is the source-level witness check of Theorem~\ref{thm:existential}:
the carried-back witness re-executed by the source interpreter.}
\label{tab:cases}
\footnotesize
\begin{tabular}{@{}>{\raggedright\arraybackslash}p{0.38\linewidth}>{\raggedright\arraybackslash}p{0.30\linewidth}cr@{}}
\toprule
Case & Verdicts & Replay & Time (s) \\
\midrule
C spine (both routes + source replay) & REACHABLE (both agree) & \checkmark & 0.21 \\
Python assert violable? (QF\_LIA) & REACHABLE & \checkmark & 0.18 \\
EVM 6*7==42 native vs bridged & REACHABLE (both agree) & \checkmark & 0.05 \\
CRN A->B twice reaches B=2 & REACHABLE & \checkmark & 3.78 \\
\bottomrule
\end{tabular}
\end{table}
}

\begin{figure}
\centering
\begin{tikzpicture}[x=1cm, y=1cm]
\node[art]  (p)      at (0.9, 0)    {source $p(x)$};
\node[art]  (z)      at (6.3, 0)    {artifact at $Z$};
\node[tool] (solver) at (6.3, -1.1) {solver};
\node[art]  (model)  at (6.3, -2.2) {witness $x_0$};
\node[tool] (ia)     at (0.9, -2.2) {$I_A$ replays $p(x_0)$};
\node[art]  (v)      at (3.6, -3.2)
  {$\varphi$ observed: \textsc{reachable}, $\tcb = \{I_A$ adequacy$\}$};
\draw[chk] (p) -- node[elbl, above] {$T_R$ (whole route)} (z);
\draw[chk] (z) -- (solver);
\draw[chk] (solver) -- (model);
\draw[chk] (model) -- node[elbl, above] {$\carry_R$ (carry-backs)} (ia);
\draw[chk] (ia) -- (v);
\end{tikzpicture}
\caption{The round trip every case study in \Cref{tab:cases} completes
(\Cref{thm:existential}): the witness comes back through the route's
carry-backs and is replayed by the source interpreter. Everything on
the way down is a \emph{discovery} device --- a defect in any
translator, hub, or the solver can lose the witness, never fake the
replay.}
\label{fig:replay}
\end{figure}

\Cref{tab:cases} instantiates \Cref{thm:existential} once per hub and
front-end family, completing the round trip of \Cref{fig:replay} at
the source each time. (1)~The \emph{C spine}: both routes answer
\textsc{reachable} for \texttt{a0}${=}47$, and the source-level replay
runs the compiled program in the shared RISC-V interpreter to the halt
in \texttt{\_start()} with \texttt{a0}${=}47$ observed --- the verdict
survives even if every translator and the solver were wrong.
(2)~\emph{Python}: ``is the trailing assert violable?'' for a function
with a bounded loop and a branch; Z3 finds the witness $x{=}4$, the
shared \textsc{QF\_LIA} evaluator independently re-checks the model
(\texttt{smt\_model\_ok}), and the input is replayed through the pinned
CPython down the taken branches, firing the assert. (3)~\emph{EVM}:
$6\times7{=}42$ on the operand stack, decided \emph{natively} (btormc
on the BTOR2 system, witness replayed by the strict BTOR2 interpreter)
and \emph{bridged} (through the SMT-LIB hub with Z3) --- solve-step
corroboration across two engine families, agreeing. (4)~\emph{CRN}: a
two-firing reachability question in a reaction network, witness
schedule replayed by the Petri-net interpreter. In all four, the final
trust rests on the source interpreter (plus the small fixed replay
harness \Cref{thm:existential} names), per the last row of the
ledger in \S\ref{sec:tcb}.

\subsection{The certified tier}
\label{sec:eval-proved}

{\setlength{\tabcolsep}{4.5pt}
\begin{table}
\caption{The certified (\tprov) row inhabited,
twice: input-driven unreachability claims from a real pair,
corroborated by three engines, certified by a bit-blasted DRAT
proof, and re-validated by the independent verified checker. The
first exhibit is deliberately small (its refutation is essentially
unit propagation); the second is certificate checking at scale ---
refuting a $16{\times}16$-bit multiplication. The controls are
bogus checks that must fail --- and do.}
\label{tab:proved}
\footnotesize
\begin{tabular}{@{}p{0.30\linewidth}p{0.30\linewidth}p{0.30\linewidth}@{}}
\toprule
 & \textbf{propagation-scale} & \textbf{search-scale} \\
Question & $\mathit{helper}()^2 = 3$ (no square mod $2^{64}$) & $x \cdot y = 2^{31}{-}1$, $x,y \in [2, 2^{16}{+}1]$ (no bounded factorization of the Mersenne prime) \\
Engines agreeing \textsc{unsat} & bitwuzla, boolector, z3 & bitwuzla, boolector, z3 \\
Certificate (bitwuzla$\to$CNF, cadical$\to$DRAT) & 41746\,B DRAT, 18\,B LRAT (drat-trim, untrusted) & 2.8\,MB DRAT, 11.8\,MB LRAT \\
Independent check & cake\_lpr (\emph{formally verified}): \textbf{verified}; tier \texttt{proved} & cake\_lpr (\emph{formally verified}): \textbf{verified}; tier \texttt{proved} \\
Resulting \tcb\ (solve step) & \multicolumn{2}{l@{}}{bitwuzla:bit-blast, cake\_lpr:verified} \\
Reachable sibling & $x^2{=}4$: REACHABLE, no certificate & $x{\cdot}y{=}46341^2$: REACHABLE, no certificate \\
Negative controls & \multicolumn{2}{l@{}}{both rejected (bogus proofs vs.\ a satisfiable CNF)} \\
Wall time & 0.36\,s & 4.85\,s \\
\bottomrule
\end{tabular}
\end{table}
}

\begin{figure}
\centering
\begin{tikzpicture}[x=1cm, y=1cm]
\node[lang] (b)    at (0.5, 0)    {btor2};
\node[hub]  (smt)  at (2.2, 0)    {smtlib};
\node[tcb]  (bb)   at (4.5, 0)    {bitwuzla BV$\to$CNF};
\node[art]  (cnf)  at (6.9, 0)    {CNF};
\node[tool] (cad)  at (6.7, -1.1) {cadical};
\node[art]  (drat) at (5.0, -1.1) {DRAT};
\node[tool] (dt)   at (2.9, -1.1) {drat-trim (untrusted)};
\node[art]  (lrat) at (0.7, -1.1) {LRAT};
\node[tcb]  (clpr) at (1.6, -2.2) {cake\_lpr (verified)};
\node[art]  (ok)   at (5.6, -2.2) {\texttt{s VERIFIED UNSAT}};
\draw[pred] (b) -- (smt);
\draw[chk] (smt) -- (bb);
\draw[chk] (bb) -- (cnf);
\draw[chk] (cnf) -- (cad);
\draw[chk] (cad) -- (drat);
\draw[chk] (drat) -- (dt);
\draw[chk] (dt) -- (lrat);
\draw[chk] (lrat) -- (clpr);
\draw[chk] (clpr) -- (ok);
\end{tikzpicture}
\caption{The pipeline \Cref{tab:proved} evaluates: the query is
bit-blasted to CNF, cadical's DRAT refutation is elaborated by
\texttt{drat-trim} --- an untrusted step that can only fail the
re-check, never fake a pass --- and \texttt{cake\_lpr}, the formally
verified checker, re-validates the LRAT against the CNF from scratch.
Shaded is what stays in the verdict's \tcb: the bit-blast and the
verified checker; the three agreeing engines and the elaborator are
discovery devices.}
\label{fig:cert}
\end{figure}

\Cref{tab:proved} inhabits the ledger's certified row
(\S\ref{sec:tcb}), twice --- via the pipeline of \Cref{fig:cert} --- once at each end of the difficulty
scale. Both questions are pair-derived and input-driven: eBPF programs
whose \texttt{CALL} models helper returns as free inputs. The first
squares one input and asks whether $x^2 = 3$; no square exists modulo
$2^{64}$, so the claim is universal over all inputs --- and its
refutation is essentially unit propagation on the low bits (the
elaborated LRAT is 18 bytes). The second masks and offsets two inputs
into $[2, 2^{16}{+}1]$, multiplies them, and asks whether the product
can be $2^{31}-1$: the Mersenne prime admits no factorization with
both factors in range, and refuting a $16{\times}16$-bit
multiplication is certificate checking at scale --- a 2.8\,MB DRAT
elaborating to an 11.8\,MB LRAT, re-validated in seconds. In both,
three engines corroborate \textsc{unsat}, bitwuzla bit-blasts the
query to CNF, cadical emits a DRAT refutation, \texttt{drat-trim}
elaborates it to LRAT --- an \emph{untrusted} step: a wrong
elaboration can only fail the re-check, never fake a pass --- and
\texttt{cake\_lpr}, the \emph{formally verified} CakeML LRAT checker
(soundness machine-proved down to the binary; here built natively
from its compiler-generated ARMv8 assembly), re-validates the proof
against the CNF from scratch.
The verdicts' \tcb\ is exactly the ledger's certified row: the
bit-blaster's BV$\to$CNF step and a verified checker, with all three
deciding engines and the elaborator demoted to discovery devices ---
the solve step's residue; the route hops that produced the artifact
remain accounted per the ledger. Both certified systems are also
corroborated below the whole certificate pipeline: fifty seeded random
helper streams replayed through the strict BTOR2 interpreter fire no
\texttt{bad} (recorded in the run data). The
reachable siblings ($x^2 = 4$; $x{\cdot}y = 46341^2$) correctly
decline to certify.

Wiring this exhibit produced one more architecture-caught defect,
live: the checker \emph{adapter} matched the substring
\texttt{VERIFIED} in the checker's output --- and \texttt{drat-trim}
reports failure as \texttt{s NOT VERIFIED}, which contains it. Every
check, sound or bogus, reported success. The bug could not surface on
hosts without the checker (the call raised before parsing) nor on real
certificates (which do verify); it fell to the \emph{negative
control} --- a bogus refutation of a satisfiable CNF that must fail
and did not. The fix parses the exact status line, and the control is
now a permanent test. This is the paper's own instrument audited by
the discipline it describes (cf.\ \S\ref{sec:bugs}), and a caution we
commend to anyone wiring a proof checker: \emph{a checker adapter
without a negative control is itself unchecked}. The verified
checker's adapter was written under that rule from the start --- wisely,
since \texttt{cake\_lpr} exits 0 even when checking fails; only the
exact \texttt{s VERIFIED UNSAT} line signals success, and both
checkers' negative controls are permanent tests. One honest subtlety
the exercise surfaced: for this instance the refutation is
propagation-dominated, so mutating single proof lines is not a valid
negative control (the checker legitimately completes the derivation);
the satisfiable-formula control is the sound one, since no valid
refutation of a satisfiable formula exists.

\subsection{Defects the architecture caught}
\label{sec:bugs}


The pairs were written by independent LLM agents with no human semantic
review (\S\ref{sec:agents}); the architecture's checks were the only
gate. Mining the full history (675 commits across all refs at the
2026-07-03 mining pass; $\sim$21
runtime-code fix commits) plus the three defects the conjoined-coverage
measurement caught on its first run (\S\ref{sec:eval-capability}), the
probe-operand deficiency the fault-injection experiment exposed, and
the compare-vector gap only the common-mode round could expose
(\S\ref{sec:eval-escape})
yields 24 recorded defect incidents: 15
caught by the architecture's cross-checks rather than by ordinary unit
tests, and 7 more --- including the checker-adapter defect of
\S\ref{sec:eval-proved}, a probe that could never have exercised its
own square, probes whose operands underdetermined their constructs,
and benchmark compare vectors on which signedness is invisible
--- by the audits and controls the architecture
imposes on its own instruments, and the remaining two otherwise (one
by machine-vs-human label disagreement, one only partially
attributed). Read this as case-study evidence, not a
hit rate: the catalog is mined from our own history, it has no
denominator (the blind-spot incident below proves the escape rate is
nonzero), and no baseline says what a conventional test regime would
have caught. \Cref{tab:bugs} lists ten
representative architecture-caught defects, each localized ---
component, step, observable --- as \Cref{cor:localization} promises.

\begin{table}
\caption{Defects caught by the architecture (selection; full list with
commit-level evidence in the artifact). ``Caught by'' names the check
that fired --- the cross-check layers of \S\ref{sec:tcb}, plus the
platform's own audits and controls.}
\label{tab:bugs}
\footnotesize
\begin{tabular}{@{}>{\raggedright\arraybackslash}p{0.42\linewidth}>{\raggedright\arraybackslash}p{0.17\linewidth}>{\raggedright\arraybackslash}p{0.31\linewidth}@{}}
\toprule
\textbf{Defect} & \textbf{Component} & \textbf{Caught by} \\
\midrule
bv64 sort hardcoded for \texttt{and/or/xor} over bv1 operands ---
malformed BTOR2 that real solvers tolerated &
RISC-V $\to$ BTOR2 translator &
strict in-process BTOR2 evaluator during the square's alignment walk \\
unsigned loads (\texttt{LBU/LHU/LWU}) missing zero-extension to bv64 &
RISC-V $\to$ BTOR2 translator &
solver leg: Z3 sort rejection on the corpus \\
\texttt{init} value nid emitted after the state nid --- rejected by every
conformant tool, tolerated by the one wired solver; affected every
stateful pair &
shared BTOR2 emitter &
solver corroboration: wiring the 2nd/3rd engines (btormc, pono) made the
latent malformation manifest \\
\texttt{local.get} wrote a bv32 node into the bv64 stack array &
Wasm $\to$ BTOR2 translator &
solver leg: Z3 sort mismatch at BMC \\
carry-back silently zero-filled registers the solver witness omits
(init-pinned states) --- every pinned task's entry state misread &
witness carry-back $\carry$ &
witness-replay anchor audit (mismatch chain) \\
default BMC bound mapped ``no violation within $k$'' to
\textsc{unreachable} --- a false negative at step 93 &
verdict semantics &
oracle vs.\ corpus label \\
ELF header bytes decoded as instructions --- interpreter tolerated it,
the square did not &
shared ELF loader &
square oracle on the first real gcc binary \\
BTOR2 \texttt{constraint} lines silently dropped in the SMT-LIB
unrolling --- an under-constrained, soundness-leaking encoding &
BTOR2 $\to$ SMT-LIB bridge &
coverage probe: driving the bridge to its 56/56 spec inventory exposed
the hole \\
\texttt{JUMP/JUMPI} underflow-halt row diverged on \texttt{pc} &
EVM $\to$ BTOR2 translator &
square step alignment during widening \\
\texttt{INT\_MIN / -1}: C-undefined vs.\ RV64-defined divergence, plus a
CBMC false positive on the same task &
(real route divergence) &
C-vs-ISA differential branch, with a UB-vs-fault classifier
adjudicating \\
\bottomrule
\end{tabular}
\end{table}

Three second-order observations are worth as much as the table. First,
\emph{the checks caught their own instruments} --- seven of the 24
incidents are check-of-the-check repairs; three representatives: the Sail
ISA differential was once passing vacuously (comparing empty traces),
the in-image suite caught a witness generator emitting empty witnesses,
and the DRAT checker adapter accepted every outcome until a negative
control ran (\S\ref{sec:eval-proved}) --- all found because the
harnesses carry positive \emph{and negative} controls, and all three
are why \Cref{asm:adequacy} is stated as an assumption with empirical
discharge rather than as a fact. Second, the history contains a genuine
\emph{blind spot} of the square, exactly of the class \Cref{lem:agree}
predicts: the RISC-V interpreter \emph{and} the RISC-V$\to$BTOR2
translator both silently mis-decoded \texttt{MUL} (\texttt{funct7}
\texttt{0x01}) as
\texttt{ADD} --- both legs agreed, the square was blind, and the defect
was found by manual audit. The platform's structural answer is
diversity: the decoder is now anchored against 610 Sail-emitted words,
and unknown \texttt{funct7} hard-aborts on both sides. Third, in one incident
three existing unit tests had been written \emph{against the buggy
behavior} (an evaluator masking every array store to 8 bits): tests
written by a component's author codify the author's misreading;
cross-checks against independently derived semantics do not. The
architecture also \emph{disconfirmed} one reported bug (a claimed
slice-width defect shown to be a stale-cache artifact), and its
differential campaigns --- 300 seeded RV64IMC programs vs.\ the Sail
simulator, a 10-seed Csmith campaign vs.\ native gcc, 463 reference
cases --- completed with zero divergences against harnesses whose
positive controls confirm they can fail.

\paragraph{Two incidents in full}
The catalog's two most argument-bearing entries deserve their
narratives. First, the \emph{blind spot}: the RISC-V interpreter
\emph{and} the \texttt{riscv-btor2} translator mis-decoded
\texttt{MUL} (\texttt{funct7} \texttt{0x01}) as \texttt{ADD} ---
identically. Both legs of every square agreed, so the square was
structurally blind, and the defect was found by manual audit during
widening, not by the architecture. This is the residual failure mode
the calculus itself predicts --- \Cref{lem:agree}'s ``both unfaithful
with identical projected error'' --- landing not between two
translators but between a translator and the component
\Cref{thm:existential} crowns as the sole residual \tcb. Both sides
now hard-abort on unknown \texttt{funct7}, and the structural answer
is \emph{external} anchoring: the decoder is validated against 610
Sail-emitted instruction words and against an independently written
RV64C decompressor, construct by construct --- and the common-mode
round of \S\ref{sec:eval-escape} now measures this incident's whole
class directly. Second, the \emph{common
mode}: the shared BTOR2 emission library declared states before their
constant \texttt{init} values, so every emitted \texttt{init} line
was malformed --- in every BTOR2-emitting pair at once, both branches
of every RISC-V question included --- and the one solver then wired
(Z3, through the bridge) tolerated the malformation. Adding the
second and third engines turned it into hard rejections: solve-step
corroboration, not the squares, caught it (incident I3). Between them
the two incidents mark the exact boundary of \Cref{asm:diversity}:
shared code and shared misreadings are what agreement cannot see,
which is why the ledger keeps the emission library and the endpoint
interpreters inside every branch verdict's \tcb\ --- and why the
disjoint stacks of \Cref{fig:branch} and the external differentials
exist.

\subsection{An escape-rate estimate for the gate}
\label{sec:eval-escape}

\begin{table}
\caption{Escape-rate experiment: seeded semantic
mutations of the riscv-btor2 emissions (uniform rules model
systematic mis-lowerings, site rules single-site defects;
operand swaps only on non-commutative operators; rules that
change no probe artifact are excluded from the denominator),
each run through the architecture's gates in order. ``Square''
is the conjoined probe suite (\S\ref{sec:eval-capability}),
``branch'' the authored solver questions against the intact Sail
route, ``bench'' the compliance-derived ground-truth questions
(\S\ref{sec:eval-bench}). Zero escapes means the seeded mutation
families --- catalog-derived, with probes hardened against this
experiment's own earlier rounds --- are now covered; it is not an
estimate that the escape rate is zero (\S\ref{sec:eval-escape}).}
\label{tab:escape}
\footnotesize
\begin{tabular}{@{}lr@{}}
\toprule
Applicable mutants & 55 \\
\midrule
Caught by the square suite & 51 \\
Caught by branch agreement & 0 \\
Caught by the derived benchmark & 4 \\
Escaped all three gates & \textbf{0} \\
\bottomrule
\end{tabular}
\end{table}

\begin{figure}
\centering
\begin{tikzpicture}[x=1cm, y=1cm]
\node[art]  (m)  at (2.4, 0)    {55 applicable mutants of the
  \texttt{riscv-btor2} emissions};
\node[tool] (g1) at (2.4, -1.0) {square suite (conjoined probes)};
\node[tool] (g2) at (2.4, -2.0) {authored branch questions};
\node[tool] (g3) at (2.4, -3.0) {derived benchmark (78 questions)};
\node[art]  (e)  at (2.4, -4.0) {\textbf{escaped: 0}};
\node[art] (k1) at (6.2, -1.0) {51 killed};
\node[art] (k2) at (6.2, -2.0) {0 killed};
\node[art] (k3) at (6.2, -3.0) {4 killed};
\draw[chk] (m) -- (g1);
\draw[chk] (g1) -- node[elbl, right] {4 survive} (g2);
\draw[chk] (g2) -- node[elbl, right] {4 survive} (g3);
\draw[chk] (g3) -- (e);
\draw[chk] (g1) -- (k1);
\draw[chk] (g2) -- (k2);
\draw[chk] (g3) -- (k3);
\end{tikzpicture}
\caption{The escape-rate experiment (\Cref{tab:escape}) as a gate
stack: every applicable mutant runs the gates in the order the
platform applies them, and each gate's kill count is measured. The
authored branch questions catch nothing the square suite missed ---
the derived benchmark is the differentiator (a finding in itself).}
\label{fig:escape}
\end{figure}

The catalog above is mined from history and has no denominator; the
MUL/ADD blind spot proves the gate's escape rate is nonzero.
\Cref{tab:escape} measures it (\Cref{fig:escape} draws the gate
stack): seeded semantic mutations of the
\texttt{riscv-btor2} translator's emissions --- uniform rules modeling
systematic mis-lowerings (every \texttt{sext} emitted as \texttt{uext},
incident I2's family) and site rules modeling single-site defects
(incident I1's family); operand swaps only on non-commutative
operators, so no trivially-equivalent mutants; rules that change no
probe artifact excluded from the denominator --- each run through the
architecture's gates in the order the platform applies them. The
experiment earned its keep before its final numbers did: in its first
round, both \texttt{srl}$\to$\texttt{sra} mutants escaped \emph{all
three} gates (36 of 51 caught at the square layer), because every
ALU probe ran its construct on degenerate all-zero operands, on which
logical and arithmetic shifts agree --- the ``96/96 conjoined'' verdict
was faithfulness \emph{on the probes}, and the probes underdetermined
the constructs. Hardening the probes with mixed-sign operands moved 50
of 53 catches to the square layer but let \texttt{ult}$\to$\texttt{ulte}
escape --- strictness is observable only at \emph{equal} operands ---
so the probes now run an equal-operand and a mixed-sign instance per
construct (incident I23; the ratchet kept every previously covered
construct covered throughout). In the final round, 51 of 55 applicable
mutants die at the square suite, the remaining 4 at the
compliance-derived benchmark, and none escape. Three honest notes:
zero escapes means this \emph{mutation family} is now covered, not
that the escape rate is zero;
the authored branch questions caught nothing in any round (the derived
benchmark subsumes them as a differentiator, a finding in itself); and
the denominator is one translator's emission space, not the platform.

\paragraph{The common-mode round}

\begin{table}
\caption{Common-mode (both-leg) fault injection:
the same misreading injected into BOTH the reference interpreter
and the riscv-btor2 translator --- the MUL/ADD class of
\S\ref{sec:bugs}, which single-leg mutation cannot model. The
square suite is structurally blind on every mutant (both legs
wrong identically), and so is expected-based grading in the
poisoned world, where the benchmark's ground truth is derived by
the mutated interpreter itself. What catches every mutant are the
gates \emph{outside} the shared misreading: the independently
derived Sail route disagreeing (cross-route), with the external
Sail-simulator differential --- run as a parallel column --- also
catching every mutant.}
\label{tab:common}
\footnotesize
\setlength{\tabcolsep}{2pt}
\begin{tabular}{@{}lllll@{}}
\toprule
Misreading (both legs) & Square & Poisoned & Caught by & Diff.\ \\
\midrule
MUL as ADD & blind & blind & cross-route (rv64um-mul) & \checkmark \\
SUB as ADD & blind & blind & cross-route (rv64ui-arith) & \checkmark \\
SRA as SRL & blind & blind & cross-route (rv64ui-shift) & \checkmark \\
SLT as SLTU & blind & blind & Sail differential & \checkmark \\
AND as OR & blind & blind & Sail differential & \checkmark \\
XOR as AND & blind & blind & cross-route (rv64ui-logic) & \checkmark \\
\bottomrule
\end{tabular}
\end{table}

Single-leg mutation cannot model the MUL/ADD class where both legs of
the square are wrong \emph{identically} --- so a second experiment
injects exactly that (\Cref{tab:common}): six shared misreadings,
each applied as the same uniqueness-checked source patch to \emph{both}
the reference interpreter and the \texttt{riscv-btor2} translator
(in-memory shadows; the repository is untouched). The structural
predictions hold, measured: the square suite is blind on all six (both
legs agree on every probe), and so is expected-based grading in the
\emph{poisoned world}, where the benchmark's ground truth is derived
by the mutated interpreter itself. What catches the mutants are the
gates \emph{outside} the shared misreading ---
\Cref{fig:escape}'s stack read as ordered by independence, each gate
derived farther from the shared code than the last: four die at
cross-route
disagreement with the Sail-derived branch, and two survive even that
--- their misreadings never surface in the per-register values the
derived questions sample --- falling only to the external
Sail-simulator differential, which (run as a parallel column) catches
all six. The round also caught its own instrument, live: in its first
run, \texttt{slt-as-sltu} \emph{escaped every gate} --- the slice's
compare vectors were all same-sign, on which signed and unsigned
comparison agree, so neither the derived questions nor the
differential's executed streams could see the difference (incident
I24, the slice-level instance of I23's probe lesson, and the
platform's first measured full-stack escape). The ratchet's answer
was upstream-faithful --- \texttt{rv64ui-slt}'s mixed-sign and
equal-operand vectors added to the slice --- after which the round
completes with zero escapes. The honest reading mirrors
\Cref{asm:diversity}'s scope exactly: against shared misreadings every
gate built from the shared reading is blind \emph{by construction},
corroboration is only as
strong as the most independently derived anchor, and that
anchor's vectors are themselves an instrument the discipline must
audit.

\subsection{The other direction: an LLM plays the platform}
\label{sec:eval-player}

\begin{table*}
\caption{The LLM-player experiment: 12
ground-truthed questions, two arms. Arm A (unaided) answers from
reasoning alone; arm B answers \emph{via the platform}. A
\checkmark{} means the verdict matches the platform-established
ground truth (R = reachable, U = unreachable within the stated
bound). Both arms are correct on all 12 --- the contrast is the
final column: every arm-B verdict carries a machine-checked
evidence artifact, where arm A rests on the model's say-so (all
arm-A answers were reported at high confidence). Full question
texts, per-run transcripts, and verbatim evidence bases:
\texttt{results/llm\_player/} in the artifact.}
\label{tab:player}
\footnotesize
\begin{tabular}{@{}llcccl@{}}
\toprule
Q & Question & Truth & Arm A & Arm B & Arm-B evidence artifact \\
\midrule
R1 & RISC-V loop: $x_1{=}15$ within $k{=}25$ & R & \checkmark & \checkmark & two-route agreement \\
R2 & RISC-V loop: $x_1{=}16$ within $k{=}25$ & U & \checkmark & \checkmark & two-route agreement (bounded) \\
R3 & \texttt{srli} (logical) $\mathtt{0xF..F8} \gg 60$: $x_1{=}15$ & R & \checkmark & \checkmark & two-route agreement \\
R4 & \texttt{srai} (arithmetic) $\gg 60$: $x_1{=}15$ & U & \checkmark & \checkmark & two-route agreement (bounded) \\
R5 & \texttt{lb} (sign-ext.) of byte \texttt{0x84}: $x_3{=}$\texttt{0x84} & U & \checkmark & \checkmark & two-route + self-devised positive control \\
R6 & \texttt{lbu} (zero-ext.) of byte \texttt{0x84}: $x_3{=}$\texttt{0x84} & R & \checkmark & \checkmark & two-route agreement \\
E1 & eBPF $x{\cdot}x \bmod 2^{64} = 3$ & U & \checkmark & \checkmark & \tprov: LRAT re-validated by cake\_lpr \\
E2 & eBPF $x{\cdot}x = 4$ & R & \checkmark & \checkmark & z3 + native btormc; witness replayed \\
E3 & $x{\cdot}y = 1073741789$, $x,y \in [2, 2^{16}{+}1]$ & U & \checkmark & \checkmark & \tprov: 17\,MB LRAT via cake\_lpr \\
E4 & $x{\cdot}y = 2147766287$, same range & R & \checkmark & \checkmark & witness replay; factor pair exhibited \\
P1 & Python assert: $y{=}16$ violable & R & \checkmark & \checkmark & SMT model re-checked + CPython replay \\
P2 & Python assert: $y{=}15$ violable & U & \checkmark & \checkmark & \textsc{unsat}; per-run-checked \tpred\ route \\
\bottomrule
\end{tabular}
\end{table*}

\begin{figure}
\centering
\begin{tikzpicture}[x=1cm, y=1cm]
\node[art] (q) at (3.6, 0)
  {reachability question + platform-established ground truth};
\node[tool, align=center] (aa) at (1.4, -1.2)
  {arm A: unaided LLM\\(question text only, no tools)};
\node[tool, align=center] (ab) at (5.9, -1.2)
  {arm B: LLM player\\(same question + repository)};
\node[hub, align=center] (plat) at (5.7, -2.4)
  {\hg: routes, squares,\\solvers, checkers};
\node[art, align=center] (va) at (1.4, -3.5)
  {verdict\\(say-so, high confidence)};
\node[art, align=center] (vb) at (5.9, -3.5)
  {verdict + machine-checked\\evidence artifact};
\node[art] (g) at (3.6, -4.5) {graded against ground truth};
\draw[chk] (q) -- (aa);
\draw[chk] (q) -- (ab);
\draw[chk] (ab) -- (plat);
\draw[chk] (plat) -- (vb);
\draw[chk] (aa) -- (va);
\draw[chk] (va) -- (g);
\draw[chk] (vb) -- (g);
\end{tikzpicture}
\caption{The two-arm player protocol of \Cref{tab:player}: one fresh,
context-free agent per question per arm, no retries. Both arms are
graded against the same platform-established ground truth; what
differs by construction is what the verdict \emph{rests on} --- arm B
can take no unchecked step, so every answer carries the evidence
artifact of \Cref{tab:player}'s final column.}
\label{fig:player}
\end{figure}

The introduction's two-directional experiment has, until here, reported
only its build half. This subsection is a first, deliberately small
evaluation of the \emph{player} half: 12 reachability questions with
platform-established ground truth (6 reachable / 6 unreachable, spanning
RISC-V signedness and load-width traps, eBPF square-residue and
bounded-factorization questions, and Python assert-violability), posed
to fresh, context-free frontier-LLM agents under two protocols
(\Cref{fig:player}). Arm A
(unaided) received only the question text, instructed to answer by
reasoning with no tools. Arm B (player) received the same question plus
the repository and the head-construction snippet, instructed to decide
\emph{via the platform} and report the verdict with its evidence basis.
One agent per question per arm, no retries; \Cref{tab:player} lists the
questions and both arms' verdicts with each arm-B answer's evidence
artifact; prompts, ground truths,
per-run outputs, and grading are in the artifact
(\texttt{results/llm\_player/}).

Both arms scored 12/12. That is the honest headline, and it reframes
what the platform buys at this question scale: not raw accuracy ---
the unaided frontier model simulated the loops, caught both signedness
traps, and factored 2147766287 by Fermat's method in its head --- but
the \emph{evidence class} of the answer. Every arm-A verdict rests on
the model's say-so (all were reported at high confidence, so
confidence would not have flagged a wrong one); every arm-B verdict
carries a machine-checked artifact: two-route agreement on the RISC-V
questions, hedged explicitly as bounded-$k$ verdicts resting on route
fidelity; witness replays through the strict interpreters on the
reachable eBPF/Python questions; and tier-raises to \tprov\ on the
unreachable eBPF questions. The sharpest contrast is the
bounded-factorization question: arm A answered correctly by
\emph{recalling} that 1073741789 is the largest prime below $2^{30}$
--- opaque memory, exactly the evidence the calculus cannot grade ---
while arm B produced a 17\,MB LRAT certificate re-validated by the
formally verified checker, with the trusted base named. One player
even devised its own positive control unprompted (flipping the load
question's target to the sign-extended value and confirming
\textsc{reachable} on both routes). Four limitations, stated plainly:
the protocol scripted which platform entry points to use (the players
executed and interpreted; they did not discover the platform cold);
the subject model is from the same family that built the platform;
the grading is partly circular --- ground truth is
platform-established and arm B answers \emph{via} the platform, so arm
B can fail essentially only by mis-operating or misreading the tools;
and
12 questions at this size cannot separate the arms on correctness ---
questions hard enough that unaided reasoning actually fails are the
obvious next step.\ifarxiv{} (Both protocol limitations are removed,
and the separation produced, by the post-snapshot rerun of
\S\ref{sec:eval-player2}.)\fi

\subsection{Performance, determinism, and scale}
\label{sec:eval-perf}

\begin{table}
\caption{Route cost for the RISC-V loop question
($k{=}25$ unrolling): whole-route translation, cold vs.\ warm (the
content-addressed cache of Proposition~\ref{prop:cache}), Z3 decide
time, byte-determinism (twice-and-diff), and artifact size.}
\label{tab:perf}
\footnotesize
\begin{tabular}{@{}lrrrcr@{}}
\toprule
Route & cold (ms) & warm (ms) & decide (ms) & det. & artifact \\
\midrule
RISC-V, direct & 5 & $<$0.1 & 55 & \checkmark & 201\,kB \\
RISC-V, via Sail & 6 & 0.1 & 44 & \checkmark & 209\,kB \\
\bottomrule
\end{tabular}
\end{table}

{\setlength{\tabcolsep}{4pt}
\begin{table}
\caption{Scalability probes on two axes. Left:
the summation-loop question at growing bounds ($x_1 = N(N{+}1)/2$
after $N$ iterations; reachable at every size), whole-route
translation and Z3 decide time. Right: the certified tier on the
bounded-factorization family at growing factor widths ---
certificate sizes grow from tens of kilobytes to megabytes and cake\_lpr
re-validates each.}
\label{tab:scale}
\footnotesize
\begin{tabular}{@{}lllrr@{}}
\toprule
Loop question & verdict & translate (s) & decide (s) & artifact \\
\midrule
$N{=}5$, $k{=}30$ & reachable & 0.01 & 0.04 & 240\,kB \\
$N{=}20$, $k{=}105$ & reachable & 0.02 & 0.10 & 830\,kB \\
$N{=}50$, $k{=}255$ & reachable & 0.04 & 0.27 & 2054\,kB \\
$N{=}100$, $k{=}505$ & reachable & 0.07 & 0.55 & 4093\,kB \\
\bottomrule
\end{tabular}

\medskip

\begin{tabular}{@{}llllr@{}}
\toprule
Certified question & outcome & DRAT & LRAT & time (s) \\
\midrule
8-bit factors, target 65521 & certified & 27.3\,kB & 24.7\,kB & 0.2 \\
12-bit factors, target 16777213 & certified & 0.2\,MB & 64.1\,kB & 0.3 \\
16-bit factors, target 2147483647 & certified & 2.8\,MB & 11.8\,MB & 4.9 \\
\bottomrule
\end{tabular}
\end{table}
}

Translation is not the cost center\ifarxiv{} (\Cref{tab:perf})\fi: whole-route translation of the loop
question ($k{=}25$ unrolling, $\sim$200\,kB of SMT-LIB) is
$\sim$5\,ms cold and a sub-millisecond content-addressed cache hit warm
(\Cref{prop:cache} cashed in), against a Z3 decide in the tens of
milliseconds on either
route; the
twice-and-diff determinism check passes on both routes. The numbers say
the honest overhead of the whole discipline --- checking every hop,
re-running for determinism, replaying witnesses --- is small against
the solver call it wraps. \Cref{tab:scale} probes where the pipeline
stands as questions grow, on two axes: unrolling the summation loop to
$N{=}100$ ($k{=}505$, a 4\,MB artifact) keeps whole-route translation
under 0.1\,s and the decide under 1\,s, growing roughly linearly; and
the certified tier survives the bounded-factorization family from
propagation-size to search-size --- certificates growing from 25\,kB to
an 11.8\,MB LRAT, each re-validated by the verified checker in seconds.
Neither axis reaches industrial scale, but both cross the region where
``toy'' criticisms live: the pipeline's costs grow smoothly, with no
cliff through the measured range.

\ifarxiv
\subsection{Post-snapshot: constraint enforcement across engines}
\label{sec:eval-constraint}

\begin{table}
\caption{The constrained corpus: nine authored
BTOR2 systems with by-construction ground truth
(5 reachable / 4 unreachable),
each decided natively (btormc, canary-controlled bounded
exhaustion), bridged (the per-frame encoding, z3), and
corroborated by the shared evaluator --- witness replay on the
reachable rows (both the bridged model's carry-back and btormc's
\texttt{.wit}), seeded no-bad runs on the unreachable rows
(\textsc{r} = reachable, \textsc{u} = unreachable within the
bound). Below the rule, the structural controls; the first keeps
the historical global constraint reading as an instrument and
measures what it masks. Post-snapshot measurement at commit
\texttt{758b1f2} (2026-07-15).}
\label{tab:constraint}
\footnotesize
\setlength{\tabcolsep}{2pt}
\begin{tabular}{@{}lrllll@{}}
\toprule
System & $k$ & Truth & Native & Bridged & Evaluator \\
\midrule
\texttt{valid-prefix-reach} & 5 & \textsc{r} & \textsc{r} & \textsc{r} & replay \checkmark \\
\texttt{two-guards-reach} & 6 & \textsc{r} & \textsc{r} & \textsc{r} & replay \checkmark \\
\texttt{late-window-reach} & 8 & \textsc{r} & \textsc{r} & \textsc{r} & replay \checkmark \\
\texttt{vacuous-constraint} & 4 & \textsc{r} & \textsc{r} & \textsc{r} & replay \checkmark \\
\texttt{constrained-input-reach} & 2 & \textsc{r} & \textsc{r} & \textsc{r} & replay \checkmark \\
\texttt{constraint-blocked-input} & 3 & \textsc{u} & \textsc{u} & \textsc{u} & no bad \checkmark \\
\texttt{truncated-before-bad} & 12 & \textsc{u} & \textsc{u} & \textsc{u} & no bad \checkmark \\
\texttt{invalid-row-bad} & 5 & \textsc{u} & \textsc{u} & \textsc{u} & no bad \checkmark \\
\texttt{bound-scoped-unreach} & 1 & \textsc{u} & \textsc{u} & \textsc{u} & no bad \checkmark \\
\midrule
\multicolumn{3}{@{}l}{global reading, \texttt{valid-prefix-reach}} & \multicolumn{3}{l@{}}{\textsc{unreach} --- the reach masked \checkmark} \\
\multicolumn{3}{@{}l}{vacuous constraint vs.\ none} & \multicolumn{3}{l@{}}{verdicts equal \checkmark} \\
\multicolumn{3}{@{}l}{blocked input, constraint removed} & \multicolumn{3}{l@{}}{flips to \textsc{reach} \checkmark} \\
\bottomrule
\end{tabular}
\end{table}

The first of the exhibit families to graduate to a measurement ---
stamped with its own commit in \Cref{tab:constraint}'s caption; unlike
every table above, its numbers postdate the snapshot. The shared BTOR2
evaluator enforces environment \texttt{constraint}s with the per-frame
reading (per-row observables; a violating row ends the run; a
\texttt{bad} counts only on a constraint-valid row), and wiring that
enforcement surfaced a latent bridge defect of independent interest:
asserting constraints \emph{globally} over the unrolling masks a bad
reached on a valid prefix before a later, inevitable violation --- and
disagrees with the native checkers on exactly that system.
\Cref{tab:constraint} turns that episode into a benchmark: nine
authored systems whose ground truth holds by construction --- reaches
on valid prefixes (single and double guards, short and long windows,
through a constrained input), and unreaches that exist only because of
a constraint, only beyond a truncation, only on an invalid row, or
only beyond the bound --- each decided natively (btormc), bridged
(the per-frame encoding, z3), and corroborated by the shared
evaluator: witness replay on every reachable row (both the bridged
model's carry-back and btormc's \texttt{.wit} replayed through the
strict interpreter), seeded no-bad runs on every unreachable row. All
nine agree with ground truth in both polarities, and the defective
global reading, kept as an instrument, still masks the valid-prefix
reach --- the permanent negative control, beside the additive one (a
vacuous constraint moves no verdict) and the blocking one (removing
the constraint flips the blocked system back). Three honesty notes:
the corpus is authored and deliberately tiny (nine systems, one
operator family), constructed to pin the semantic corners of the
\texttt{constraint} directive rather than to sample a population; the
evaluator \emph{corroborates} --- replay and seeded runs --- it never
decides; and agreement here shares the BTOR2 emission surface all
three deciders consume, so a shared misreading of the directive itself
would pass --- the mutation discipline of \S\ref{sec:eval-escape}
applies and is future work for this family
(\texttt{tools/constraint\_corpus.py}; the experiment is also a
permanent test).

\subsection{Post-snapshot: cost calibration for the books}
\label{sec:eval-costs}

\begin{table}
\caption{Cost calibration for the books:
5 repeated capped route-grader runs over the
RISC-V head (fresh subprocess and ledger file each, so the
translation cache swallows no record; the grader's own probe
cap applies). Top: pooled per-hop cost profiles --- translate
and square-oracle (o) medians --- and the bridged decide per
engine. Middle: per-repetition route translate totals (median
and worst repetition) and, per route, in how many repetitions
the annotated report marked it dominated --- the mark's
coherence and stability are the measurement; both routes stay
listed at equal assurance and direction, no scalar ranking.
Bottom: the honesty invariants, executable.
Timings are host-tagged (single machine); the corpus is the
grader's capped slice, so the numbers calibrate the
\emph{instrument}, not the platform's workload. Post-snapshot
measurement at commit \texttt{b6e1d89}
(2026-07-15).}
\label{tab:costs}
\footnotesize
\setlength{\tabcolsep}{3.5pt}
\begin{tabular}{@{}lrrrr@{}}
\toprule
Hop / engine & $n$ & median (ms) & p90 (ms) & o (ms) \\
\midrule
\texttt{btor2-smtlib} & 235 & 0.543 & 0.733 & --- \\
\texttt{riscv-btor2} & 120 & 0.550 & 0.715 & 1.439 \\
\texttt{riscv-sail} & 120 & 0.016 & 0.025 & 0.083 \\
\texttt{sail-btor2} & 120 & 0.558 & 0.727 & 1.465 \\
decide[\texttt{z3}] & 10 & 3.305 & 6.870 & --- \\
\midrule
Route (translate total) & reps & median & max & marked \\
\midrule
direct (\texttt{riscv-btor2}) & 5 & 1.084 & 1.109 & 1/5 \\
via Sail (3 hops) & 5 & 1.102 & 1.144 & 4/5 \\
\midrule
\multicolumn{4}{@{}l}{mark coherent (never against the measured totals)} & \checkmark \\
\multicolumn{4}{@{}l}{finding: totals tie within spread --- no signal in the mark} & \\
\multicolumn{4}{@{}l}{empty ledger: unmeasured (never zero), no dominance} & \checkmark \\
\multicolumn{4}{@{}l}{one route measured: still no dominance} & \checkmark \\
\bottomrule
\end{tabular}
\end{table}

The books' cost side, calibrated (\Cref{tab:costs}). The instrument is
the CI entrypoint itself --- the capped route-grader --- run five
times over the RISC-V head, each repetition a fresh subprocess with
its own ledger file (a warm content-addressed cache translates
nothing, so a warm run would record nothing: the repetition protocol
is part of the calibration). The pooled profiles put numbers on the
platform's cost shape at probe scale: the two lowering hops and the
bridge each translate in $\sim$0.5\,ms median, the
RISC-V$\to$Sail front is thirty times cheaper, square-oracle checks
run $\sim$1.4\,ms, and the bridged decide dominates everything at
$\sim$3\,ms median --- consistent with the snapshot's finding that
translation is not the cost center (\S\ref{sec:eval-perf}). The
finding is in the middle block: on this host the two routes'
translate totals \emph{tie within the repetition spread} (medians
$\sim$1.1\,ms, $\mu$s apart), and at a tie the report's dominance
mark carries no signal --- it is computed on raw medians with no
noise margin, and across calibration runs it pointed both ways.
What the benchmark certifies instead is the mark's \emph{coherence}
--- pooled and per repetition, a mark never points against the
measured totals --- and the two honesty invariants, run executable:
an empty ledger reads \emph{unmeasured}, never zero, and computes no
dominance; a ledger measuring only one route still computes no
dominance, because dominance needs complete measurement on both
sides. Three honesty notes: the timings are host-tagged and
single-machine by design (profiles never mix machines --- the CI
runner class, where the grader also runs, keeps its own books); the
corpus is the grader's capped probe slice, so these numbers calibrate
the \emph{instrument}, not a workload; and the tie finding names a
concrete next step for the route report --- a declared noise margin
below which no dominance is marked
(\texttt{tools/cost\_calibration.py}; the experiment is also a
permanent test).

\subsection{Post-snapshot: a question campaign for the books}
\label{sec:eval-campaign}

\begin{table}
\caption{The question campaign: 25 authored
questions whose \emph{first failing obstacle is known by
construction} --- the five-obstacle demand taxonomy plus seven
answerable controls spanning every pass mechanism (feasible
routes; an assurance floor met by declared grade; a floor met
by branch corroboration) --- run through the diagnosis against
the live registry with a fresh ledger. ``Targets'' counts the
recommendation board's rows per obstacle: aggregation is per
generation target, so the three cost questions (three sources,
one hub inventory) share one reduction row. Below the rule, the
instrument checks. Post-snapshot measurement at commit
\texttt{e304a40} (2026-07-16).}
\label{tab:campaign}
\footnotesize
\begin{tabular}{@{}lrrr@{}}
\toprule
Obstacle (constructed) & $n$ & diagnosed & targets \\
\midrule
connectivity & 4 & 4/4 & 2 \\
loss & 4 & 4/4 & 4 \\
shape & 4 & 4/4 & 4 \\
cost & 3 & 3/3 & 1 \\
trust & 3 & 3/3 & 3 \\
answerable controls & 7 & 7/7 & --- \\
\midrule
\multicolumn{3}{@{}l}{controls append no demand record} & \checkmark \\
\multicolumn{3}{@{}l}{6 verbatim re-asks add 0 distinct questions} & \checkmark \\
\multicolumn{3}{@{}l}{origins displayed apart (campaign vs.\ organic, 3 rows)} & \checkmark \\
\multicolumn{3}{@{}l}{diagnosis read-only (registry unchanged)} & \checkmark \\
\bottomrule
\end{tabular}
\end{table}

The economy's front half --- diagnosis to recommendation --- measured
(\Cref{tab:campaign}). Twenty-five authored questions are posed
through \texttt{why-not} against the live registry, eighteen of them
constructed so that a \emph{known} obstacle fails first --- routes
that end short of every hub (connectivity), observables no head
projection keeps (loss), question shapes no hub declares (shape),
spent verdicts on feasible routes (cost), and assurance floors that
single-route sources cannot meet (trust) --- beside seven answerable
controls that span every way a question can \emph{pass}: feasible
routes enumerated, a floor met by declared grade
(Python's \texttt{predicted} head), and a floor met by branch
corroboration (both ISAs, and the C spine through its diverse
segment). The diagnosis names the constructed obstacle, first, in all
eighteen cases, with exactly one demand record written per failure;
the seven controls come back answerable and append \emph{nothing} ---
the books never record an answered question. The board then aggregates
the eighteen records to fourteen generation targets, and the
aggregation is per \emph{target}, not per question: the three cost
questions, from three different sources, share the one reduction row
that names the hub's registered abstraction dials. The instrument
checks close the loop: six questions re-asked verbatim add zero
distinct questions (dedup is by question identity); the same question
arriving from an organic session and a synthetic campaign shows both
origins, displayed apart, on its row --- the auditability that keeps
manufactured demand from laundering into evidence (\S\ref{sec:books});
and the diagnosis is read-only against the registry. Two honesty
notes: the corpus is authored, so this measures the diagnosis's
\emph{correctness on constructed ground truth}, not its coverage of
questions players actually ask --- an organic corpus accumulates only
as the platform is played; and the accuracy is of the diagnosis, not
the recommendations --- whether a named target is \emph{worth}
building stays the human judgment the board explicitly declines to
make (\texttt{tools/question\_campaign.py}; the campaign is also a
permanent test).

\subsection{Post-snapshot: the direction axis, measured}
\label{sec:eval-abstraction}

\begin{table}
\caption{The abstraction benchmark. Top, the
authored block: free-set havoc on the \texttt{decoy-$M$}
family (the advisor's zero-loss set) --- the bridge's SMT
artifact (kB) and its decide time (ms), exact$\to$abstracted,
verdicts preserved and the universal transferring. Bottom, the
HWMCC slice: six bit-vector instances from the 2019--2024
corpora, streamed-with-pin (mirror commit
\texttt{57174f5d6f57}, sha256 per instance), each
ingested through the platform's own stack, decided exactly
(btormc, $k{=}20$, \textsc{r}/\textsc{u} =
reachable/unreachable within the bound), then CEGAR-localized:
havoc the advisor's ladder prefix, refine on spurious
counterexamples (each caught by replaying the witness at the
source, havoc inputs filtered), and report where the loop
converges --- states still havocked / total ($n$r = rounds,
$n$s = spurious) --- and that the converged verdict agrees with
the exact one. Post-snapshot measurement at commit
\texttt{b3e1323} (2026-07-16).}
\label{tab:abstraction}
\footnotesize
\setlength{\tabcolsep}{2pt}
\begin{tabular}{@{}lllll@{}}
\toprule
Authored & SMT kB & decide ms & & \\
\midrule
\texttt{decoy-2} & 47.6$\to$6.9 & 68$\to$9 & & \checkmark \\
\texttt{decoy-4} & 92.0$\to$9.6 & 16$\to$7 & & \checkmark \\
\texttt{decoy-8} & 181.9$\to$14.8 & 26$\to$8 & & \checkmark \\
\multicolumn{5}{@{}l}{\texttt{cegar-chain}: 4 spurious, 5 rounds, converges to the free set \checkmark} \\
\multicolumn{5}{@{}l}{\texttt{true-cex}: reachable via the abstraction; source replay confirms \checkmark} \\
\multicolumn{5}{@{}l}{\texttt{sharp-boundary}: a cone state havocked --- spurious; replay refutes \checkmark} \\
\midrule
HWMCC instance & family & exact & localization & \\
\midrule
\texttt{bin-suffix-5} & sosylab'24 & \textsc{u} 0.11s & 0/6 (4r, 3s) & \checkmark \\
\texttt{trex02-1} & sosylab'24 & \textsc{u} 0.11s & 3/7 (1r, 0s) & \checkmark \\
\texttt{benchmark04\_conjunctive} & sosylab'24 & \textsc{u} 0.38s & 0/8 (5r, 4s) & \checkmark \\
\texttt{phases\_2-1} & sosylab'24 & \textsc{r} 0.37s & 0/8 (5r, 4s) & \checkmark \\
\texttt{analog\_estimation} & mann'19 & \textsc{r} 0.01s & 2/5 (1r, 0s) & \checkmark \\
\texttt{adding.5} & beem'19 & \textsc{u} 0.10s & 5/10 (1r, 0s) & \checkmark \\
\bottomrule
\end{tabular}
\end{table}

The direction axis of \S\ref{sec:lax}, exercised end to end
(\Cref{tab:abstraction}). The authored block runs on controlled ground
truth. Havocking the advisor's \emph{free set} --- the states its cone
analysis proves cannot move the question --- preserves the verdict on
every \texttt{decoy-$M$} instance while the bridge's SMT artifact
shrinks by an order of magnitude (the hub encodes every transition it
is handed; localization is what keeps it from being handed dead
update towers --- preparing this measurement caught exactly that: the
translator originally left the orphaned logic in its emission, and the
artifact \emph{grew}; v0.2 sweeps it, a versioned non-extension bump).
Decide time is reported as measured, and its near-parity is a finding,
not a disappointment: engines already refuse to branch into
satisfiable-independent transitions, so the free set is free for
everyone --- what the abstraction buys is the artifact every
downstream encoder, cache, and replay pays for, and the CEGAR
capability inside the cone. The loop itself is the
\texttt{cegar-chain} row: abstracted too aggressively on purpose, it
yields four spurious counterexamples, each caught by source replay
and each un-havocking one ladder rung, converging to exactly the
advisor's free set with the universal verdict transferring
(\Cref{prop:lax}); the \texttt{true-cex} row is the other half of the
asymmetry (a reachable verdict believed only after the source replay
confirms it), and the \texttt{sharp-boundary} control havocs one state
\emph{inside} the cone and must --- and does --- go spurious. The
HWMCC block then takes the same machinery to six bit-vector instances
from the 2019--2024 competition corpora, streamed-with-pin
(BENCHMARKS.md's ingestion discipline: mirror commit plus per-instance
sha256). All six ingest through the platform's own stack --- parse,
shared-evaluator run, reduction advisor --- and preparing the slice
caught a real defect: the beem instance's negated node references
crashed the evaluator with an \emph{untyped} abort and, worse,
silently narrowed the cone analysis (a dependency hidden behind a
negation left the cone --- an unsound free set); both are fixed and
locked by tests, and the instance now decides identically native and
bridged. The localization column is the honest news: on three of six
instances CEGAR converges with states still havocked ---
\texttt{adding.5} stays unreachable with half its protocol state
havocked, \texttt{trex02-1} with 3 of 7 --- while on the other three
it walks all the way back to the exact system, which is what a
competition-curated cone should often force. Two honesty notes: the
free set is empty on every competition instance (curators do not ship
dead state), so all localization here is CEGAR-earned inside the
cone; and six instances at $k{=}20$ measure the machinery, not the
corpus --- HWMCC at competition scale remains future work
(\texttt{tools/abstraction\_bench.py}; the authored block is a
permanent test).

\subsection{Post-snapshot: the player experiment, v2 --- unscripted, over MCP}
\label{sec:eval-player2}

\begin{table}
\caption{The player experiment, v2 --- unscripted,
over MCP: 8 bounded BTOR2 reachability questions
(4/4 polarity split), ground truth established by bridge+btormc
agreement with witnesses replayed, before any run. Arm A reasons
unaided (confidence h/m/l in parentheses); arm B may touch the
platform only through \texttt{gurdy mcp} --- no scripted entry
points, tool discovery included. One fresh agent per question
per arm, no retries. The evidence column is arm B's decisive
artifact. Recorded runs of 2026-07-16; the section formats,
it does not re-run (\texttt{results/llm\_player\_v2/}).}
\label{tab:player2}
\footnotesize
\setlength{\tabcolsep}{2pt}
\begin{tabular}{@{}lllll@{}}
\toprule
Question & Truth & A & B & B's evidence \\
\midrule
Q1: counter, hit@57, $k{=}60$ & \textsc{r} & \textsc{r} (h) & \textsc{r} & replay@57 \\
Q2: same, hit@36, $k{=}30$ & \textsc{u} & \textsc{u} (h) & \textsc{u} & \textsc{unsat} + $k$-bracket \\
Q3: factor semiprime & \textsc{r} & \textsc{r} (h) & \textsc{r} & replay + pinned factors \\
Q4: factor a prime & \textsc{u} & \textsc{r} (l) $\times$ & \textsc{u} & \textsc{unsat} + planted ctl \\
Q5: \texttt{trex02-1} & \textsc{u} & \textsc{u} (h) & \textsc{u} & \textsc{unsat} + advisor + ctl \\
Q6: \texttt{phases\_2-1} & \textsc{r} & \textsc{r} (h) & \textsc{r} & replay@3 \\
Q7: \texttt{adding.5} & \textsc{u} & \textsc{u} (h) & \textsc{u} & \textsc{unsat} + sem.\ probes \\
Q8: LCG, hit@7 & \textsc{r} & \textsc{r} (h) & \textsc{r} & replay@7 + recompute \\
\midrule
\multicolumn{2}{@{}l}{Correct} & \textbf{7/8} & \textbf{8/8} & \\
\bottomrule
\end{tabular}
\end{table}

The first player experiment (\S\ref{sec:eval-player}) named its own
protocol limitations: the entry points were scripted, and twelve
questions at that difficulty could not separate the arms on
correctness. This rerun removes both (\Cref{tab:player2}). The
platform's player surface is now a machine interface --- the MCP
server (\texttt{gurdy mcp}, stdio JSON-RPC; the same enumerated,
advisory tools as the library, exposure pinned by test) --- and arm B
receives only the question, the repository path, and the rule that
this interface is the whole platform: no head-construction snippets,
no named entry points, no reading source or docs. Tool discovery is
part of the task, and every agent performed it cold --- wrote a
stdlib JSON-RPC driver, listed the ten tools, and picked the one
decision call. The questions are harder in exactly one calibrated
way: each is a bounded BTOR2 reachability question given
\emph{verbatim} (constraint-guarded modular counters, 17-bit bounded
factorization, three ingested HWMCC instances including the
negated-refs beem protocol, a 64-bit LCG orbit), with ground truth
established before any run by bridge--btormc agreement and witness
replay. The result is the separation v1 could not produce: \emph{arm
A 7/8, arm B 8/8}, and the miss is exactly where the platform's
floor exceeds unaided cleverness --- asked whether 1676656651
factors in range, the unaided model exhausted trial division and a
Fermat pass, then \emph{guessed} reachable on the meta-heuristic
that benchmark constants are planted semiprimes (the constant is
prime; the run took 34 minutes and was the experiment's only
low-confidence answer --- this time confidence did flag the error,
one instance, not a calibration claim). Arm A's other seven answers
remain formidable --- it decoded two HWMCC pc-encodings, recognized
and bounded the beem adding protocol, and iterated a 64-bit LCG
seven steps by hand, digit-exact --- which sharpens, rather than
dulls, the point: at any difficulty the unaided answer is opaque
recall or heroic arithmetic, graded only by its own confidence,
while every arm-B verdict carries the artifact the calculus grades.
What no protocol scripted is the finding of the experiment: the
players \emph{reinvented the platform's own discipline} --- one
bracketed $k{=}35/36$ to pin the bound convention before trusting a
negative; one pinned its hand-derived factors with added constraints
to machine-check them when the witness exposed no inputs; three
devised positive controls so an \textsc{unsat} could not be a
vacuous pipeline; one probed negated-reference semantics on
self-authored one-bit models before believing the main verdict; and
every negative was hedged, unprompted, as a bounded claim resting on
the bridge --- the exact evidence asymmetry of
\Cref{thm:existential,thm:universal}, articulated by subjects that
were never told it. Honesty notes: the subject model stays in the
builders' family; eight questions measure the protocol, not a
population; ground truth and arm B share the platform's deciders, so
the discriminating arm is A; and one arm-A prompt carried a
transcription slip with its correction inline (recorded), while four
arm-B first launches died on an infrastructure rate limit before any
tool call and were relaunched fresh --- no graded run was retried
(\texttt{results/llm\_player\_v2/}; questions, per-run records, and
all sixteen verbatim final messages).

\subsection{What remains narrated}
\label{sec:eval-exhibits}

Every number above \S\ref{sec:eval-constraint} is the dated snapshot,
regenerable from the artifact by one script; nothing built since is
mixed into those tables. The post-snapshot sections
(\S\ref{sec:eval-constraint}--\S\ref{sec:eval-player2}) are the
former exhibit list, graduated: every family this subsection once
narrated --- constraint enforcement, the books in CI, the demand
board, the refinement demonstration, HWMCC ingestion --- is now a
measured table with its own commit stamp (the trust advisor's
independence judgments are exercised inside the campaign's trust rows
and floor-met-by-corroboration controls,
\S\ref{sec:eval-campaign}), and the player experiment's named
protocol fixes are run (\S\ref{sec:eval-player2}). What remains
future measurement is
\emph{scale}: HWMCC beyond a six-instance slice, an SV-COMP-scale
compliance campaign, organic (rather than authored) question
corpora for the books, and a cross-family subject model for the
player arms.
\fi

\section{Related work}
\label{sec:related}

\paragraph{Certified compilation and translation validation}
CompCert \citep{leroy2009compcert} and CakeML \citep{kumar2014cakeml}
prove one translator correct once; translation validation
\citep{pnueli1998tv,necula2000tv} and its modern descendants
\citep{lopes2021alive2,sewell2013tv} validate each run of one
translator, and validators can themselves be verified
\citep{tristan2008tv}. Credible compilation \citep{rinard1999credible}
--- per-run validation of a compiler's own result --- is a direct
ancestor of re-establishment (\Cref{thm:reestablish}). In our
vocabulary these are grades of a \emph{single edge}; our contribution
is the graph of edges of heterogeneous grade --- how faithfulness,
loss, and assurance compose along routes, how per-run validation
re-establishes hops that carry squares, and how independent routes
corroborate.

\paragraph{Compositional compiler correctness}
Composing simulations across passes and languages is a rich line:
Compositional CompCert \citep{stewart2015compcomp}, parametric
inter-language simulations \citep{neis2015pilsner}, CompCertO's
simulation conventions \citep{koenig2021compcerto}, certified
abstraction layers \citep{gu2015cal}, multi-language semantics
\citep{matthews2007multi,perconti2014open}, and DimSum's decentralized
multi-language graph with wrappers \citep{sammler2023dimsum}. These
works confront vertical composition of simulations \emph{up to
differing observations} --- the habitat of our support condition, which
in their terms is the congruence/well-definedness requirement on the
mediating map; CompCert sidesteps it by fixing one global event type,
CompCertO negotiates it through simulation conventions, DimSum through
wrappers. We claim no mathematical novelty for the condition
(\Cref{def:support} is a one-line congruence); what this paper adds
relative to that line is a different regime and different bookkeeping:
edges that are \emph{not} uniformly proved, trace-equality squares that
are \emph{decidable per run} rather than proved once, an explicit
syntactic keep/loss computation for routes, and the grade/class algebra
for composing heterogeneous evidence --- with the trade-off stated
plainly, since equality-up-to-projection is the right notion only for
near-structure-preserving translations, not optimizing compilation
(\S\ref{sec:languages}).

\paragraph{Semantics-first frameworks}
The K framework derives interpreters, symbolic executors, and provers
from one semantics \citep{rosu2010k}, with KEVM \citep{kevm2018} and
KWasm as flagship instances; Sail does the same for ISAs
\citep{armstrong2019sail}, and ISA-conformance frameworks such as
riscv-formal \citep{wolf2019riscvformal} prefigure the spec-enumerated
per-construct yardsticks of \Cref{def:coverage}. These are our closest
philosophical
relatives, and our platform \emph{consumes} their artifacts (the Sail
RISC-V and Arm models anchor one of each ISA's two routes). The
difference is what carries trust: a semantics-first framework is a
monoculture --- every derived tool inherits the one semantics and the one
derivation pipeline --- whereas our unit of trust is \emph{agreement
between independently derived translations} (\Cref{asm:diversity}),
with the framework structured so routes diverge over independent
semantic artifacts. N-version programming
\citep{avizienis1985nversion} and differential testing
\citep{mckeeman1998difftest,yang2011csmith} exploit the same
independence resource --- and Knight and Leveson's experiment
\citep{knight1986independence} is the standing caution that
independence must be argued, not assumed, which is why
\Cref{asm:diversity} is explicit, scoped to the diverse prefix, and
confronted with our own common-mode incidents (\S\ref{sec:bugs}). We
give the possibilistic account of what agreement does and does not
establish (\Cref{lem:agree}) and wire it into a compositional calculus
rather than a test harness.

\paragraph{Certificates, witnesses, and certifying algorithms}
Proof-carrying code \citep{necula1997pcc}, certifying algorithms
\citep{mcconnell2011certifying}, certified model-checking
\citep{namjoshi2001certifying,yu2021certifaiger,niemetz2018btor2}, and
checked solver proofs (DRAT \citep{wetzler2014drat}, verified checkers
\citep{tan2021cakelpr}, Alethe/Carcara \citep{andreotti2023carcara})
all instantiate the producer/checker split our $\tprov$ grade requires;
SV-COMP's violation/correctness-witness ecosystem
\citep{beyer2015witnesses} is the deployed, cross-tool form of the
existential/universal asymmetry our end-to-end theorems state;
Sledgehammer's checked reconstruction
\citep{blanchette2013sledgehammer} is the same
untrusted-discovery/trusted-replay split inside a proof assistant; and
SMT-COMP's model-validation tracks \citep{weber2019smtcomp}
institutionalize witness re-checking at the solver level. What we
add is the composition through a translation graph: an existential
answer's certificate is its \emph{source-level replay} after
carry-back (\Cref{thm:existential}), while universal answers compose
certificate checking with route fidelity (\Cref{thm:universal}). The
carry-back itself has verified precedent --- decompilation into logic
\citep{myreen2008decompilation} makes the lifter a proved component ---
and our contribution is making it a mandatory, per-pair component of
every edge rather than a per-tool artifact.

\paragraph{Moving programs to answerable representations}
Verified lifting \citep{kamil2016lifting} and the broad practice of
encoding programs into solver logics (BMC, symbolic execution) motivate
the platform's existence; Why3 \citep{filliatre2013why3} is the closest
deployed relative of the hub pattern --- one verification-condition
language translated to many provers of heterogeneous trust --- though
its edges carry no per-run squares and no declared loss; refinement stacks such as seL4's
\citep{klein2009sel4,sewell2013tv} show multi-layer semantic
preservation with uniform, proof-backed layers. Our setting differs in
admitting layers that are \emph{not} uniformly proved --- pinned
compilers, agent-written translators, third-party models --- and making
their heterogeneous trust first-class rather than a threat to the
statement. Combining heterogeneous verification evidence with recorded
provenance is itself prefigured by the Evidential Tool Bus
\citep{rushby2005etb}; the grade/class algebra and the ledger are a
formalized cousin.

\ifarxiv
\paragraph{Abstraction and refinement}
The over-approximating half of the directional square is the
simulation half of abstract interpretation's Galois reading
\citep{cousot1977abstract}; the loop it closes at the platform level
--- ask on the abstraction, transfer the universal verdict, replay the
existential one, refine on a spurious counterexample --- is
counterexample-guided abstraction refinement \citep{clarke2000cegar};
and the first registered instance, havocking caller-named state, is
localization reduction \citep{kurshan1994computer}. What the calculus
adds is not a new abstraction but a different life cycle for one: the
abstraction is a \emph{registered pair} with a declared direction and
witness embedding, its soundness on the corpus is a per-program
decidable check along that embedding rather than a fixed-point
argument, its refinements are durable, graded artifacts rather than
solver-internal state, and its parameters remain the player's ---
advised (cone of influence, observed interval seeds) but never chosen
for them (\S\ref{sec:lax}, \S\ref{sec:books}).
\fi

\paragraph{Fault tolerance, end to end}
The paper's vocabulary is deliberately that of fault-tolerant systems,
because the concepts coincide: the square oracle is an acceptance test
in the recovery-block sense \citep{randell1975recovery}, making a
silently wrong --- in effect Byzantine --- translator fail-stop
\citep{schlichting1983failstop}; a branch is dual modular redundancy
whose comparator localizes rather than votes; and the
existential/universal asymmetry is the end-to-end argument
\citep{saltzer1984endtoend} for translation graphs. What the
transplant adds is the compositional bookkeeping --- graded evidence
with weakest-link meets, per-run re-establishment, and the trusted-base
ledger.

\paragraph{LLMs and formal tools}
Recent systems couple LLM generation with solver-checked validation
\citep{sun2024clover}. Our platform is built for that player --- an LLM
chooses questions and routes but every step it takes is deterministic,
graded, and checked --- and \S\ref{sec:agents} reports the build half
of the two-directional experiment of \Cref{sec:intro}: the translators
themselves were LLM-written, and correctness was carried by the
architecture, not by author trust. \S\ref{sec:eval-player} reports a
first controlled experiment on the player half.

\section{Limitations and conclusion}
\label{sec:conclusion}

\paragraph{Limitations}
This is a dated snapshot of a system designed to ratchet, and its
numbers say so: coverage is deepest on the spine (all 96 RV64IMC
constructs conjoined and composing losslessly to SMT-LIB on both
routes; the in-scope eBPF set complete) and
deliberately partial elsewhere (EVM at 86 of 144 opcodes; Wasm at 54 of
its 75-item declared inventory, without \texttt{loop}/\texttt{br};
Python an integer subset with bounded
unrolling). Coverage is now measured as \Cref{def:coverage}'s actual
conjunction over language-owned inventories wherever a pair has a
decidable square (\S\ref{sec:eval-capability}); the
\texttt{predicted}-grade hops still discharge faithfulness per run
rather than per construct, and the A64 inventory, while language-owned,
is a declared slice of a much larger ISA. Universal answers are bounded claims
wherever BMC-style unrolling caps apply, and rest on
$\mathsf{universal}$-tier hypotheses that per-run evidence and
\S\ref{sec:mech}'s tested surrogates corroborate
but do not entail (\Cref{thm:universal}). The verified portion of the
C spine begins at the compiled binary: the compiler hop has no square
and contributes reproducibility plus a differential, not faithfulness
(\S\ref{sec:composition}). The \tprov\ tier's certified row is inhabited
with a \emph{formally verified} checker at the anchor
(\texttt{cake\_lpr}, \S\ref{sec:eval-proved}), on one trivial and one
certificate-scale instance (an 11.8\,MB LRAT); the residual \tcb\ is
the bit-blaster's BV$\to$CNF step, and a
certified claim is per-question and bounded, never a proof of a pair.
Interpreter adequacy remains an assumption discharged empirically at
modest volume, and the diversity assumption behind branch corroboration
is structural, not probabilistic --- scoped to the diverse prefix, a
boundary the common-mode round (\S\ref{sec:eval-escape}) now measures
directly: shared misreadings blind, by construction, every gate built
from the shared reading; only an independently derived external anchor
catches them.
The compositional core of the calculus --- including the $n$-ary route
telescope and the specialization obligation of \Cref{thm:universal} ---
is mechanized in Lean~4 with a clean axiom audit
(\S\ref{sec:mech})\ifarxiv{} --- the directional square calculus
included, lax telescope and all
(\Cref{prop:exactid,prop:lax})\fi{}; the fieldwise loss computation and the retiming
case remain paper-stated.

\paragraph{Future work}
The gaps above are the roadmap: scale the compliance-derived benchmark
of \S\ref{sec:eval-bench} to an upstream suite taken binary-for-binary
(which needs the machine-mode slice) and to SV-COMP-style C tasks;
extend the escape-rate experiments of \S\ref{sec:eval-escape} beyond
one translator's emission space; widen the A64 slice toward the
base ISA; and scale the player experiment beyond the eight-question
rerun of \S\ref{sec:eval-player2}, with subject models outside the
builders' family. Past these measured gaps
lies the intended trajectory: \hg\ is meant to evolve into an open
platform that scales in language support, where anyone --- working with
LLMs, with agents, or by hand --- develops a new pair against the
calculus's written contract and lands it in the repository through an
ordinary pull request. The admission bar is the architecture, not the
author: a pair arrives with its declared projection\ifarxiv{},
direction,\fi{} and grade, its typed partiality, and a square the
harness runs on merge, and the ratchet (\Cref{prop:ratchet}) keeps
every prior verdict standing as the graph grows.\ifarxiv{} The
direction axis
(\S\ref{sec:lax}) points the growth inward as well as outward: not only
more front-ends, but abstraction and property-transformation pairs ---
endo-edges on the hubs --- whose refinement loops make the graph denser
in \emph{reductions}, with the lax telescope mechanized alongside
the exact one (\S\ref{sec:mech}). Each such refinement is one turn of
the loop of \S\ref{sec:intro} and \S\ref{sec:overview}: diagnose the
missing edge, generate it, gate it, ratchet, re-ask --- growth as a
loop, with the books of \S\ref{sec:books} keeping the demand that
names each next edge, and a human hand on the registration
valve.\fi

\paragraph{Conclusion}
We proposed treating translations the way distributed systems treat
processes: as components that fail, whose failures are contained by
architecture --- declared projections, decidable squares, graded
evidence, independent routes, carried-back witnesses --- rather than
prevented by universal proof. The calculus makes the composition of
partial trust exact; the platform shows the discipline is buildable, by
agents, at measured (if modest) coverage; and the asymmetry theorems say
where proof effort actually buys assurance. The snapshot is the
baseline; the ratchet is the roadmap.

\begin{acks}
This work was co-funded by the Czech Science Foundation under Grant
No.~23-07580X and the European Union under the project Robotics and
Advanced Industrial Production
(reg.~no.~CZ.02.01.01/00/22\_008/0004590).
\end{acks}

\bibliographystyle{ACM-Reference-Format}
\bibliography{references}

\appendix

\section{Proofs}
\label{app:proofs}

\subsection{Pasting (Theorem~3.7)}

We spell out the chain, including the domain side conditions elided in
the paper's proof sketch. Let $p \in \dom(P_2 \circ P_1)$, i.e.\
$p \in \dom(I_A) \cap \dom(T_1)$, $q := T_1(p) \in \dom(T_2)$,
$r := T_2(q)$, $I_C(r)$ defined,
$\carry_2(I_C(r)) \in \dom(\carry_1)$, and (from $p \in \dom(P_1)$,
$q \in \dom(P_2)$) $I_B(q)$ defined with
$I_B(q) \in \dom(\carry_1)$ and $I_C(r) \in \dom(\carry_2)$.

\begin{enumerate}
\item By faithfulness of $P_1$ at $p$ and $\proj \subseteq \proj_1$,
  $\proj(I_A(p)) = \proj(\carry_1(I_B(q)))$.
\item By faithfulness of $P_2$ at $q$:
  $\proj_2(I_B(q)) = \proj_2(\carry_2(I_C(r)))$.
\item Both $I_B(q)$ and $\carry_2(I_C(r))$ lie in $\dom(\carry_1)$ (the
  first by $p \in \dom(P_1)$, the second by
  $p \in \dom(P_2 \circ P_1)$). By
  $(\proj_2 \Rightarrow \proj)$-support applied to step (2):
  $\proj(\carry_1(I_B(q))) = \proj(\carry_1(\carry_2(I_C(r))))$.
\item Chaining (1) and (3):
  $\proj(I_A(p)) = \proj(\carry_1(\carry_2(I_C(r))))$, which is
  faithfulness of $P_2 \circ P_1$ at $p$ w.r.t.\ $\proj$. \qed
\end{enumerate}

\emph{Directional pairs.} For pairs with a non-identity witness
embedding (\Cref{def:pair}), the same four steps run per closing
valuation, the three legs closed at $x$, $W_1(p,x)$, and
$W_2(T_1(p), W_1(p,x))$; the composite's square is then faithfulness
along $W_2 \circ W_1$, with direction the meet $d_1 \wedge d_2$. This
form is machine-checked (\texttt{lax\_pasting}; telescoped,
\texttt{DRoute.\allowbreak lax\_route\_pasting} with
\texttt{DRoute.\allowbreak direction\_exact\_iff}) --- the chain above
is its identity-embedding instance.

\emph{Necessity of support.} Without the support condition
(\Cref{def:support}) the theorem is
false: let $F_B = \{g, h\}$, $\proj_2 = \{g\}$, and let $\carry_1$ copy
field $h$ into a $\proj_1$-kept source field. Take $P_2$ faithful (it
preserves $g$) but with $\carry_2(I_C(r))$ differing from $I_B(q)$ on
$h$ --- permitted, since $h \notin \proj_2$. The outer rectangle then
disagrees on the copied field although both inner squares commute.

\subsection{Composed keep-set, and step granularity}

For fieldwise, step-aligned $\carry_1$ (each kept source field $f$
computed per step as $f = \phi_f(\{\,b.g \mid g \in \mathit{deps}(f)\,\})$),
$(\proj_2 \Rightarrow \proj)$-support holds iff
$\mathit{deps}(f) \subseteq \proj_2$ for every $f \in \proj$; hence the
largest admissible $\proj$ is
$\keep = \{ f \in \proj_1 \mid \mathit{deps}(f) \subseteq \proj_2 \}$ and
support for it is checkable syntactically from the dependency sets.

When $\carry_1$ retimes --- one source step corresponds to a window of
target steps, as in C statements over ISA instructions --- model
$\carry_1$ as a monotone window map $w$ from source step indices to
intervals of target step indices plus a fieldwise read within each
window. Support then requires $\mathit{deps}(f) \subseteq \proj_2$ for
reads \emph{anywhere in the window}, and additionally that the window map
itself be determined by $\proj_2$-observables (e.g.\ by a kept program
counter), since window boundaries influence the projected output. Both
conditions are again syntactic given $w$ and the dependency sets.

\subsection{Localization (Corollary~3.8)}

Contrapositive of the pasting theorem (\Cref{thm:pasting}) for two
hops. For $n$ hops, induct
on the route: if the outer rectangle over hops $1..n$ fails at $p$ but
hop $1$'s square passes at $p$, then by the two-hop case applied to
(hop 1, rectangle over hops $2..n$) the rectangle over $2..n$ fails at
$T_1(p)$; recurse. Termination yields a least failing hop $i$; within
it, the square oracle's comparison yields the least failing step and a
witnessing field. \qed

\subsection{Weakest link (Proposition~3.13)}

Soundness: if every hop is $\mathsf{universal}$ on its fragment, iterate
the pasting theorem (\Cref{thm:pasting}) over all $p$ in the composite
fragment. If some hop is
$\mathsf{perrun}$, the same iteration is available exactly at programs
whose per-hop images passed the oracle --- the per-run set of the route.
If some hop is only $\mathsf{replay}$, purity composes
(\Cref{prop:cache}) but no faithfulness statement is available for the
route beyond it. Optimality: a route with a $\mathsf{perrun}$ hop entails
no universal statement (take a defect outside the checked set); a route
with a $\mathsf{replay}$ hop entails no faithfulness at all (take a pure
but wrong translator). \qed

\subsection{Ratchet (Proposition~5.2)}

Let $I' \sqsupseteq I$ and let $p \in \dom(I)$. Every quantity in
the faithfulness square (\Cref{def:faithful}) evaluated at $p$ reads only
$I$-values on
$\dom(I)$-points, which $I'$ preserves; hence the verdict at $p$ is
unchanged. Coverage monotonicity: $\cov$'s per-construct conjunction can
only gain (new constructs enter $\dom$) and never lose (old verdicts
preserved). Composition of extensions is an extension. For the failure
half: a non-extension update changes some $I(p)$, which can flip the
verdict at $p$ for any pair whose square consumed it --- justifying
the versioned-event rule. \qed

\section{Construct inventories and probe corpora}
\label{app:inventories}

Per-language construct inventories (the yardsticks of the paper's
\Cref{def:coverage}) and the probe corpora behind every coverage figure
in the paper's \Cref{sec:evaluation} are enumerated in the artifact ---
the per-language
inventory modules (\texttt{gurdy/languages/*/inventory.py}), the
per-pair inventories (\texttt{gurdy/pairs/*/inventory.py}), and the
measured coverage data (\texttt{paper/results/data/}); they are
spec-derived: the RISC-V RV64IMC opcode
inventory, the full 144-opcode EVM inventory, the 126-construct eBPF
inventory slice (drawn from the 256-opcode space), the Wasm
numeric/parametric/control inventory, the
OpenSMILES construct list, the Python-subset AST allow-list, and the
two hubs' operator inventories.

\end{document}